\documentclass[11pt]{article}
\usepackage[square]{natbib}
\AtBeginDocument{}
\usepackage{tikz}

\usetikzlibrary{matrix,decorations.pathreplacing}

\usepackage{xcolor}

\usepackage{easybmat}
\usepackage{multirow,bigdelim}
\usepackage{amsmath}
\usepackage{amsthm}
\usepackage{amssymb}
\usepackage{algorithm}
\usepackage{subcaption}
\usepackage{color}
\usepackage[english]{babel}
\usepackage{graphicx}
\usepackage{wrapfig,epsfig}
\usepackage{epstopdf}
\usepackage{url}
\usepackage{graphicx}
\usepackage{color}
\usepackage{epstopdf}
\usepackage{algpseudocode}
\usepackage{scrextend}
\usepackage[T1]{fontenc}
\usepackage{bbm}
\usepackage{comment}
\usepackage{xspace}

\usepackage{tikz}
\usepackage{hyperref}  
\hypersetup{colorlinks=true,citecolor=blue,linkcolor=blue} 
\usetikzlibrary{arrows}
\usepackage[margin=1in]{geometry}
\linespread{1}





\author{
    Raghavendra Addanki\thanks{UMass Amherst. \texttt{raddanki@cs.umass.edu}.}
  \and
  Andrew McGregor\thanks{UMass Amherst. \texttt{mcgregor@cs.umass.edu} }
  \and
  Cameron Musco\thanks{UMass Amherst. \texttt{cmusco@cs.umass.edu} }
}

\date{}
\title{Intervention Efficient Algorithms for \\ ~Approximate Learning of Causal Graphs}
\usepackage{times}

\definecolor{mygreen}{RGB}{80,180,0}
\definecolor{b2}{RGB}{51,153,255}
\definecolor{mycy2}{RGB}{255,51,255}

\makeatletter
\makeatother
\usepackage{lineno}


%
%
%
%
%
%
%
%
\usepackage{etoolbox}

\makeatletter

\newcommand{\define}[4][ignore]{%
  \ifstrequal{#1}{ignore}{}{
  \@namedef{thmtitle@#2}{#1}}%
  \@namedef{thm@#2}{#4}%
  \@namedef{thmtypen@#2}{lemma}%
  \newtheorem{thmtype@#2}[theorem]{#3}%
  \newtheorem*{thmtypealt@#2}{#3~\ref{#2}}%
}

\newcommand{\state}[1]{%
  \@namedef{curthm}{#1}
  \@ifundefined{thmtitle@#1}{
  \begin{thmtype@#1}
    }{
  \begin{thmtype@#1}[\@nameuse{thmtitle@#1}]
  }
    \label{#1}
    \@nameuse{thm@#1}
  \end{thmtype@#1}
  \@ifundefined{thmdone@#1}{
  \@namedef{thmdone@#1}{stated}%
  }{}
}

\newcommand{\restate}[1]{%
  \@namedef{curthm}{#1}
  \@ifundefined{thmtitle@#1}{
    \begin{thmtypealt@#1}
    }{
  \begin{thmtypealt@#1}[\@nameuse{thmtitle@#1}]
  }
    \@nameuse{thm@#1}
  \end{thmtypealt@#1}
  \@ifundefined{thmdone@#1}{
  \@namedef{thmdone@#1}{stated}%
  }{}
}

\newcommand{\thmlabel}[1]{
  \@ifundefined{thmdone@\@nameuse{curthm}}{\label{#1}
    }{\tag*{\eqref{#1}}}
}
\makeatother

\usepackage{tikz}
\usetikzlibrary{matrix,decorations.pathreplacing}
\usepackage{xcolor}

\usepackage{easybmat}
\usepackage{multirow,bigdelim}
\usepackage{amsmath}
\usepackage{amssymb}
\usepackage{algorithm}
\usepackage{algpseudocode}
\usepackage{color}
\usepackage[english]{babel}
\usepackage{graphicx}
\usepackage{epstopdf}
\usepackage{url}
\usepackage{graphicx}
\usepackage{color}
\usepackage{epstopdf}
\usepackage{scrextend}
\usepackage[T1]{fontenc}
\usepackage{bbm}
\usepackage{comment}


\usepackage{tikz}
\usepackage{hyperref}  
\hypersetup{colorlinks=true,citecolor=blue,linkcolor=blue} 
\usetikzlibrary{arrows}
\linespread{1}
\graphicspath{{./figs/}}
\newcommand{\indep}{\rotatebox[origin=c]{90}{$\models$}}
\newcommand*{\centernot}{%
  \mathpalette\@centernot
}

\def\@centernot#1#2{%
  \mathrel{%
    \rlap{%
      \settowidth\dimen@{$\m@th#1{#2}$}%
      \kern.5\dimen@
      \settowidth\dimen@{$\m@th#1=$}%
      \kern-.5\dimen@
      $\m@th#1\not$%
    }%
    {#2}%
  }%
}

\newcommand{\notindep}{\not\!\perp\!\!\!\perp}
\newcommand{\return}{\textbf{return}}
\date{}

\newcommand\widebar[1]{\mathop{\overline{#1}}}
\newtheorem{theorem}{Theorem}[section]
\newtheorem{lemma}[theorem]{Lemma}
\newtheorem{definition}[theorem]{Definition}

\newtheorem{proposition}[theorem]{Proposition}
\newtheorem{corollary}[theorem]{Corollary}

\newtheorem{claim}[theorem]{Claim}

\newcommand{\lat}{L}
\newcommand{\Edg}{\mathcal{E}}
\newcommand{\NearMIS}{\textsc{Near-MIS}}
\newcommand{\IndepSet}{\textsc{Independent-Set}}

\newcommand{\wh}{\widehat}
\newcommand{\wt}{\widetilde}

\newcommand{\eps}{\epsilon}

\newcommand{\R}{\mathbb{R}}

\newcommand{\norm}[1]{\left\lVert#1\right\rVert}
\renewcommand{\varepsilon}{\epsilon}
\renewcommand{\tilde}{\wt}
\renewcommand{\hat}{\wh}

\DeclareMathOperator{\OPT}{OPT}
\DeclareMathOperator{\ALG}{ALG}

\definecolor{mygreen}{RGB}{80,180,0}
\definecolor{b2}{RGB}{51,153,255}
\definecolor{mycy2}{RGB}{255,51,255}

\def \doo {\mathrm{do}}

\def \e {\mathrm{e}}
\def \Anc {\mathrm{Anc}}
\def \GGG {\mathcal{G}}

\makeatletter
\newcommand*{\RN}[1]{\expandafter\@slowromancap\romannumeral #1@}
\makeatother
\usepackage{lineno}

\begin{document}

\maketitle

\begin{abstract}
We study the problem of learning the causal relationships between a set of observed variables in the presence of latents, while minimizing the cost of interventions on the observed variables. We assume access to an undirected graph $G$ on the observed variables whose edges represent either all direct causal relationships or, less restrictively, a superset of causal relationships (identified, e.g., via conditional independence tests or a domain expert). 
Our goal is to recover the directions of all causal or ancestral relations in $G$, via a minimum cost set of interventions.

It is known that constructing an exact minimum cost intervention set  for an arbitrary graph $G$ is NP-hard. We further argue that, conditioned on the hardness of approximate graph coloring, no polynomial time algorithm can achieve an approximation factor better than $\Theta(\log n)$, where $n$ is the number of observed variables in $G$. To overcome this limitation, we introduce a bi-criteria approximation goal that lets us recover the directions of all but $\epsilon n^2$ edges in $G$, for some specified error parameter $\epsilon > 0$. Under this relaxed goal, we give polynomial time algorithms that achieve intervention cost within a small constant factor of the optimal. Our algorithms combine work on efficient intervention design and the design of low-cost \emph{separating set systems}, with ideas from the literature on graph property testing.

\end{abstract}

\section{Introduction}\label{sec:intro}
Discovering causal relationships is one of the fundamental problems of causality~\citep{pearl}. In this paper, we study the problem of {\em learning a causal graph} where we seek to identify all the causal relations between variables in our system (nodes of the graph).
It has been shown that, under certain assumptions, observational data alone lets us recover {the existence of a causal relationship} between, but not the {direction} of all relationships. To recover the direction, we use the notion of an intervention (or an experiment) described in Pearl's Structural Causal Models (SCM) framework \citep{pearl}.

An intervention requires us to fix a subset of variables to each value in their domain, inducing a new distribution on the free variables. For example, we may intervene to require that some patients in a study follow a certain diet and others do not. As performing interventions is costly, a widely studied goal is to find a minimum set of interventions for learning the causal graph \citep{shanmugam2015learning}. This goal however does not address the fact that interventions may have different costs. For example, interventions that fix a higher number of variables will be more costly. Additionally, there may be different intervention costs associated with  different variables. For example, in a medical study, intervening on certain variables might be impractical or unethical.~\cite{hyttinen2013experiment} address the need for such cost models and give results for the special case of learning the directions of \emph{complete graphs} when the cost of an intervention is equal to the number of variables contained in the intervention. Generalizing this notion, we study a \emph{linear cost model} where the cost of an intervention on a set of variables is the sum of (possibly non-uniform) costs for each variable in the set. This model was first introduced in~\cite{icml17} and has received recent attention~\citep{neurips18, AKMM2020}.

Significant prior work on efficient intervention design assumes \emph{causal sufficiency}, i.e., there are no unobserved (latent) variables in the system. In this setting,  there is an exact characterization of the interventions required to learn the causal graph, using the notion of \emph{separating set systems}~\citep{shanmugam2015learning, eberhardt2007causation}.

Recently, the problem of learning the causal graph with latents using a minimum number of interventions has received considerable attention with many known algorithms that depend on various properties of the underlying causal graph~\citep{neurips17, AKMM2020, kocaoglu2019characterization}. However, the intervention sets used by these algorithms contain a large number of variables, often as large as $\Omega(n)$, where $n$ is the number of observable variables. Thus, they are generally not efficient in the linear cost model. 
Some work has considered efficient intervention design in the linear cost model for recovering the ancestral graph containing all indirect causal relations \citep{AKMM2020}.  Other algorithms such as IC$^*$ and FCI with running times exponential in the size of the graph, aim to learn the causal graph in the presence of latents using only observational data; however, they can only learn a part of the entire causal graph~\citep{ic,spirtes2000causation}.

\subsection{Our Results}
In order to address the shortcomings when there are latents, we consider two settings. In the first setting, we assume that we are given an undirected graph  that contains all causal relations between observable variables, but must identify their directions. This undirected graph may be obtained, e.g., by running algorithms that identify conditional dependencies and consulting a domain expert to identify causal links. In the second setting, we study a relaxation where we are given a supergraph $H$ of $G$ containing all causal edges and other additional edges which need not be causal. The second setting is less restrictive, modeling the case where we can ask a domain expert or use observational data to identify a \textit{superset} of possible causal relations. 

From $H$ we seek to recover edges of the ancestral graph\footnote{We note that \emph{ancestral graph} defined here and in~\cite{neurips17, AKMM2020} is slightly different from the widely used notion from~\cite{richardson2002}.} of $G$, a directed graph containing all causal \emph{path} relations between the observable variables. Depending on the method by which $H$ is obtained, it may have special properties that can be leveraged for efficient intervention design. For example, if we use FCI/IC$^*$~\citep{spirtes2000causation} to recover a partial ancestral graph from observational data, the remaining undirected edges form a chordal graph~\citep{zhang2008causal}. Past work has also considered the worst case when $H$ is the complete graph \citep{AKMM2020}. In this work, we do not assume anything about how $H$ is obtained and thus give results holding for general graphs.

In both settings, we show a connection to separating set systems. Specifically, to solve the recovery problems it is necessary and sufficient to use a set of interventions corresponding to a separating set system when we are given the undirected causal graph $G$ and a \textit{strongly} separating set system when we are given the supergraph $H$. A separating set system is one in which each pair of nodes connected by an edge is separated by at least one intervention -- one variable is intervened on and the other is free. A strongly separating set system requires that every connected edge $(u, v)$ is separated by two interventions -- there exists a intervention including $u$ but not $v$ and an intervention that includes $v$ but not $u$.

Unfortunately, finding a minimum cost (strongly) separating set system for an arbitrary graph $G$ is NP-hard \citep{neurips18,hyttinen2013experiment}.  We give simple algorithms that achieve $O(\log n)$ approximation and further argue that, 
conditioned on the hardness of approximate graph coloring, no polynomial time algorithm can achieve $o(\log n)$ approximation, where $n$ is the number of observed variables.  

To overcome this limitation, we introduce a \emph{bi-criteria approximation} goal that lets us recover all but $\epsilon n^2$ edges in the causal or ancestral graph, where $\epsilon > 0$ is a specified error parameter. For this goal, it suffices to use a relaxed notion of a set  system, which we show can be found efficiently using ideas from the graph property testing literature \citep{GGR98}. 

In the setting where we are given the causal edges in $G$ and must recover their directions, we give a polynomial time algorithm that finds a set of interventions from which we can recover all but $\epsilon n^2$ edges  with cost at most $\sim 2$ times the optimal cost for learning the full graph. 
Similarly, in the setting of ancestral graph recovery, we show how to recover all but $\epsilon n^2$ edges with intervention cost at most $\sim 4$ times the optimal cost for recovering all edges.

Our result significantly extends the applicability of a previous result~\citep{AKMM2020} that gave a $2$-approximation to the minimum cost strongly separating set system assuming the worst case when the supergraph $H$ is a complete graph. That algorithm does not translate to an approximation guarantee better than $\Omega(\log n)$ for general graphs.

Finally, for the special case when $G$ is a hyperfinite graph~\citep{hassidim2009local} with maximum degree $\Delta$, we give algorithms that obtain approximation guarantees as above, and recover all but $\eps n \Delta$ edges of $G$.

\subsection{Other Related Work}

 There is significant precedent for assumptions on background knowledge in the literature. For example,~\citep{hyttinen2013experiment} and references therein, study intervention design in the same model: a skeleton of possible edges in the causal graph is given via background knowledge, which may come e.g., from domain experts or previous experimental results. Assuming causal sufficiency (no latents), most work focuses on recovering causal relationships based on just observational data. Examples include algorithms like {\em IC}~\citep{pearl} and {\em PC}~\citep{spirtes2000causation}, which have been widely studied~\citep{hauser2014two, hoyer2009nonlinear, heinze2018causal, loh2014high, shimizu2006linear}. It is well-known that to disambiguate a causal graph from an equivalence class of possible causal structures, interventional, rather than just observational data is required~\citep{hauser2012characterization, eberhardt2007interventions, eberhardt2007causation}. 
There is a growing body of recent work devoted to minimizing the number of interventions ~\citep{shanmugam2015learning, neurips17, kocaoglu2019characterization} and costs of intervention~\citep{neurips18, icml17}. 

Since causal sufficiency is often too strong an assumption~\citep{bareinboim2016causal}, many algorithms avoiding the causal sufficiency assumption, such as IC$^*$~\citep{ic} and FCI~\citep{spirtes2000causation}, and using just observational data have been developed. 
 There is a growing interest in optimal intervention design in this setting~\citep{silva2006learning, hyttinen2013experiment, parviainen2011ancestor, neurips17, kocaoglu2019characterization}. 

\section{Preliminaries}\label{sec:prelim}

\smallskip
\noindent \textbf{Causal Graph Model}.
Following the SCM framework \citep{pearl}, we represent a set of random variables by $V \cup \lat$ where $V$ contains the endogenous (observed) variables that can be measured and $\lat$ contains the exogenous (latent) variables that cannot be measured. 
We define a directed causal graph $\GGG = \GGG(V \cup \lat, \Edg)$ on these variables where an edge corresponds to a causal relation between the corresponding variables: a directed edge  $(v_i, v_j)$ indicates that  $v_i$ causes $v_j$.

We assume that all causal relations belong to one of two categories : (i) $E \subseteq V \times V$ containing direct causal relations between the observed variables and (ii) $E_L \subseteq L \times V$ containing relations from latents to observable variables. Thus, the full edge set of our causal graph is $\Edg = E \cup E_L$. We also assume that every latent $l \in \lat$ influences exactly two observed variables, i.e., $(l,u), (l,v) \in E_L$ and no other edges are incident on $l$. This \emph{semi-Markovian } assumption is widely used in prior work~\citep{neurips17,shpitser2006identification} (see Appendix~\ref{app:disc} for a more detailed discussion).  Let $G(V,E)$ denote the subgraph of $\GGG$ restricted to observable variables, referred to as the observable graph. 

Similar to~\citet{neurips17, AKMM2020}, we define \emph{ancestral graph} of $G$ over observable variables $V$, denoted by $\Anc(G)$ as follows : $(v_i, v_j) \in \Anc(G)$ iff there is a directed path from $v_i$ to $v_j$ in $G$ (equivalently in $\mathcal{G}$ due to the semi-Markovian assumption). Throughout we denote $n = |V|$.

\smallskip
\noindent \textbf{Intervention Sets}. Our primary goal is to recover either $G$ or $\Anc(G)$ via interventions on the observable variables. 
We assume the ability to perform an \emph{intervention} on a set of variables $S \subseteq V$ which fixes $S = s$ for each $s$ in the domain of $S$. We then perform a conditional independence test answering  for all $v_i,v_j$ 
\emph{``Is $v_i$ independent of $v_j$ in the interventional distribution $\doo(S = s)$?''} and denote it using $v_i \indep v_j \mid \doo(S)$. Here $\doo(S = s)$ uses Pearl's do-notation to denote the interventional distribution when the variables in $S$ are fixed to $s$.

An \textit{intervention set} is a collection of subsets $\mathcal{S} = \{S_1,\ldots, S_m\}$ that we intervene on in order to recover edges of the observable or ancestral graph. It will also be useful to associate a matrix $L \in \{0,1\}^{n \times m}$ with the collection where the $i$th column is the characteristic vector of set $S_i$, i.e., row entry corresponding to node in $S_i$ is $1$ iff it is present in $S_i$. We can also think of $L$ as a collection of $n = |V|$ length-$m$ binary vectors that indicate which of the $m$ intervention sets $S_1,\ldots,S_m$ each variable $v_i$ belongs to.

As is standard, we assume that $\GGG$ satisfies the {\em causal Markov condition} and assume {\em faithfulness}~\citep{spirtes2000causation}, both in the observational and interventional distributions following~\citep{hauser2014two}. This ensures that conditional independence tests lead to the discovery of true causal relations rather than spurious associations.
 \smallskip
 
\noindent \textbf{Cost Model and Approximate Learning}. In our cost model, each node $u \in V$ has a cost $C(u) \in [1, W]$ for some $W \geq 1$ and the cost of intervention on a set $S \subseteq V$ has the linear form $C(S) = \sum_{u \in S} C(u)$. That is, interventions that involve a larger number of, or more costly nodes, are more expensive. 
Our goal is to find an intervention set $\mathcal{S}$ minimizing $C(\mathcal{S}) = \sum_{S \in \mathcal{S}} \sum_{u \in S} C(u)$, subject to a constraint $m$ on the number of interventions used. This \emph{min cost intervention design} problem was first introduced in~\cite{icml17}.

Letting $L \in \{0,1\}^{n \times m}$ be the matrix associated with an intervention set $\mathcal{S}$, the cost $C(\mathcal{S})$ can be written as $C(L) = \sum_{j=1}^n C(v_j)\cdot \| L(j) \|_1$, where $\norm{L(j)}_1$ is the \emph{weight} of $L$'s $j^{th}$ row, i.e., the number of $1$'s in that row or the number of interventions in which $v_j$ is involved.

We study two variants of causal graph recovery, in which we seek to recover the observable graph $G$ or the ancestral graph $\Anc(G)$.
We say that an intervention set $\mathcal S$ is $\alpha$-{optimal} for a given recovery task if $C(\mathcal S) \le \alpha \cdot C(\mathcal{S}^*)$, where $\mathcal{S}^*$ is the minimum cost intervention set needed for that task. For both recovery tasks we consider a natural approximate learning guarantee: 

\begin{definition}[$\epsilon$-Approximate Learning]\label{def:approx} An algorithm $\epsilon$-approximately learns $G(V, E)$ (analogously, $\Anc(G)$) if it identifies the directions of a subset $\widetilde{E} \subseteq E$ of edges 
with $|E \setminus \widetilde{E} | \leq \epsilon n^2$.\label{def:eps_learn}
\end{definition}
Generally, we will seek an intervention set $\mathcal{S}$ that lets us  $\epsilon$-approximately learn $G$ or $\Anc(G)$, and which has cost bounded in terms of $\mathcal{S}^*$, the minimum cost intervention set needed to \emph{fully} learn the graph. In this sense, our algorithms are bicriteria approximations.

\smallskip
\noindent \textbf{Independent Sets}. Our intervention set algorithms will be based on finding large independent sets of variables, that can be included in the same intervention sets, following the approach of \cite{neurips18}. Given $G(V, E)$, a subset of vertices $Z \subseteq V$ forms an independent set if there are no edges between any vertices in $Z$, i.e., $E[Z] = \emptyset$ where $E[Z]$ is set of edges in the sub-graph induced by $Z$.
Given a cost function $C : V \rightarrow \mathbb{R}^+$, an independent set $Z$ is a maximum cost independent set (MIS) if $C(Z) = \sum_{u \in Z} C(u)$ is maximized over all independent sets in $G$. Since finding MIS is hard~\citep{cormen2009introduction},  we will use the following two notions of a MIS, with the first often referred to as simply \NearMIS, in our approximate learning algorithms :

\begin{definition}[$(\gamma, \epsilon)$-\NearMIS]\label{def:nearMIS}
A set of nodes $S \subseteq V$ is a $(\gamma,\epsilon)$-near-MIS in $G = (V,E)$ if $C(S) \geq (1-\gamma) C(T)$ and $|E[S]| \leq \epsilon n^2$ where $T$ is a maximum cost independent set (MIS) in $G$.
\end{definition}

\begin{definition}[$(\rho, \gamma, \epsilon)$-Independent-Set]\label{def:rhoepsis}
A set of nodes $S \subseteq V$ is a $(\rho, \gamma, \epsilon)$-independent-set in $G = (V,E)$ if $C(S) \geq \rho(1-\gamma) \cdot C(V)$ and $|E[S]| \leq \epsilon n^2$.
\end{definition}

\section{Separating Set Systems}\label{sec:ss}

Our work focuses on two important classes of intervention sets which we show in Sections \ref{sec:graph} and \ref{sec:ancG} are necessary and sufficient for recovering $G$ and $\Anc(G)$ in our setting. Missing details from this section are collected in Appendix~\ref{app:ss}.

\begin{definition}[Separating Set System]\label{def:sss}
For any undirected graph $G(V,E)$, a collection of subsets $\mathcal{S} = \{S_1, \cdots, S_m\}$ of $V$ is a separating set system if every edge $(u,v) \in E$ is separated, i.e., there exists a subset $S_i$ with $u \in S_i$ and $v \notin S_i$ or with $v \in S_i$ and $u \notin S_i$.
\end{definition}
\begin{definition}[Strongly Separating Set System]\label{def:ssss}
For any undirected graph $G(V,E)$, a collection of subsets $\mathcal{S} = \{S_1, \cdots, S_m\}$ of $V$ is a {strongly separating} set system if every edge $(u, v) \in E$ is strongly separated, i.e., there exist two subsets $S_i$ and $S_j$ such that $u\in S_i\setminus S_j$ and $v\in S_j\setminus S_i$.
\end{definition}

For a separating set system, each pair of nodes connected in $G$ must simply have different assigned row vectors in the matrix $L \in \{0,1\}^{n\times m}$ corresponding to $\mathcal S$ (i.e., the rows of $L$ form a valid coloring of $G$). For a strongly separating set system,  the rows must not only be distinct, but one cannot have support which is a subset of the other's. We say that such rows are \emph{non-dominating}: there are distinct $i, j \in [m]$ such that $L(u,i) = L(v,j) = 0$ and $L(u,j) = L(v,i) = 1$. We observe that every \textit{strongly separating set system} must satisfy the non-dominating property (as also observed in Lemma A.9 from~\cite{AKMM2020}).
When $\mathcal{S}$ is a (strongly) separating set system for $G$ we call its associated matrix $L$ a \emph{(strongly) separating matrix} for $G$. 

Finding an exact minimum cost (strongly) separating set system is NP-Hard \citep{neurips18,hyttinen2013experiment} and thus we focus on approximation algorithms. We say the $\mathcal S$ is an $\alpha$-optimal (strongly) separating set system if $C(\mathcal S) \le \alpha \cdot C(\mathcal{S}^*)$, where $\mathcal{S}^*$ is the minimum cost (strongly) separating set system. Equivalently, for matrices $C(L) \le \alpha \cdot C(L^*)$ where $L,L^*$ correspond to $\mathcal{S},\mathcal{S}^*$ respectively.

 Unfortunately, even when approximation is allowed, finding a low-cost set system for an arbitrary graph $G$ is still hard. In particular, we prove a conditional lower bound based on the hardness of approximation for $3$-coloring. Achieving a  coloring for $3$-colorable graphs that uses sub-polynomial colors in polynomial time is a longstanding open problem \citep{wigderson1983improving,blum1997n314, karger1994approximate}, with the current best known algorithm \citep{arora2006new} achieving an approximation factor $O(n^{0.2111})$. Thus Theorem \ref{thm:ss_hardness} shows the hardness of finding near optimal separating set systems, barring a major breakthrough on this classical problem.

\begin{theorem}
Assuming 3-colorable graphs cannot be colored with sub-polynomial colors in polynomial time, there is no polynomial time algorithm for finding an $o(\log n)$-optimal (strongly) separating set system for an arbitrary graph $G$ with $n$ nodes when $m = \beta \log n$ for some constant $\beta > 2$.
\label{thm:ss_hardness}
\end{theorem}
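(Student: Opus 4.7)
The plan is to reduce from the hypothesized hardness of sub-polynomial coloring of 3-colorable graphs. Assume for contradiction that there is a polynomial-time algorithm $A$ computing an $o(\log n)$-optimal (strongly) separating set system for every $n$-node graph with $m = \beta \log n$ columns. I will use $A$ as a black box to color an arbitrary 3-colorable graph $G$ on $N$ vertices with $N^{o(1)}$ colors in polynomial time, contradicting the hypothesis.

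First I would upper bound $\mathrm{OPT}$. Given a 3-coloring $V_1\cup V_2\cup V_3$ with $|V_1|\ge|V_2|\ge|V_3|$, assigning these three classes to the vectors $\mathbf{0}, e_1, e_2$ gives a valid separating set system of cost $|V_2|+|V_3|\le 2N/3$; assigning them instead to $e_1, e_2, e_3$ gives a strongly separating (antichain) set system of cost $N$. Both assignments are feasible since $\beta>2$ provides ample room in $\{0,1\}^m$. Hence $\mathrm{OPT}=O(N)$, and running $A$ on $G$ with unit costs returns a matrix $L \in \{0,1\}^{N\times m}$ of total weight $C \le \alpha\cdot O(N)$ for some $\alpha = o(\log N)$. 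The rows of $L$ automatically form a valid coloring, so it suffices to bound the number of distinct rows. By Markov's inequality applied to the row weights, at least $N/2$ rows have weight at most $2C/N = O(\alpha)$, and the number of distinct vectors in $\{0,1\}^m$ of weight at most $k$ is $\binom{m}{\le k} \le (em/k)^k$. Writing $\alpha = \log N / f(N)$ with $f(N)\to\infty$ and setting $k = O(\alpha)$, a direct estimate gives
\[
k\log(em/k) \;=\; O\!\left(\frac{\log N}{f(N)} \cdot \log f(N)\right) \;=\; o(\log N),
\]
so $\binom{m}{\le O(\alpha)} = N^{o(1)}$. Therefore at least half the vertices of $G$ receive colors from a palette of size $N^{o(1)}$.

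To color the rest of $G$, I would recurse on the at most $N/2$ remaining vertices, which induce a subgraph that is still 3-colorable, using a fresh disjoint palette of colors at each level. After $O(\log N)$ rounds every vertex is colored, the total palette has size $O(\log N) \cdot N^{o(1)} = N^{o(1)}$, and the total running time is polynomial because $A$ is polynomial-time and is invoked $O(\log N)$ times. This produces a polynomial-time $N^{o(1)}$-coloring of every 3-colorable graph, contradicting the assumption and proving the theorem for both the separating and strongly separating versions. The main obstacle is the combinatorial estimate $\binom{m}{\le O(\alpha)} = N^{o(1)}$ and verifying that it carries through uniformly along the recursion, where the subgraph size shrinks but the approximation factor $\alpha(n')$ continues to satisfy $\alpha(n') = o(\log n')$, so the per-level palette stays sub-polynomial in $N$.
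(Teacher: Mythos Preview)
Your proposal is correct and follows essentially the same approach as the paper: bound $\mathrm{OPT}=O(N)$ via the 3-coloring, apply Markov to the row weights of the returned matrix to isolate a large fraction of vertices with weight $O(\alpha)$, bound the number of such low-weight vectors by $N^{o(1)}$, and recurse $O(\log N)$ times on the remainder. The only cosmetic differences are the constants (the paper uses a $3N/4$ versus your $N/2$ split) and the exact form of the binomial estimate; your explicit antichain assignment $e_1,e_2,e_3$ for the strongly separating case is a detail the paper only sketches.
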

\begin{proof}
We give a proof by contradiction for the case of separating set system A similar proof can be extended to strongly separating set systems. Suppose $G$ is a 3-colorable graph containing $n$ nodes with \textit{unit} costs for every node.  We argue that if there is an $o(\log n)$-optimal algorithm for separating set system then, we can use it to obtain an algorithm for $3$-coloring of $G$ using $n^{o(1)}$ colors, thereby giving a contradiction.

First, we observe that the cost of an optimal separating system on $G$ when $m = \beta \log n$ is at most $n$, as each color class forms an independent set in $G$ and every node in the color class can be assigned a vector of weight at most $1$. Let $\mathcal{A}(G)$ denote the separating set system output by an $\alpha$-optimal algorithm where $\alpha = o(\log n)$. We outline an algorithm that takes as input $\mathcal{A}(G)$ and returns a $n^{o(1)}$-coloring of $G$.
 
We have $C(\mathcal{A}(G)) \leq \alpha C(\mathcal{S}^*)$ where $\mathcal{S}^*$ is an optimal separating set system for $G$. Letting $L$ be the separating matrix associated with $\mathcal{A}(G)$, we thus have
\[
 C(\mathcal{A}(G)) = \sum_{j=1}^n \norm{L(j)}_1 \leq \alpha C(\mathcal{S}^*) \leq \alpha n.
\]
Using an averaging argument, we have that in $\mathcal{A}(G)$, there are at most $\frac{n}{4}$ nodes (denoted by $V \setminus D^{(1)}$) with weight $\norm{L(j)}_1$ more than $4 \alpha$. Consider the remaining $\frac{3n}{4}$ nodes given by $D^{(1)}$. Let $D^{(1)}_j$ denote the nodes that have been assigned weight $j$ by $\mathcal{A}(G)$. For each of the at most ${m \choose j}$ vectors with weight $j$ that are feasible, we create a new color and color each node in $D^{(1)}_j$ using these new colors based on the weight $j$ vectors assigned to the node in $\mathcal{A}(G)$. We repeat this procedure for every weight $j$ in $D^{(1)}$. As the maximum weight of a node in $D^{(1)}$ is $4\alpha$, the total number of colors that we use to color all the nodes of $D^{(1)}$ is 
 \begin{align*}
   \sum_{j=0}^{4\alpha} {m \choose j} \leq \sum_{j=0}^{4\alpha} \frac{m^j}{j!} = \sum_{j=0}^{4\alpha} \frac{(4\alpha)^j}{j!} \left( \frac{m}{4\alpha} \right)^j  
   \leq \e^{4 \alpha} \left( \frac{ m}{4\alpha} \right)^{4\alpha}  
   &\leq 2^{4\alpha \log e   + 4\alpha \log \frac{m}{4\alpha}}\\
 &< \ 2^{4\alpha \log e + \sqrt{4 m \alpha} }\\
 &<\ 2^{o(\log n)+ \sqrt{\log n \cdot o(\log n)} }\\
 &< \ n^{o(1)},
\end{align*}
where the first strict inequality used the fact that $\log \frac{m}{4\alpha} \leq \sqrt{\frac{m}{4\alpha}} \hspace{1ex} \text{ for} \hspace{1ex} \frac{m}{4\alpha} > \ \frac{\beta \log n}{o(\log n)} > \ 32$.
 
 After coloring the nodes of $D^{(1)}$, we remove these nodes from $G$ and run $\alpha$-optimal algorithm $\mathcal{A}$ on the remaining nodes $V \setminus D^{(1)}$. Observing that a sub-graph of a 3-colorable graph is also 3-colorable, we have that the set of nodes obtained by running $\mathcal{A}$ on $V\setminus D^{(1)}$ that have weight at most $4\alpha$ (denoted by $D^{(2)}$) also require at most $n^{o(1)}$ colors. As $|D^{(i)}| \geq \frac{3|V\setminus D^{(i-1)}|}{4}$ for all $i \in \{1, 2,\cdots, \log n\}$, in at most $\log n$ recursive calls to $\mathcal{A}$, we will fully color $G$ using at most $n^{o(1)}  \log n = n^{o(1)}$ colors. Hence, we have obtained a $n^{o(1)}$-coloring of $G$ using an $\alpha$-optimal algorithm when $\alpha = o(\log n)$.

\end{proof}

{ \paragraph{Remark.} The results of Theorem~\ref{thm:ss_hardness} can be extended to any $m$. When $m = o(\log n)$, in our hardness example that uses $3$ colors, any valid separating set system using $m$ interventions would lead to a coloring of the graph using at most $2^m = n^{o(1)}$ colors, i.e., a sub-polynomial number of colors. Thus, even finding a valid separating matrix in this scenario is hard, under our assumed hardness of 3-coloring.}

\medskip

We shall now proceed to discuss a $O(\log n)$ approximation algorithm for finding (strongly) separating set systems. It is easy to check that for a strongly separating set system, every node must appear in at least one intervention (because of non-dominating property), and so the set system has cost as least $\sum_{v \in V} C(v)$. At the same time, with $m \ge 2 \log n$, 
 we can always find a strongly separating set system where each node appears in $\log n$ interventions. In particular, we assign each node to a unique vector with weight $\log n$. Such an assignment is non-dominating and since $\binom{2 \log n}{\log n} \ge n$, is feasible. It achieves cost $C(\mathcal{S}) = \log n \cdot \sum_{v \in V} C(v)$, giving a simple $\log n$-approximation for the minimum cost strongly separating set system problem. For a separating set system, a simple $O(\log n)$-approximation is also achievable by first computing an approximate minimum weight vertex cover and assigning all nodes in its complementary independent set the weight $0$ vector i.e., assigning them to no interventions.

\vspace*{2ex}
\noindent \textbf{A $2\log n$-Approximation Algorithm}. Find a $2$-approximate weighted vertex cover $X$ in $G$ using the classic algorithm from~\cite{williamson2011design}. In $L$, assign zero vector to all nodes of $V\setminus X$; assign every node in $X$ with a unique vector of weight $\log n$ and return $L$.

\vspace*{2ex}

We give a sketch of the arguments involved in proving the approximation ratio of the above algorithm and defer the full details to Appendix~\ref{app:ss}. We observe that all the nodes that are part of maximum cost independent set (complement of minimum weighted vertex cover) are assigned a weight $0$ vector by optimal separating system for $G$. Therefore, the cost of optimal separating set system is at least the cost of minimum cost vertex cover in $G$. As every node is assigned a vector of weight $\log n$ and the cost of vertex cover is at most twice the cost of the minimum weighted vertex cover, we have $C(L) \leq 2\log n \cdot  C(L^*)$. 

\vspace*{2ex}
By Theorem \ref{thm:ss_hardness}, it is hard to improve on the above $O(\log n)$ approximation factor (up to constants). Therefore, we focus on finding relaxed separating set systems in which some variables are not separated. We will see that these set systems still suffice for approximately learning $G$ and $\Anc(G)$ under the notion of Definition \ref{def:approx}.

\begin{definition}[$\epsilon$-(Strongly) Separating Set System]\label{def:espss} For any undirected graph $G(V,E)$, a collection of subsets $\mathcal{S} = \{S_1, \cdots, S_m\}$ of $V$ is an $\epsilon$-separating set system if, letting $L \in \{0,1\}^{n \times m}$ be the matrix corresponding to $\mathcal{S}$, $|\{(v_i,v_j) \in E: L(i) = L(j)\}| < \epsilon n^2$. It is strongly separating if $|\{(v_i,v_j) \in E: L(i), L(j)\text{ are not non-dominating}\}| < \epsilon n^2$.
\end{definition}

For  $\epsilon$-strongly separating set systems, when the number of  interventions is large, specifically $m\geq {1}/{\epsilon}$,  a simple approach is to partition the nodes into $1/\eps$ groups of size $\epsilon \cdot n$. We then assign the same weight $1$ vector to nodes in the same group and different weight 1 vectors to nodes in different groups. For $\epsilon$-separating set system, we first find an approximate minimum vertex cover, and then apply the above partitioning. In Appendix~\ref{app:ss}, we show that we get within a $2$ factor of the optimal (strongly) separating set system. Therefore, for the remainder of this paper we assume $m < 1/\epsilon$. While $m$ is an input parameter, smaller $m$ corresponds to fewer interventions and this is the more interesting regime. 

\section{Observable Graph Recovery}\label{sec:graph}

We start by considering the setting where we are given all edges in the observable graph $G$ (i.e., all direct causal relations between observable variables) e.g., by a domain expert, and wish to identify the direction of these edges. 
It is known that, assuming causal sufficiency (no latents), a separating set system is necessary and sufficient to learn $G$ \citep{eberhardt2007causation}. In Appendix \ref{app:graph} we show that this is also the case in the \emph{presence of latents} when we are given the edges in $G$ but not their directions. We also show that an $\epsilon$-separating set system is sufficient to approximately learn $G$ in this setting:
%
%

\begin{claim}\label{cl:epsSS}
Under the assumptions of Section \ref{sec:prelim}, if $\mathcal{S} = \{ S_1, S_2, \cdots S_m \}$ is an $\epsilon$-separating set system for $G$, $\mathcal{S}$ suffices to $\epsilon$-approximately learn $G$. 
\end{claim}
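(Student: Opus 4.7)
The plan is to reduce to the exact-recovery case that is established (by reference) in Appendix~\ref{app:graph}: a genuine separating set system for $G$ suffices to recover all edge directions in $G$ in the presence of latents. The key mechanism underlying that result is per-edge and local. If an edge $(u,v) \in E$ is separated by some $S_i \in \mathcal{S}$, say with $u \in S_i$ and $v \notin S_i$, then intervening via $\doo(S_i)$ cuts all edges into $u$, including a putative $v \to u$ edge, while preserving any $u \to v$ edge and also preserving the confounding paths through latents only in a way that is independent of $v$'s value after intervention. A conditional independence test on $v_i \indep v_j \mid \doo(S_i)$ then distinguishes the two possible orientations. Under the faithfulness assumption stated in Section~\ref{sec:prelim}, this test gives the correct direction.

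Given this, I would carry out the proof in two short steps. First, partition $E$ into the edges $E'$ that are separated by at least one $S_i \in \mathcal{S}$ and the edges $E \setminus E' = \{(v_i,v_j) \in E : L(i) = L(j)\}$ that are not separated by any $S_i$. By the definition of an $\epsilon$-separating set system (Definition~\ref{def:espss}) applied to the matrix $L$ associated with $\mathcal{S}$, we have $|E \setminus E'| < \epsilon n^2$. Second, apply the per-edge recovery argument above to each $(u,v) \in E'$, using only the interventions in $\mathcal{S}$ that separate that edge. This yields a recovered set $\widetilde{E} \supseteq E'$, and therefore
\[
|E \setminus \widetilde{E}| \;\le\; |E \setminus E'| \;<\; \epsilon n^2,
\]
which matches Definition~\ref{def:eps_learn} for $\epsilon$-approximately learning $G$.

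The main thing to be careful about is that the per-edge recovery argument established in Appendix~\ref{app:graph} is genuinely local: it uses only the single intervention $\doo(S_i)$ that separates $(u,v)$ together with the conditional independence test between $u$ and $v$, and it remains correct even if other edges are never separated by $\mathcal{S}$. Because the faithfulness and causal Markov assumptions hold in every interventional distribution, the presence of unresolved (non-separated) edges elsewhere in $G$ does not invalidate the direction test for a separated edge, so no additional work beyond the bookkeeping above is needed.
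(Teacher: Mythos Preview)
Your proposal is correct and follows essentially the same approach as the paper's own proof: both partition $E$ into separated and non-separated edges, use Definition~\ref{def:espss} to bound the non-separated edges by $\epsilon n^2$, and invoke the per-edge CI-test argument (Claim~\ref{cl:SS_necessary_appendix}) to orient every separated edge. Your explicit remark that the per-edge recovery is local and unaffected by the presence of non-separated edges elsewhere is a useful clarification that the paper leaves implicit.
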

In particular, if $\mathcal{S}$ is an $\epsilon$-separating set system, we can learn all edges in $G$ that are separated by $\mathcal{S}$ up to $\epsilon n^2$ edges which are not separated.
%
%
Given Claim \ref{cl:epsSS}, our goal becomes to find an $\epsilon$-separating matrix $L_\epsilon$ for $G$ satisfying for some small approximation factor $\alpha$, $C(L_\epsilon) \le \alpha \cdot C(L^*)$ where $L^*$ is the minimum cost separating matrix for $G$. Missing technical details of this section are collected in Appendix~\ref{app:graph}. 

We follow the approach of ~\cite{neurips18}, observing that every node in an independent set of $G$ can be assigned the same vector in a valid separating matrix. They show that if we greedily peel off maximum independent sets from $G$ and assign them the lowest remaining weight vector in $\{0,1\}^m$ not already assigned as a row in $L$, we will find a $2$-approximate separating matrix. Their work focuses on chordal graphs where an MIS can be found efficiently in each step. However for general graphs $G$, finding an MIS (even approximately) is hard (see Appendix~\ref{app:disc}). Thus, in Algorithm~\ref{alg:ss_matrix}, we modify the greedy approach and  in each iteration we find a \textit{near} independent set with cost at least as large as the true MIS in $G$  (Def. \ref{def:nearMIS}).
Each such set has few internal edges, this leads to few non-separating assignments between edges of $G$ in $L_\epsilon$. 
Let $\epsilon$ be parameter that bounds the number of non-separating edges, and $\delta$ is the failure probability parameter of our Algorithm~\ref{alg:ss_matrix}. All the error parameters  are scaled appropriately (See Appendix~\ref{app:graph} for more details) when we pass them along in a procedure call to~\NearMIS~(line 5 in Algorithm~\ref{alg:ss_matrix}). 

\begin{algorithm}[H]
\caption{$\epsilon$-\textsc{Separating Matrix}$(G,m, \epsilon, \delta)$}
\label{alg:ss_matrix}
\begin{algorithmic}[1]
\State \textbf{Input} : Graph $G = (V, E)$, cost function $C:V \rightarrow \R^+$, $m$, error $\epsilon$, and failure probability $\delta$.
\State \textbf{Output} : $\epsilon$-Separating Matrix $L_{\epsilon} \in \{0,1\}^{n \times m}$.
\State Mark all vectors in $\{0,1\}^{m}$ as available. 
\While{$|V| > 0 $}
\State $S \leftarrow \NearMIS~(G, \epsilon ^2, \epsilon \delta )$
\State $\forall \ v_j \in S$, Set $L_{\epsilon}(j) $ to smallest weight vector available from $\{ 0, 1 \}^m$ and mark it unavailable.
\State Update $G$ by $E \leftarrow E \setminus E[S]$ and $V \leftarrow V \setminus S$.
\EndWhile
\State \return $\ L_\epsilon$
\end{algorithmic}
\end{algorithm}
Observe that any subset of fewer than $\epsilon n$ nodes has at most $\epsilon^2 n^2$ internal edges and so the $\NearMIS~(G, \epsilon ^2, \epsilon\delta)$ routine employed in Algorithm \ref{alg:ss_matrix} always returns at least $\epsilon n$ nodes. Thus the algorithm terminates in $1/\epsilon$ iterations. Across all $1/\epsilon$ \NearMIS's there are at most $\epsilon^2 n^2 \cdot 1/\epsilon = \epsilon n^2$ edges with endpoints assigned the same vector in $L_\epsilon$, ensuring that $L_\epsilon$ is indeed $\epsilon$-separating for $G$.

In Algorithm~\ref{alg:near_mis}, we implement the \NearMIS~routine by using the notion of a $(\rho, \gamma, \epsilon)$-Independent-Set (Definition~\ref{def:rhoepsis}). 
We  find a value of $\rho$ that achieves close to the MIS cost via a search over decreasing powers of ${(1+\gamma)}$.  In Algorithm~\ref{alg:mis_tester} we show how to obtain a $(\rho, \gamma, \epsilon)$-Independent-Set (denoted by $S$) whenever the cost of MIS in $G$ is at least $\rho \cdot C(V)$. So, $C(S) \geq \rho C(V) - \rho \gamma C(V)$ and we might lose a cost of at most $\gamma \rho C(V)$ compared to the MIS cost. Therefore, we add $\eps \cdot n$ nodes of highest cost (denoted by $S_{\eps/2}$) to $S$ and argue that by setting $\gamma = O(\eps/W)$, $S \cup S_{\eps/2}$ has a cost at least the cost of MIS, i.e., $S \cup S_{\eps/2}$ is a $(0,\eps)$-\NearMIS.
\begin{algorithm}[h]
\caption{\NearMIS}
\label{alg:near_mis}
\begin{algorithmic}[1]
\State \textbf{Input :} Graph $G(V, E)$, cost function $C:V \rightarrow \R^+$, error $\epsilon$, and failure probability $\delta$. 
\State \textbf{Output :} Set of nodes that is a $(0, \epsilon)$-\NearMIS~ in $G$.
\State Initialize $\rho = 1$, and let $T$ be the set of $\sqrt{\epsilon} n$ nodes in $G$ with the highest cost.
\While{$\rho \geq {\sqrt{\epsilon}}$}
\State $S \leftarrow \IndepSet(G, \rho, \eps/8 W, \eps, \delta')$ where $\delta' = \eps\delta/4 W \log(1/\epsilon)$ 
\State Let $S_{\eps/2}$ denote the highest cost ${\eps \cdot n}/{2}$ nodes in $V \setminus S$.
\If{$C(S \cup S_{\eps/2}) \geq C(T) \textbf{ and }|E[S \cup S_{\eps/2}]| \leq \eps n^2$}
\State \return~$S \cup S_{\eps/2}$
\EndIf
\State $\rho = {\rho}/{(1+\gamma)}$    
\EndWhile
\State \return~$T$
\end{algorithmic}
\end{algorithm}

\subsection{$(\rho, \gamma, \epsilon)-\IndepSet$}
In this section, we introduce several new ideas and build upon the results for finding a $(\rho, 0, \epsilon)$-Independent-Set which has been used to obtain independent set property testers for graphs with unit vertex costs~\citep{GGR98}. First, we describe an overview of the general approach.

\vspace*{1ex}

\noindent \textbf{Unit Cost Setting}. Suppose $S$ is a fixed MIS in $G$ with $|S| \ge \rho \cdot n$ and $U \subset S$. Let $\Gamma(u)$ represent the set of nodes that are neighbors of node $u$ in $G$. \text{Let } $$\Gamma(U) = \bigcup_{u \in U} \Gamma(u) \text{ and }\widebar{\Gamma}(U) = V\setminus \Gamma(U).$$ Here, $\widebar{\Gamma}(U)$ denotes the set of nodes with no edges to any node of $U$. We claim that $ S \subseteq \widebar{\Gamma}(U)$. First, we observe that $S \subseteq \widebar{\Gamma}(S)$ as $S$ is an independent set so no node in $S$ is a neighbor of another node in $S$ (i.e., all nodes in $S$ are in $\widebar{\Gamma}(S)$). Then, we use the fact $\widebar{\Gamma}(S) \subseteq \widebar{\Gamma}(U)$ since $U \subseteq S$ to conclude $S \subseteq \widebar{\Gamma}(U)$.
Further, \cite{GGR98} proves that, if $U$ is sampled randomly from $S$, taking the lowest degree $\rho \cdot n$ nodes in the induced subgraph on $\widebar{\Gamma}(U)$ will with high probability yield a $(0,\eps)$-\NearMIS~for $G$. Intuitively, the nodes in $\widebar{\Gamma}(U)$ have no connections to $U$ and thus are unlikely to have many connections to $S$. 

To find a $U$ that is fully contained in $S$, we can sample a small set of nodes in $G$; since we have $|S| \ge \rho \cdot n$ the sample will contain with good probability a representative proportion of nodes in $S$. We can then brute force search over all subsets of this sampled set until we hit $U$ which is entirely contained in $S$ and for which our procedure on $\widebar{\Gamma}(U)$ returns a $(\rho, 0, \eps)$-Independent-Set, i.e., a \NearMIS.

\vspace*{1ex}
\noindent \textbf{General Cost Setting}. In the general cost setting, when $S$ is a high cost MIS, may not contain a large number of nodes, making it more difficult to identify via sampling. To handle this, we partition the nodes based on their costs in powers of $(1+\gamma)$ into $k = O(\gamma^{-1} \log W)$ (where $W$ is the maximum cost of a node in $V$) partitions $V_1,\ldots,V_k$. 

A \emph{good partition} is one that contains a large fraction of nodes in $S$: at least $\gamma \rho |V_i|$. Focusing on these partitions suffices to recover an approximation to $S$. Intuitively, all bad partitions have few nodes in $S$ and thus ignoring nodes in them will not significantly affect the MIS cost.

\begin{definition}[$(\gamma, \rho)$-good partition] Let $S$ be an independent set in $G$ with cost $\ge \rho C(V)$. Then $F_{(\gamma,\rho)} = \{ i \mid |V_i \cap S | \geq \gamma \rho |V_i| \}$ is the set of \emph{good partitions} of $V$ with respect to $S$.\label{def:good_partition}
\end{definition}

\begin{claim}
Suppose $S$ is an independent set in $G$ with cost $C(S) \geq \rho C(V)$, then, there exists an independent set $S' \subseteq S$ such that $C(S') \geq \rho(1-2\gamma) C(V)$ and $S' \cap V_i = S \cap V_i$ for all $i \in F_{(\gamma,\rho)}$.
\label{cl:approx_WIS}
\end{claim}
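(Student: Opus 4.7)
The plan is to construct $S'$ explicitly by discarding from $S$ exactly those nodes that lie in bad partitions, and then to bound the cost we lose using the fact that within each partition costs are concentrated inside a factor of $(1+\gamma)$.

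Concretely, define
\[
S' \;=\; \bigcup_{i \in F_{(\gamma,\rho)}} (S \cap V_i).
\]
By construction $S' \subseteq S$, so $S'$ is an independent set, and $S' \cap V_i = S \cap V_i$ for every good $i \in F_{(\gamma,\rho)}$ while $S' \cap V_i = \emptyset$ for every bad $i \notin F_{(\gamma,\rho)}$. This immediately gives the second conclusion of the claim, so the only work is to lower bound $C(S')$.

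For the cost bound, I would argue as follows. Since the partitions $V_1,\ldots,V_k$ group nodes whose costs lie in geometric ranges of ratio $(1+\gamma)$, for every index $i$ the maximum cost in $V_i$ is at most $(1+\gamma)$ times the minimum cost in $V_i$, and hence
\[
C(S \cap V_i) \;\le\; (1+\gamma)\,\frac{|S \cap V_i|}{|V_i|}\,C(V_i).
\]
For a bad partition $i \notin F_{(\gamma,\rho)}$, the definition gives $|S \cap V_i| < \gamma \rho |V_i|$, so $C(S \cap V_i) < \gamma \rho (1+\gamma)\, C(V_i)$. Summing over the bad partitions and using $\sum_i C(V_i) = C(V)$,
\[
C(S \setminus S') \;=\; \sum_{i \notin F_{(\gamma,\rho)}} C(S \cap V_i) \;<\; \gamma \rho (1+\gamma)\, C(V).
\]
Combining with the hypothesis $C(S) \ge \rho C(V)$ yields
\[
C(S') \;\ge\; \rho\, C(V) - \gamma \rho (1+\gamma)\, C(V) \;=\; \rho\bigl(1 - \gamma - \gamma^2\bigr)\,C(V) \;\ge\; \rho(1 - 2\gamma)\,C(V),
\]
where the last inequality holds for $\gamma \le 1$ (which is the regime of interest, since otherwise the claim is vacuous).

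The argument is short and the only step that requires care is the bound $C(S \cap V_i) \le (1+\gamma) |S \cap V_i|\, C(V_i)/|V_i|$: I want to make sure the partitioning convention really does force the max/min cost ratio inside a single $V_i$ to be at most $1+\gamma$. Given the definition $V_i = \{v : C(v) \in [(1+\gamma)^{i-1}, (1+\gamma)^i)\}$ (which is the natural reading of ``partition the nodes based on their costs in powers of $(1+\gamma)$''), this holds immediately, and the remainder is straightforward algebra.
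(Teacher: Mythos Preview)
Your proof is correct and follows essentially the same route as the paper: both define $S' = \bigcup_{i \in F_{(\gamma,\rho)}} (S \cap V_i)$ and bound the lost cost in bad partitions by $\gamma\rho(1+\gamma)C(V)$, the paper via the explicit bounds $(1+\gamma)^{i-1} \le C(v) < (1+\gamma)^i$ and you via the equivalent max/min ratio inequality $C(S\cap V_i) \le (1+\gamma)\,|S\cap V_i|\,C(V_i)/|V_i|$. The final algebra $\rho(1-\gamma-\gamma^2) \ge \rho(1-2\gamma)$ is identical.
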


While we do not a priori know the set of good partitions, if we sample a small number $t$ of nodes uniformly from each partition, with good probability, for each good partition we will sample $\gamma \rho t/2$ nodes in $S$. We search over all possible subsets of partitions and in one iteration of our search, we have all the good partitions denoted by $\{ V_1, V_2 \cdots V_\tau \}$. Now, for such a collection of good partitions, we search over all possible subsets 
$\mathcal{U} = U_1 \cup U_2 \cdots \cup U_\tau$ where $|U_i| = \gamma \rho t/2$ and in at least one instance have all $U_i$ in good partitions fully contained in $S$. Let 
$$Z(\mathcal{U}) := \bigcup^\tau_{i=1} V_i \setminus \bigcup^\tau_{i=1} {\Gamma}(U_i)$$
be the nodes in every \emph{good} partition $V_i$ with no connections to any of the nodes in $U_i$. Analogous to unit cost case, we sort the nodes in a  \emph{good} partition $V_i$ by their degree in the induced subgraph on $Z(\mathcal{U})$. We select low degree nodes from each partition until the sum of the total degrees of the nodes selected is $\epsilon n^2/k$. We output union of all such nodes iff it is a $(\rho, 3\gamma, \epsilon)$-independent set. One key difference is that while including nodes from $Z(\mathcal{U})$,
we do not include the nodes in the sorted order until sum of degrees is $\eps n^2$. Instead, we process each \emph{good} partition and include the nodes from each partition separately. Later, we will argue that by doing so we have made sure that the cost contribution of a particular partition is accounted for accurately.

\begin{algorithm}[h]
\caption{$(\rho, \gamma, \eps)$ \IndepSet}
\label{alg:mis_tester}
\begin{algorithmic}[1]
\State \textbf{Input} : Graph $G = (V, E)$, cost function $C:V \rightarrow \R^+$, parameters $\rho, \gamma, \epsilon$ and $\delta$
\State \textbf{Output} : $(\rho, 3\gamma, \epsilon)$ independent set~in $G$ if one exists.
\State For $i=1, \ldots, k,$ define $V_i = \{ v\in V \mid (1+\gamma)^{i-1} \leq C(v) < (1+\gamma)^i \}$ where $k = \gamma^{-1} \log W$ 
\State 
Sample $t = O(\frac{k \log(k/\epsilon \delta)}{\epsilon \gamma \rho})$ nodes $\tilde{V}_i$ in each partition $V_i$.

\For{every collection of partitions $\{ {V}_1, {V}_2, \cdots {V}_\tau \} \subseteq \{ {V}_1, {V}_2, \cdots {V}_k \}$}
\For{ $\mathcal{U} = U_1 \cup U_2 \cup \cdots \cup U_\tau$ such that $U_i \subseteq \tilde{V}_i$ with size $\gamma \rho t/2$ for all $i \in [\tau]$}
\State Let $Z(\mathcal{U}) :=  \bigcup_{i=1}^\tau V_i \setminus \bigcup_{i=1}^\tau  \Gamma(U_i)$.
\For{$i = 1\ldots \tau$}
\State Sort nodes in $Z(\mathcal{U}) \cap V_i$ in increasing order of degree in the induced graph on $Z(\mathcal{U})$. 
\State Let $\hat Z_i(\mathcal{U}) \subseteq 
 Z(\mathcal{U}) \cap V_i$ be set of nodes obtained by considering the nodes in the sorted order until the total degree is $\epsilon n^2/k$. 
\EndFor
\State Let $\hat{Z}(\mathcal{U}) = \bigcup_{i=1}^\tau {\hat Z_i(\mathcal{U})}$.
\State \return~$\hat{Z}(\mathcal{U})$ if $C(\hat{Z}(\mathcal{U})) \geq \rho(1-3\gamma) C(V)$.
\EndFor
\EndFor
\end{algorithmic}
\end{algorithm}
\setlength{\textfloatsep}{4pt}

By construction, our output, denoted by $\hat Z(\mathcal{U})$ will have at most $\epsilon n^2$ internal edges. Thus, the challenge lies in analyzing its cost. We argue that in at least one iteration, all chosen $U_i$ for good partitions will not only lie within the MIS $S$, but their union will accurately represent connectivity to  $S$.  Specifically, any vertex $v \in Z(\mathcal{U})$, i.e., with no edges to $U_i$ for all $i \in F_{(\gamma,\rho)}$, should have few edges to $S$. We formalize this notion using the definition of $\epsilon_2$-IS representative subset below.

\begin{definition}[$\epsilon_2$-IS representative subset]
$R \subseteq \bigcup_{i \in F_{(\gamma,\rho)} } \left( S \cap V_i \right)$
is an $\epsilon_2$-IS representative subset of $S$ if for all but $\epsilon_2 n$ nodes of \emph{good} partitions i.e., $\bigcup_{i \in F_{(\gamma, \rho)}}V_i$, we have the following property:
 $$ \text{Suppose }v \in \bigcup_{i \in F_{(\gamma, \rho)}}V_i : \ \ \text{ if } \Gamma(v) \cap R = \emptyset  \text{ then } |\Gamma(v) \cap S| \leq \epsilon_2 n.$$ 
\label{def:rep_subset}
\end{definition}

We show that there is a \emph{$\epsilon_2$-IS representative subset} containing at least $\gamma \rho t/2$ nodes from each good partition among our sampled nodes $\bigcup_{i=1}^k \widetilde{V_i}$. Setting $\epsilon_2 = \epsilon/2k$ we have:
\begin{lemma}
If  $t = O(\frac{k \log(k/\epsilon \delta)}{\epsilon \gamma \rho})$ nodes are uniformly sampled from each partition $V_i$ to give $\tilde{V}_i$, with probability $1-\delta$, there exists an $\epsilon/2k$-IS representative subset $R$ such that, for every $i \in F_{(\gamma,\rho)} $, $|\tilde{V}_i \cap R| = \gamma \rho t/2$.
\label{lem:eps_representative}
\end{lemma}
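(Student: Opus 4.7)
The plan is to establish the two assertions of the lemma separately: (1) for every good partition $V_i$, the sample $\tilde V_i$ contains at least $\gamma \rho t/2$ elements of $S$, and (2) any selection of exactly $\gamma \rho t/2$ such elements per good partition can be turned into a valid $\epsilon/(2k)$-IS representative subset $R$. Both are shown to hold simultaneously with probability at least $1-\delta$.

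For (1), each of the $t$ i.i.d.\ uniform samples defining $\tilde V_i$ hits $S$ with probability at least $\gamma \rho$ whenever $V_i$ is good. A multiplicative Chernoff bound yields $|\tilde V_i \cap S| \geq \gamma \rho t / 2$ with failure probability $\exp(-\Omega(\gamma \rho t))$, and a union bound over the at most $k$ good partitions, together with the stated choice $t = O(k \log(k/(\epsilon \delta))/(\epsilon \gamma \rho))$, keeps the total failure below $\delta/2$.

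For (2), condition on (1) and pick $R$ by taking any $\gamma \rho t / 2$ elements of $\tilde V_i \cap S$ for each good $i$. Exchangeability of the $t$ draws implies that $R \cap \tilde V_i$ is distributionally equivalent to $\gamma \rho t / 2$ i.i.d.\ uniform samples from $S \cap V_i$. Call $v \in \bigcup_{i \in F_{(\gamma, \rho)}} V_i$ \emph{heavy} if $|\Gamma(v) \cap S| > \epsilon_2 n$ with $\epsilon_2 = \epsilon/(2k)$, and split the heavy vertices according to whether their $S$-neighborhood is concentrated in good partitions. Writing $x_i = |\Gamma(v) \cap S \cap V_i|$, when $\sum_{i \in F_{(\gamma, \rho)}} x_i \geq \epsilon_2 n / 2$, independence across partitions and $|S \cap V_i| \leq |V_i| \leq n$ give
\[
\Pr[\Gamma(v) \cap R = \emptyset] \;\leq\; \prod_{i \in F_{(\gamma, \rho)}} \left(1 - \frac{x_i}{|S \cap V_i|}\right)^{\gamma \rho t / 2} \;\leq\; \exp\!\left(-\frac{\gamma \rho t}{2 n}\sum_{i \in F_{(\gamma, \rho)}} x_i\right) \;\leq\; \exp\!\left(-\Omega(\gamma \rho t \epsilon_2)\right),
\]
so linearity of expectation and Markov's inequality bound the number of uncovered heavy vertices of this kind by $\epsilon_2 n / 2$ with probability $\geq 1 - \delta/2$ for the prescribed $t$.

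The main obstacle is the complementary case: heavy $v$ with $\sum_{i \in F_{(\gamma, \rho)}} x_i < \epsilon_2 n / 2$, so $|\Gamma(v) \cap (S \setminus S')| > \epsilon_2 n / 2$, where $S'$ is the refined independent set from Claim~\ref{cl:approx_WIS}. Such vertices can never be covered by any $R \subseteq \bigcup_{i \in F_{(\gamma, \rho)}}(S \cap V_i)$ and must be bounded deterministically. Using $|S \setminus S'| \leq C(S \setminus S') \leq 2 \gamma \rho \, C(V) = O(\gamma W n)$, a double-counting of edges between good-partition vertices and $S \setminus S'$ caps the count of such $v$ in terms of $\gamma$, $W$, and $\epsilon_2$; the setting $\gamma = O(\epsilon / W)$ used by \NearMIS\ when it invokes \IndepSet\ is chosen precisely so this deterministic count is absorbed into the overall $\epsilon_2 n$ slack. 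Combining the three failure contributions via a union bound completes the proof.
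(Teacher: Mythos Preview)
Your steps (1) and the probabilistic half of (2) track the paper's proof closely: a Hoeffding/Chernoff bound gives $|\tilde V_i\cap S|\ge \gamma\rho t/2$ for every good partition with failure $\le\delta/2$, and then for a heavy vertex $v$ one bounds $\Pr[\Gamma(v)\cap R=\emptyset]$ by a product over good partitions and applies Markov's inequality to control the number of uncovered heavy vertices. The paper does \emph{not} make your case split; in its last chain of inequalities it passes from $\exp\bigl(-\tfrac{\gamma\rho t}{n}\sum_{i}|\Gamma(v)\cap S\cap V_i|\bigr)$ directly to $\exp\bigl(-\tfrac{\gamma\rho t}{n}\cdot\tfrac{\epsilon n}{2k}\bigr)$, implicitly treating $\sum_{i\in F_{(\gamma,\rho)}}|\Gamma(v)\cap S\cap V_i|$ as if it equalled $|\Gamma(v)\cap S|$. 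So in isolating the ``complementary'' case you are being more careful than the paper.

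Where your argument breaks down is the quantitative handling of that complementary case. First, the bound you quote is loose: directly from the definition of a bad partition one has $|S\setminus S'|=\sum_{i\notin F_{(\gamma,\rho)}}|S\cap V_i|<\gamma\rho\sum_i|V_i|\le\gamma\rho n$, which is stronger than the cost-based $|S\setminus S'|\le C(S\setminus S')=O(\gamma W n)$. But even with this sharper estimate, double-counting gives at most $\dfrac{n\cdot|S\setminus S'|}{\epsilon_2 n/2}<\dfrac{2\gamma\rho n}{\epsilon_2}=\dfrac{4\gamma\rho k\,n}{\epsilon}$ vertices of the second type. To absorb these into the $\epsilon_2 n=\epsilon n/(2k)$ slack you would need $\gamma\rho\le \epsilon^2/(8k^2)$; substituting $k=\gamma^{-1}\log W$ turns this into $\rho\le O(\gamma\epsilon^2/\log^2 W)$, which contradicts $\rho\ge\sqrt{\epsilon}$ for $\gamma=\Theta(\epsilon/W)$. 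Hence the claim that the setting $\gamma=O(\epsilon/W)$ ``is chosen precisely so this deterministic count is absorbed'' is not supported by the arithmetic, and the argument as written does not establish the lemma for the stated parameters. (Note also that the lemma is stated for generic $\gamma$, so any fix should not rely on a particular downstream choice of $\gamma$.)
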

Lemma \ref{lem:eps_representative} implies that in at least one iteration, our guess $\mathcal{U}$ restricted to the \emph{good partitions} is in fact an $\epsilon/2k$-IS representative subset. Thus, nearly all nodes in $Z(\mathcal U)$ lying in good partitions have at most $\epsilon n/2k$ edges to $S$.

In the graph induced by nodes of $Z(\mathcal{U})$, with edge set $E[Z(\mathcal{U})]$, consider the degree incident on nodes of $S \cap V_i$ for each partition $V_i$. As there are at most $n$ nodes in $V_i$, from Defn.~\ref{def:rep_subset}, we have the total degree incident on $S \cap V_i$ is at most $ \epsilon n^2/k$. Thus, including the nodes with lowest degrees in $\hat Z_i(\mathcal{U})$ until the total degree is $\epsilon n^2/k$ will yield a set of nodes at least as large as $S \cap V_i$. Since all nodes in $V_i$ have cost within a $1\pm \gamma$ factor of each other, we will have $C(\hat Z_i(\mathcal{U})) \ge (1-\gamma)\cdot C(S \cap V_i)$. As the cost of $S$ in the bad partitions is small, using Claim~\ref{cl:approx_WIS}, we have $\hat{Z}(\mathcal{U}) = \bigcup_{i=1}^\tau {\hat Z_i(\mathcal{U})}$ is a $(\rho, O(\gamma),\epsilon)$-independent set.

\subsection{Approximation Guarantee}
Overall, Algorithm \ref{alg:mis_tester} implements a $(\rho, \gamma, \eps)-$\IndepSet~as required by Algorithm \ref{alg:near_mis} to compute a \NearMIS~in each iteration of Algorithm \ref{alg:ss_matrix}. It just remains to show that, by greedily peeling off \NearMIS~from $G$ iteratively,  Algorithm~\ref{alg:ss_matrix} achieves a good approximation guarantee for $\epsilon$-Approximate Learning $G$. To do this, we use the analysis of a previous work from \cite{neurips18}. In their work, an exact MIS is computed at each step, since their graph is chordal so the MIS problem is polynomial time solvable~\citep{neurips18}. However, the analysis extends to the case when the set returned has cost that is at least the cost of MIS (in our case a \NearMIS), allowing us to achieve near $2$-factor approximation, as achieved in \citep{neurips18}.  Our final result is:

\begin{theorem}\label{thm:main_ss}
For any $m \geq \eta \log 1/\epsilon$ for some constant $\eta$, with probability $\ge 1-\delta$, Algorithm~\ref{alg:ss_matrix} returns $L_\epsilon$ with 
$C(L_\epsilon) \leq (2  + \exp{(-\Omega(m))})  \cdot C(L^*)$, where $L^*$ is the min-cost separating matrix for $G$. Moreover $L_\epsilon$ $\epsilon$-separates $G$. Algorithm~\ref{alg:ss_matrix} has a running time $O(n^2 f(W, \epsilon, \delta))$ where 
$f(W, \epsilon, \delta) = O\left( \frac{W}{\eps^2} \log \frac{1}{\eps} \exp{ \left(O\left(    \frac{W^2 \log^2 W}{\eps^6} \log \frac{W}{\epsilon} \log \frac{W \log W \log 1/\eps}{{\epsilon \delta}} \right)\right) } \right).$
\end{theorem}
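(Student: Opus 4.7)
The plan is to decompose Theorem~\ref{thm:main_ss} into three claims: (i) the output $L_\epsilon$ is an $\epsilon$-separating matrix for $G$; (ii) $C(L_\epsilon) \le (2 + \exp(-\Omega(m))) \cdot C(L^*)$; and (iii) the stated running time bound. Claim (i) follows directly from the construction. Algorithm~\ref{alg:ss_matrix} invokes \NearMIS with error parameter $\epsilon^2$, so by Definition~\ref{def:nearMIS} each returned set $S$ satisfies $|E[S]| \le \epsilon^2 n^2$. Moreover the fallback option $T$ inside \NearMIS consists of $\sqrt{\epsilon^2}\,n = \epsilon n$ vertices, so every peeled set has size at least $\epsilon n$ and Algorithm~\ref{alg:ss_matrix} terminates in at most $1/\epsilon$ iterations. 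Since only pairs of vertices assigned to the same peeled set receive identical vectors, the total number of unseparated edges is at most $(1/\epsilon)\cdot \epsilon^2 n^2 = \epsilon n^2$.

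For claim (ii), I would reduce to the analysis of~\cite{neurips18}. Their work establishes that the greedy scheme of iteratively peeling off an MIS from $G$ and assigning each MIS the lowest-weight unused binary vector in $\{0,1\}^m$ yields a separating matrix within a $(2 + \exp(-\Omega(m)))$ factor of the optimum, provided $m \ge \eta \log(1/\epsilon)$ so that sufficiently many low-weight vectors are available across all $1/\epsilon$ iterations. Inspecting their argument, one sees that exact MIS computation is not essential; the proof goes through whenever each peeled set has cost at least the MIS cost in the current residual graph. Algorithm~\ref{alg:near_mis} enforces exactly this invariant via the acceptance check $C(S \cup S_{\eps/2}) \ge C(T)$ and the fallback to $T$. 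To certify that the invariant holds with high probability, I would combine Lemma~\ref{lem:eps_representative} (which guarantees the brute-force search inside \IndepSet hits an IS-representative subset) with the observation that sweeping $\rho$ through decreasing powers of $(1+\gamma)$ eventually lands within a $(1+\gamma)$ factor of the true MIS density. The padding by $S_{\eps/2}$ together with Claim~\ref{cl:approx_WIS} and $\gamma = O(\epsilon/W)$ then absorbs the multiplicative $(1-O(\gamma))$ slack incurred by \IndepSet, so that the returned set matches the MIS cost and the~\cite{neurips18} analysis applies almost verbatim.

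Claim (iii) is routine arithmetic. The outer loop runs $O(1/\epsilon)$ times; each \NearMIS call sweeps $\rho$ through $O(\log_{1+\gamma}(1/\sqrt{\epsilon})) = O((W/\epsilon)\log(1/\epsilon))$ values with $\gamma = \Theta(\epsilon/W)$; each \IndepSet call enumerates $2^k$ subsets of partitions and at most $\binom{t}{\gamma \rho t/2}^k$ guesses of $\mathcal{U}$, with $k = O((W/\epsilon)\log W)$ and $t = O\bigl(k\log(k/\epsilon\delta)/(\epsilon\gamma\rho)\bigr)$; each inner computation performs $O(n^2)$ work to tabulate degrees in the induced subgraph. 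Plugging in $\rho \ge \sqrt{\epsilon}$ and collecting the exponentials yields the bound $O(n^2 f(W,\epsilon,\delta))$ in the precise form stated.

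The main obstacle is the joint probabilistic analysis. I must union bound over all invocations of \IndepSet, roughly $(1/\epsilon)\cdot O((W/\epsilon)\log(1/\epsilon))$ of them, each with failure probability $\delta' = \epsilon\delta/(4W\log(1/\epsilon))$, to ensure that with probability at least $1-\delta$ every call returns a valid $(\rho, 3\gamma, \epsilon)$-independent-set. The subtle point is tracking how the $S_{\eps/2}$ padding exactly compensates for the multiplicative $(1-3\gamma)$ loss in \IndepSet's output: since $\gamma = O(\epsilon/W)$, the lost cost is at most $3\gamma \cdot C(V) \le 3\epsilon \cdot C(V)/W$, while $S_{\eps/2}$ contributes cost at least $(\epsilon n/2)\cdot C(V)/n = \epsilon\, C(V)/2$ by a simple averaging bound, and the remaining gap is absorbed by the slack embedded in Claim~\ref{cl:approx_WIS} when we discard bad partitions. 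Making these constants explicit so that $C(S \cup S_{\eps/2}) \ge C(\text{MIS})$ strictly (not merely $(1-o(1))\cdot C(\text{MIS})$) is the most delicate step.
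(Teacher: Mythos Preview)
Your decomposition and overall approach match the paper's exactly: the paper likewise argues $\epsilon$-separation by bounding the number of iterations and the internal edges of each peeled set, invokes the greedy analysis of \cite{neurips18} (reproduced in Appendix~\ref{app:2approx} as Lemma~\ref{lem:2approx}) after establishing that each \NearMIS\ output has cost at least that of the true MIS, and derives the running time by propagating the parameter scalings through Lemmas~\ref{lem:weighted_near_MIS_appendix} and~\ref{lem:approxMIS_appendix}.

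There is one concrete slip in your treatment of the padding step. You claim $C(S_{\epsilon/2}) \ge (\epsilon n/2)\cdot C(V)/n = \epsilon\, C(V)/2$ ``by a simple averaging bound,'' but $S_{\epsilon/2}$ consists of the highest-cost nodes in $V\setminus S$, not in $V$; if $S$ already contains all the expensive nodes, the average cost in $S_{\epsilon/2}$ can be far below $C(V)/n$. The paper instead uses the trivial pointwise bound $C(u)\ge 1$, giving $C(S_{\epsilon/2}) \ge \epsilon n/2$. On the loss side one has $C(S) \ge (1-4\gamma)\,C(S^*)$, so the shortfall is at most $4\gamma\, C(S^*) \le 4\gamma\, nW = \epsilon n/2$ once $\gamma = \epsilon/8W$. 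Thus $C(S\cup S_{\epsilon/2}) \ge C(S^*)$ with no appeal to Claim~\ref{cl:approx_WIS} at this stage (that claim is already consumed inside the $(1-3\gamma)$ guarantee of \IndepSet). With this correction your argument goes through and coincides with the paper's.
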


\section{Ancestral Graph Recovery}\label{sec:ancG}
In Section \ref{sec:graph}, we assumed knowledge of the edges in the observable graph $G$ and sought to identify their directions. 
In this section, we relax the assumption, assuming we are given any undirected supergraph $H$ of $G$ i.e., it includes all edges of $G$ and may also include edges which do not represent causal edges. When given such a graph $H$, we cannot recover $G$ itself and therefore, we seek to recover all directed edges of the ancestral graph $\Anc(G)$ appearing in $H$ (i.e., the set of intersecting edges), which we denote by $\Anc(G) \cap H$. This problems strictly generalizes that of Section \ref{sec:graph}, as when $H = G$ we have $\Anc(G) \cap H = G$.  Missing details of this section are collected in Appendix~\ref{app:ancG}.

First, we show that to recover $\Anc(G) \cap H$, a strongly separating system (Def \ref{def:sss}) for $H$ is both necessary and sufficient. Furthermore, an $\epsilon$-strongly separating system suffices for approximate learning. We formalize this using the following lemma:

\begin{lemma}\label{clm:ssssSuffices}
Under the assumptions of Section \ref{sec:prelim}, if $\mathcal{S} = \{ S_1, S_2, \cdots S_m \}$ is an $\epsilon$-strongly separating set system for $H$ , $\mathcal{S}$ suffices to $\epsilon$-approximately learn $\Anc(G) \cap H$.
\end{lemma}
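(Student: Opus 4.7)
The plan is to reduce Lemma~\ref{clm:ssssSuffices} to the already-known ``exact'' version, namely that a full strongly separating set system for $H$ is necessary and sufficient to recover every edge of $\Anc(G) \cap H$. This exact version follows from prior work (\cite{kocaoglu2019characterization, AKMM2020}) and is essentially what motivates Definition~\ref{def:ssss}; I would cite it and only recap the interventional test, since the novelty here is purely the $\epsilon$-slack in the set system definition.

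First I would recall the testing procedure. Under the causal Markov and faithfulness assumptions in both observational and interventional distributions (Section~\ref{sec:prelim}), the answer to ``Is $v_i \indep v_j \mid \doo(S_k)$?'' exactly captures $d$-separation in the post-intervention graph $\GGG_{\doo(S_k)}$, which is $\GGG$ with incoming edges to nodes of $S_k$ removed. Under the semi-Markovian assumption, this lets us decide, for every $u \in S_k$ and $v \notin S_k$, whether $u$ is an ancestor of $v$ in $G$. Now fix an edge $(u,v) \in H$ that is \emph{strongly} separated by $\mathcal S$: there exist $S_i, S_j \in \mathcal S$ with $u \in S_i \setminus S_j$ and $v \in S_j \setminus S_i$. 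Applying the above test with $S_i$ determines whether $u$ is an ancestor of $v$, and applying it with $S_j$ determines whether $v$ is an ancestor of $u$. Thus we recover the orientation of $(u,v)$ in $\Anc(G) \cap H$ (including the case where neither direction is ancestral, in which case no edge is added to the output).

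Second I would invoke the definition of an $\epsilon$-strongly separating set system (Def.~\ref{def:espss}): the number of edges of $H$ whose endpoints receive dominating row vectors in $L$, and hence are \emph{not} strongly separated, is at most $\epsilon n^2$. For every other edge of $H$, the two-test procedure above correctly identifies its orientation in $\Anc(G) \cap H$ (if any). Since the set of unresolved edges of $\Anc(G) \cap H$ is a subset of the set of non-strongly-separated edges of $H$, we miss at most $\epsilon n^2$ edges of $\Anc(G) \cap H$, matching Definition~\ref{def:approx}.

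The main obstacle, I expect, is not the $\epsilon$-slack but cleanly verifying the interventional CI test in the presence of latents and over the supergraph $H \supseteq G$. The latent-variable part is handled by the semi-Markovian assumption together with the faithfulness of the interventional distributions; the supergraph part is handled by observing that the test only asserts ancestor / non-ancestor in $G$, so spurious edges of $H \setminus G$ are harmless (the test reports ``no ancestor relation'' and we add nothing to the output). Once these two points are made explicit, the lemma follows immediately from the exact version of the strongly-separating-set-system sufficiency result.
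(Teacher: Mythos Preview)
Your proposal is correct and follows essentially the same route as the paper's proof: both invoke (via prior work) that for any edge of $H$ strongly separated by $\mathcal S$ the pair of interventional CI tests recovers its ancestral orientation, and then bound the unrecovered edges by the at most $\epsilon n^2$ non-strongly-separated edges of $H$ guaranteed by Definition~\ref{def:espss}. Your extra remarks on the $d$-separation mechanism and on spurious edges in $H\setminus G$ are useful elaboration, but the argument is the same.
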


Given Lemma \ref{clm:ssssSuffices},  our goal becomes to find an $\epsilon$-strongly separating matrix for $H$, $L_\epsilon$ with cost within an $\alpha$ factor of the optimal strongly separating matrix for $H$, for some small $\alpha$.  To do so, our algorithm builds on the separating set system algorithm of Section \ref{sec:graph}. We first run Algorithm~\ref{alg:ss_matrix} to obtain an $\epsilon$-separating matrix $L^S_{\epsilon}$ and construct $S_1, S_2, \cdots S_{1/\epsilon}$ where each set $S_i$ contains all nodes assigned the same vector in $L^S_{\epsilon}$ -- i.e., $S_i$ corresponds to the \NearMIS~computed at step $i$ of Algorithm \ref{alg:ss_matrix}.  We form a new graph by contracting all nodes in each $S_i$ into a single \textit{super node} and denote the resulting at most $1/\epsilon$ vertices by $V_S$. In~\cite{AKMM2020}, the authors give a $2$-approximation algorithm for finding a strongly separating matrix on a set of nodes, provided the graph on these nodes is complete. 
 As $H$ is an arbitrary super graph of $G$, the contracted graph on $V_S$ is also arbitrary. However we simply assume the worst case, and run the Algorithm of~\cite{AKMM2020} on it to produce $L^{SS}_{\epsilon}$, which strongly separates the complete graph on $V_S$. It is easy to show that as a consequence, $L^{SS}_\epsilon$ $\epsilon$-strongly separates $H$.%

\begin{algorithm}[ht]
\caption{\textsc{Ancestral Graph}$(H, m, \epsilon,\delta)$}
\label{alg:sss_matrix}
\begin{algorithmic}[1]
\State $L^S_{\epsilon} := \epsilon$-\textsc{Separating Matrix}$(H,m, \epsilon, \delta)$.
\State Construct $S_1, S_2, \cdots S_{1/\epsilon}$ where each set $S_i$ contains nodes assigned the same vectors in $L^S_{\epsilon}$.
\State Construct a set of nodes $V_S$ by representing $S_i$ as a single node $w_i \text{ and }C(w_i) = \sum_{u \in S_i} C(u)$.
\State $L^{SS}_{\epsilon} :=$\textsc{ SSMatrix}$(V_S, m)$ from \cite{AKMM2020}.
\State \return~$L^{SS}_{\epsilon}$
\end{algorithmic}
\end{algorithm}

To prove the approximation bound, we extend the result of~\cite{AKMM2020}, showing that their algorithm actually achieves a cost at most $2$ times the cost of a \emph{separating matrix} for the complete graph on $V_S$ which satisfies two additional restrictions: (1) it does not assign the all zeros vector to any node and (2) it assigns the same number of weight one vectors as the optimal strongly separating matrix. 
Further, we show via a similar analysis to Theorem~\ref{thm:main_ss} that this cost on $V_S$ is bounded by $2$ times the cost of the optimal strongly separating matrix on the contracted graph over $V_S$.  Combining these bounds yields the final $4$ approximation guarantee of Theorem~\ref{thm:main_sss}.

\begin{theorem} 
Let $m \geq \eta \log 1/\epsilon$ for some constant $\eta$ and $L^{SS}_{\epsilon}$ be matrix returned by Algorithm~\ref{alg:sss_matrix}. Then with probability $\ge 1-\delta$, $L^{SS}_{\epsilon}$ is an $\epsilon$-strongly separating matrix for $H$ and $C(L_\epsilon^{SS}) \le (4+\exp{(-\Omega(m))})   \cdot C(L^*)$
where $L^*$ is the min-cost strongly separating matrix for $H$. Algorithm~\ref{alg:sss_matrix} runs in time 
$O(n^2 f(W, \epsilon, \delta))$ where $$f(W, \epsilon, \delta) = O\left( \frac{W}{\eps^2} \log \frac{1}{\eps} \exp{ \left(O\left(    \frac{W^2 \log^2 W}{\eps^6} \log \frac{W}{\epsilon} \log \frac{W \log W \log 1/\eps}{{\epsilon \delta}} \right)\right) } \right).$$
\label{thm:main_sss}
\end{theorem}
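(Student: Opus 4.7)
The plan is to establish three things: (a) that $L^{SS}_\epsilon$ is $\epsilon$-strongly separating for $H$; (b) the $(4+e^{-\Omega(m)})$ cost bound; and (c) the running time. For (a), Theorem~\ref{thm:main_ss} guarantees that $L^S_\epsilon$ leaves at most $\epsilon n^2$ edges of $H$ unseparated, and these are precisely the edges internal to the groups $S_1,\ldots,S_{1/\epsilon}$ (since all nodes within $S_i$ share the same row vector in $L^S_\epsilon$). After contracting each $S_i$ to the super-node $w_i$ and invoking $\SSMatrix$ on the complete graph over $V_S$, every pair of super-nodes receives non-dominating vectors in $L^{SS}_\epsilon$. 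Hence every crossing edge $(u,v)$ of $H$ with $u \in S_i, v \in S_j$ for $i\neq j$ is strongly separated, leaving at most $\epsilon n^2$ internal edges possibly not strongly separated.

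For (b), I would split the cost analysis into two factor-$2$ steps. First, I would strengthen the analysis of the $\SSMatrix$ algorithm of~\cite{AKMM2020}: rather than comparing its output only to the optimal strongly separating matrix for the complete graph on $V_S$, show that it achieves cost at most $2$ times the cost of any \emph{separating} matrix $L'$ on that complete graph satisfying two restrictions—no row of $L'$ is the all-zeros vector, and the number of weight-one rows of $L'$ matches that of the optimal strongly separating matrix. This extension should follow by re-examining~\cite{AKMM2020}'s algorithm, since on a complete graph the strong-separation constraint essentially reduces to row-distinctness plus a limit on the number of low-weight vectors used, both of which are captured by the two listed restrictions.

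Second, I would construct such a restricted separating matrix on the complete graph over $V_S$ from the optimal strongly separating matrix $L^*$ for $H$, with cost at most $(2+e^{-\Omega(m)})\cdot C(L^*)$. Here the analysis mirrors that of Theorem~\ref{thm:main_ss}: because each $S_i$ is a \NearMIS\ produced by greedy peeling, we can charge the cost paid by $L^*$ on the nodes of $S_i$ against the vector assigned to $w_i$, just as~\cite{neurips18} charge optimal separating-matrix cost against a greedy solution; the $e^{-\Omega(m)}$ slack arises, exactly as in Theorem~\ref{thm:main_ss}, from the limited supply of low-weight vectors in $\{0,1\}^m$ (relevant only when $m$ is close to $\log(1/\epsilon)$) together with the approximation loss incurred by our \NearMIS\ subroutines. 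Chaining the two factor-$2$ bounds gives $C(L^{SS}_\epsilon) \le (4+e^{-\Omega(m)})\cdot C(L^*)$.

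For (c), the running time is dominated by the call to Algorithm~\ref{alg:ss_matrix}, which costs $O(n^2 f(W,\epsilon,\delta))$ by Theorem~\ref{thm:main_ss}; the subsequent contraction and the call to $\SSMatrix$ operate on at most $1/\epsilon$ super-nodes and contribute only lower-order terms that are absorbed into $f(W,\epsilon,\delta)$. I expect the principal obstacle to be the first sub-claim in (b)—enlarging the benchmark class in~\cite{AKMM2020}'s analysis from strongly separating matrices to the broader class of restricted separating matrices—since this requires a non-black-box look at their algorithm and a careful accounting of how the weight-one rows it produces align with those of the benchmark.
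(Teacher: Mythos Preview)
Your proposal is correct and follows essentially the same approach as the paper. The paper's proof (in Appendix~D) decomposes the cost bound via the chain
\[
C(L^{SS}_\epsilon)\;\le\;2\,\ALG_S(a_1^*)\;\le\;2\,(2+e^{-\Omega(m)})\,\OPT_S(a_1^*)\;\le\;(4+e^{-\Omega(m)})\,\OPT_{SS},
\]
where $a_1^*$ is the number of weight-$1$ vectors used by the optimal strongly separating matrix, and $\ALG_S(r),\OPT_S(r),\OPT_{SS}(r)$ denote the costs when the first $r$ columns are reserved for exactly $r$ weight-$1$ vectors and the weight-$0$ vector is disallowed. Your two ``factor-$2$'' steps are exactly the first and (combined) last two inequalities above; your parts (a) and (c) also match the paper.

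One small sharpening: in your first sub-claim you phrase the comparison as \SSMatrix\ versus \emph{any} restricted separating matrix $L'$ on the complete graph over $V_S$. The paper does not prove (and does not need) this; it only shows $C(L^{SS}_\epsilon)\le 2\,\ALG_S(a_1^*)$, i.e.\ \SSMatrix's output against the \emph{specific} greedy separating assignment produced by Algorithm~\ref{alg:ss_matrix} (with the same weight-$1$ budget). That is the content of Lemma~A.5 in \cite{AKMM2020}, which compares greedy strongly-separating to greedy separating on the same ordered node set. Your second step then implicitly identifies $L'=\ALG_S(a_1^*)$ anyway (via the \cite{neurips18} charging you describe), so the argument goes through, but you should not try to establish the stronger ``any $L'$'' statement. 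The intermediate inequality $\OPT_S(a_1^*)\le\OPT_{SS}(a_1^*)$ is immediate since every strongly separating matrix is separating.
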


\section{Hyperfinite Graphs : Better Guarantees}
In this section, we show that when $G$ has maximum degree $\Delta$ and satisfies hyperfinite property, we can obtain the same approximation guarantees, but the number of edges that are not (strongly) separated can be improved to $\eps \cdot n \cdot \Delta$. Informally, a hyperfinite graph can be partitoned into small connected components by removing $\eps \cdot n$ edges for every $\epsilon > 0$. Bounded degree hyperfinite graphs include the class of bounded-degree graphs with excluded minor~\citep{alon1990separator}, such as planar graphs, constant tree-width graphs, and also non-expanding graphs~\citep{czumaj2009testing}. We defer the full details of the results to Appendix~\ref{app:hyperfinite}.

\section{Open Questions}
We highlight that in both the settings, although we consider the presence of latents in the system, in this paper, we provide results for learning causal relations among only the observable variables. Identification of latents is an important goal and has been well-studied~\citep{neurips17,kocaoglu2019characterization, AKMM2020} when the objective is to minimize the \emph{number of interventions}. However, in~\cite{AKMM2020}, for the \textit{linear cost model}, the authors argue that there is no good cost lower bound known, even for recovering the observable (rather than ancestral) graph in the presence of latents. This makes the development of algorithms with approximation guarantees in terms of the optimum cost difficult. We view addressing this difficulty as a major open question. 

{ Our results on bounded degree graphs make an additional assumption that the graph is hyperfinite, which gives more structure but still captures many graph families. It is an interesting open question if we can extend them to general sparse graphs. This setting is challenging since even finding a Near-MIS with $\epsilon \cdot |E|$ edges is still open and likely to be hard~\citep{ron2010algorithmic}. We conjecture that if $|E| = O(n)$, a constant approximation for our objectives is not possible (assuming standard complexity theoretic conjectures) if we must separate all but $\epsilon \cdot |E|$ many edges.}

It would also be very interesting to extend our work to the setting where we seek to identify a specific subset of edges of the causal graph, or where certain edges are `more important' than others. We hope that our work is a first step in this direction, introducing the idea of partial recovery to overcome hardness results that rule out non-trivial approximation bounds for full graph recovery in the linear cost model.
\pagebreak

\bibliographystyle{plainnat}
\bibliography{references}

\begin{thebibliography}{41}
\providecommand{\natexlab}[1]{#1}
\providecommand{\url}[1]{\texttt{#1}}
\expandafter\ifx\csname urlstyle\endcsname\relax
  \providecommand{\doi}[1]{doi: #1}\else
  \providecommand{\doi}{doi: \begingroup \urlstyle{rm}\Url}\fi

\bibitem[Addanki et~al.(2020)Addanki, Kasiviswanathan, McGregor, and
  Musco]{AKMM2020}
Raghavendra Addanki, Shiva~Prasad Kasiviswanathan, Andrew McGregor, and Cameron
  Musco.
\newblock Efficient intervention design for causal discovery with latents.
\newblock \emph{International Conference on Machine Learning}, 2020.

\bibitem[Alon et~al.(1990)Alon, Seymour, and Thomas]{alon1990separator}
Noga Alon, Paul Seymour, and Robin Thomas.
\newblock A separator theorem for graphs with an excluded minor and its
  applications.
\newblock In \emph{Proceedings of the twenty-second annual ACM symposium on
  Theory of computing}, pages 293--299, 1990.

\bibitem[Arora and Chlamtac(2006)]{arora2006new}
Sanjeev Arora and Eden Chlamtac.
\newblock New approximation guarantee for chromatic number.
\newblock In \emph{Proceedings of the thirty-eighth annual ACM symposium on
  Theory of computing}, pages 215--224, 2006.

\bibitem[Bardenet and Maillard(2015)]{Bardenet_2015}
Rémi Bardenet and Odalric-Ambrym Maillard.
\newblock Concentration inequalities for sampling without replacement.
\newblock \emph{Bernoulli}, 21\penalty0 (3):\penalty0 1361–1385, Aug 2015.
\newblock ISSN 1350-7265.
\newblock \doi{10.3150/14-bej605}.
\newblock URL \url{http://dx.doi.org/10.3150/14-BEJ605}.

\bibitem[Bareinboim and Pearl(2016)]{bareinboim2016causal}
Elias Bareinboim and Judea Pearl.
\newblock Causal inference and the data-fusion problem.
\newblock \emph{Proceedings of the National Academy of Sciences}, 113\penalty0
  (27):\penalty0 7345--7352, 2016.

\bibitem[Blum and Karger(1997)]{blum1997n314}
Avrim Blum and David Karger.
\newblock An $\text{O}(n^{3/14})$-coloring algorithm for 3-colorable graphs.
\newblock \emph{Information processing letters}, 61\penalty0 (1):\penalty0
  49--53, 1997.

\bibitem[Cormen et~al.(2009)Cormen, Leiserson, Rivest, and
  Stein]{cormen2009introduction}
Thomas~H Cormen, Charles~E Leiserson, Ronald~L Rivest, and Clifford Stein.
\newblock \emph{Introduction to Algorithms}.
\newblock MIT press, 2009.

\bibitem[Czumaj et~al.(2009)Czumaj, Shapira, and Sohler]{czumaj2009testing}
Artur Czumaj, Asaf Shapira, and Christian Sohler.
\newblock Testing hereditary properties of nonexpanding bounded-degree graphs.
\newblock \emph{SIAM Journal on Computing}, 38\penalty0 (6):\penalty0
  2499--2510, 2009.

\bibitem[Eberhardt(2007)]{eberhardt2007causation}
Frederick Eberhardt.
\newblock Causation and intervention.
\newblock \emph{PhD Thesis, Carnegie Mellon University}, 2007.

\bibitem[Eberhardt and Scheines(2007)]{eberhardt2007interventions}
Frederick Eberhardt and Richard Scheines.
\newblock Interventions and causal inference.
\newblock \emph{Philosophy of Science}, 74\penalty0 (5):\penalty0 981--995,
  2007.

\bibitem[Feige et~al.(1996)Feige, Goldwasser, Lov{\'a}sz, Safra, and
  Szegedy]{feige1996interactive}
Uriel Feige, Shafi Goldwasser, Laszlo Lov{\'a}sz, Shmuel Safra, and Mario
  Szegedy.
\newblock Interactive proofs and the hardness of approximating cliques.
\newblock \emph{Journal of the ACM (JACM)}, 43\penalty0 (2):\penalty0 268--292,
  1996.

\bibitem[Frank(1975)]{frank1975some}
Andras Frank.
\newblock Some polynomial algorithms for certain graphs and hypergraphs.
\newblock In \emph{Proceedings of the 5th British Combinatorial Conference},
  1975.

\bibitem[Goldreich et~al.(1998)Goldreich, Goldwasser, and Ron]{GGR98}
Oded Goldreich, Shafi Goldwasser, and Dana Ron.
\newblock Property testing and its connection to learning and approximation.
\newblock \emph{Journal of the ACM (JACM)}, 45\penalty0 (4):\penalty0 653--750,
  1998.

\bibitem[Hassidim et~al.(2009)Hassidim, Kelner, Nguyen, and
  Onak]{hassidim2009local}
Avinatan Hassidim, Jonathan~A Kelner, Huy~N Nguyen, and Krzysztof Onak.
\newblock Local graph partitions for approximation and testing.
\newblock In \emph{IEEE Symposium on Foundations of Computer Science}, pages
  22--31, 2009.

\bibitem[Hauser and B{\"u}hlmann(2012)]{hauser2012characterization}
Alain Hauser and Peter B{\"u}hlmann.
\newblock Characterization and greedy learning of interventional markov
  equivalence classes of directed acyclic graphs.
\newblock \emph{Journal of Machine Learning Research}, 13\penalty0
  (Aug):\penalty0 2409--2464, 2012.

\bibitem[Hauser and B{\"u}hlmann(2014)]{hauser2014two}
Alain Hauser and Peter B{\"u}hlmann.
\newblock Two optimal strategies for active learning of causal models from
  interventional data.
\newblock \emph{International Journal of Approximate Reasoning}, 55\penalty0
  (4):\penalty0 926--939, 2014.

\bibitem[Heinze-Deml et~al.(2018)Heinze-Deml, Maathuis, and
  Meinshausen]{heinze2018causal}
Christina Heinze-Deml, Marloes~H Maathuis, and Nicolai Meinshausen.
\newblock Causal structure learning.
\newblock \emph{Annual Review of Statistics and Its Application}, 5:\penalty0
  371--391, 2018.

\bibitem[Hoeffding(1994)]{hoeffding1994probability}
Wassily Hoeffding.
\newblock Probability inequalities for sums of bounded random variables.
\newblock In \emph{The Collected Works of Wassily Hoeffding}, pages 409--426.
  Springer, 1994.

\bibitem[Hoyer et~al.(2009)Hoyer, Janzing, Mooij, Peters, and
  Sch{\"o}lkopf]{hoyer2009nonlinear}
Patrik~O Hoyer, Dominik Janzing, Joris~M Mooij, Jonas Peters, and Bernhard
  Sch{\"o}lkopf.
\newblock Nonlinear causal discovery with additive noise models.
\newblock In \emph{Advances in neural information processing systems}, pages
  689--696, 2009.

\bibitem[Hyttinen et~al.(2013)Hyttinen, Eberhardt, and
  Hoyer]{hyttinen2013experiment}
Antti Hyttinen, Frederick Eberhardt, and Patrik~O Hoyer.
\newblock Experiment selection for causal discovery.
\newblock \emph{The Journal of Machine Learning Research}, 14\penalty0
  (1):\penalty0 3041--3071, 2013.

\bibitem[Karger et~al.(1994)Karger, Motwani, and Sudan]{karger1994approximate}
David Karger, Rajeev Motwani, and Madhu Sudan.
\newblock Approximate graph coloring by semidefinite programming.
\newblock In \emph{Proceedings 35th Annual Symposium on Foundations of Computer
  Science}, pages 2--13. IEEE, 1994.

\bibitem[Kocaoglu et~al.(2017{\natexlab{a}})Kocaoglu, Dimakis, and
  Vishwanath]{icml17}
Murat Kocaoglu, Alex Dimakis, and Sriram Vishwanath.
\newblock Cost-optimal learning of causal graphs.
\newblock In \emph{Proceedings of the 34th International Conference on Machine
  Learning-Volume 70}, pages 1875--1884. JMLR. org, 2017{\natexlab{a}}.

\bibitem[Kocaoglu et~al.(2017{\natexlab{b}})Kocaoglu, Shanmugam, and
  Bareinboim]{neurips17}
Murat Kocaoglu, Karthikeyan Shanmugam, and Elias Bareinboim.
\newblock Experimental design for learning causal graphs with latent variables.
\newblock In \emph{Advances in Neural Information Processing Systems}, pages
  7018--7028, 2017{\natexlab{b}}.

\bibitem[Kocaoglu et~al.(2019)Kocaoglu, Jaber, Shanmugam, and
  Bareinboim]{kocaoglu2019characterization}
Murat Kocaoglu, Amin Jaber, Karthikeyan Shanmugam, and Elias Bareinboim.
\newblock Characterization and learning of causal graphs with latent variables
  from soft interventions.
\newblock In \emph{Advances in Neural Information Processing Systems}, pages
  14346--14356, 2019.

\bibitem[Lindgren et~al.(2018)Lindgren, Kocaoglu, Dimakis, and
  Vishwanath]{neurips18}
Erik Lindgren, Murat Kocaoglu, Alexandros~G Dimakis, and Sriram Vishwanath.
\newblock Experimental design for cost-aware learning of causal graphs.
\newblock In \emph{Advances in Neural Information Processing Systems}, pages
  5279--5289, 2018.

\bibitem[Loh and B{\"u}hlmann(2014)]{loh2014high}
Po-Ling Loh and Peter B{\"u}hlmann.
\newblock High-dimensional learning of linear causal networks via inverse
  covariance estimation.
\newblock \emph{The Journal of Machine Learning Research}, 15\penalty0
  (1):\penalty0 3065--3105, 2014.

\bibitem[Parviainen and Koivisto(2011)]{parviainen2011ancestor}
Pekka Parviainen and Mikko Koivisto.
\newblock Ancestor relations in the presence of unobserved variables.
\newblock In \emph{Joint European Conference on Machine Learning and Knowledge
  Discovery in Databases}, pages 581--596. Springer, 2011.

\bibitem[Pearl(2009)]{pearl}
Judea Pearl.
\newblock \emph{Causality: Models, Reasoning, and Inference}.
\newblock Cambridge university press, 2009.

\bibitem[Richardson and Spirtes(2002)]{richardson2002}
Thomas Richardson and Peter Spirtes.
\newblock Ancestral graph markov models.
\newblock \emph{Ann. Statist.}, 30\penalty0 (4):\penalty0 962--1030, 08 2002.
\newblock \doi{10.1214/aos/1031689015}.

\bibitem[Ron(2010)]{ron2010algorithmic}
Dana Ron.
\newblock Algorithmic and analysis techniques in property testing.
\newblock 2010.

\bibitem[Shanmugam et~al.(2015)Shanmugam, Kocaoglu, Dimakis, and
  Vishwanath]{shanmugam2015learning}
Karthikeyan Shanmugam, Murat Kocaoglu, Alexandros~G Dimakis, and Sriram
  Vishwanath.
\newblock Learning causal graphs with small interventions.
\newblock In \emph{Advances in Neural Information Processing Systems}, pages
  3195--3203, 2015.

\bibitem[Shimizu et~al.(2006)Shimizu, Hoyer, Hyv{\"a}rinen, and
  Kerminen]{shimizu2006linear}
Shohei Shimizu, Patrik~O Hoyer, Aapo Hyv{\"a}rinen, and Antti Kerminen.
\newblock A linear non-gaussian acyclic model for causal discovery.
\newblock \emph{Journal of Machine Learning Research}, 7\penalty0
  (Oct):\penalty0 2003--2030, 2006.

\bibitem[Shpitser and Pearl(2006)]{shpitser2006identification}
Ilya Shpitser and Judea Pearl.
\newblock Identification of joint interventional distributions in recursive
  semi-markovian causal models.
\newblock In \emph{Proceedings, The Twenty-First National Conference on
  Artificial Intelligence and the Eighteenth Innovative Applications of
  Artificial Intelligence Conference, July 16-20, 2006, Boston, Massachusetts,
  {USA}}, pages 1219--1226, 2006.

\bibitem[Silva et~al.(2006)Silva, Scheine, Glymour, and
  Spirtes]{silva2006learning}
Ricardo Silva, Richard Scheine, Clark Glymour, and Peter Spirtes.
\newblock Learning the structure of linear latent variable models.
\newblock \emph{Journal of Machine Learning Research}, 7\penalty0
  (Feb):\penalty0 191--246, 2006.

\bibitem[Spirtes et~al.(2000)Spirtes, Glymour, Scheines, Heckerman, Meek,
  Cooper, and Richardson]{spirtes2000causation}
Peter Spirtes, Clark~N Glymour, Richard Scheines, David Heckerman, Christopher
  Meek, Gregory Cooper, and Thomas Richardson.
\newblock \emph{Causation, prediction, and search}.
\newblock MIT press, 2000.

\bibitem[Tian and Shpitser(2003)]{tian2003identification}
Jin Tian and Ilya Shpitser.
\newblock On the identification of causal effects.
\newblock 2003.

\bibitem[Verma and Pearl(1992)]{ic}
Thomas Verma and Judea Pearl.
\newblock An algorithm for deciding if a set of observed independencies has a
  causal explanation.
\newblock In \emph{Uncertainty in artificial intelligence}, pages 323--330.
  Elsevier, 1992.

\bibitem[Wigderson(1983)]{wigderson1983improving}
Avi Wigderson.
\newblock Improving the performance guarantee for approximate graph coloring.
\newblock \emph{Journal of the ACM (JACM)}, 30\penalty0 (4):\penalty0 729--735,
  1983.

\bibitem[Williamson and Shmoys(2011)]{williamson2011design}
David~P Williamson and David~B Shmoys.
\newblock \emph{The design of approximation algorithms}.
\newblock Cambridge university press, 2011.

\bibitem[Zhang(2008{\natexlab{a}})]{zhang2008causal}
Jiji Zhang.
\newblock Causal reasoning with ancestral graphs.
\newblock \emph{Journal of Machine Learning Research}, 9\penalty0
  (Jul):\penalty0 1437--1474, 2008{\natexlab{a}}.

\bibitem[Zhang(2008{\natexlab{b}})]{zhang2008completeness}
Jiji Zhang.
\newblock On the completeness of orientation rules for causal discovery in the
  presence of latent confounders and selection bias.
\newblock \emph{Artificial Intelligence}, 172\penalty0 (16-17):\penalty0
  1873--1896, 2008{\natexlab{b}}.

\end{thebibliography}
\pagebreak
\appendix
\section{Discussions}\label{app:disc}
\subsection{Semi-Markovian Assumption}
Our assumption that each latent only affects two observable variables is commonly known as the semi-Markovian condition and is standard in the literature, e.g., see~\cite{tian2003identification, neurips17}. In fact, using (pairwise) conditional independence tests, it is impossible to discover latent variables that affect more than two observables, even with unlimited interventions. Consider observables $x, y, z$ and a latent $l_{xyz}$ that is a parent of all them. If we test whether $x, y$, and $z$ are all pairwise independent and they all turn out to be false, we can’t distinguish the cases where a single latent $l_{xyz}$ or three separate latents $l_{xy}, l_{yz}$ and $l_{xz}$ are causing this non-independence. Thus, we cannot remove the assumption without changing our intervention model or making more restrictive assumptions. As an example, ~\cite{silva2006learning} considers the case when latents affect more than two observables, however, they make very strong assumptions -- that there are no edges between observables, and each observable has only one latent parent.

\subsection{Hardness of Independent Set}

For the linear cost model, the problem of learning a causal graph was introduced in ~\cite{icml17}. It was shown recently that the problem of obtaining an optimum cost set of interventions is NP-hard \citep{neurips18}. Under causal sufficiency (no latents), it is well known that the undirected graph (also called Essential Graph~\citep{zhang2008completeness, neurips18}) recovered after running the $IC^*$ algorithm is chordal. Further, an intervention set which is a separating set system (Def. \ref{def:sss}) for the Essential Graph of $G$  is both necessary and sufficient~\citep{eberhardt2007causation, shanmugam2015learning} for learning the causal graph. 

The authors of \cite{neurips18} give a greedy algorithm to construct a $2$-approximation to the optimal cost separating set system of the essential graph. Their algorithms requires at each step finding a maximum independent set in $G$ and peeling it off the graph, and is the basis for our approach in Section \ref{sec:graph}. Since $G$ is  \textit{chordal}
, there is an algorithm for finding an exact maximum independent set in polynomial time~\citep{frank1975some}. However, without the assumption of causal sufficiency, we cannot directly extend their algorithm, since  finding a maximum independent set in a general graph $G$ is NP-hard~\citep{cormen2009introduction}. Moreover, finding an approximate independent set within a factor of $n^{\epsilon}$ for any $\epsilon > 0$ in polynomial time is also not possible unless $NP \subseteq BPP$~\citep{feige1996interactive}.

\section{Missing Details From Section~\ref{sec:ss}}\label{app:ss}
\subsection{$2\log n$ Approximation Algorithm for Separating Set System}
In this section, we show that the algorithm presented in section~\ref{sec:ss} obtains a $2\log n$-optimal separating set system for a given graph $G$. To do so, we first make the following two simple claims. 
Let $\mathcal{S}^* = \{ S_1, S_2, \cdots S_m \}$ be the minimum cost separating set system for $G$ and $I$ denote the maximum cost independent set in $G$.

\begin{definition}{(Vertex Cover).} A set of nodes $S$ is a vertex cover for the graph $G(V, E)$, if for every edge $(u, v) \in E$, we have $\{ u, v\} \cap S \neq \phi$.
\label{def:vc}
\end{definition}

\begin{claim}\label{cl:vc}
The set of vertices in $V \setminus I$ forms a minimum weighted vertex cover for $G$.
\end{claim}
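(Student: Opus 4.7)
The plan is to prove Claim~\ref{cl:vc} by verifying the two defining properties of a minimum weighted vertex cover separately: (a) $V \setminus I$ actually covers every edge of $G$, and (b) no other vertex cover has strictly smaller total cost. Both steps rely on the duality between independent sets and vertex covers via complementation, so the whole argument reduces to exploiting this complementation together with the optimality of $I$.

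First I would establish the covering property by contrapositive. Suppose, for contradiction, that some edge $(u,v) \in E$ has neither endpoint in $V \setminus I$. Then both $u$ and $v$ lie in $I$, which contradicts $I$ being an independent set (Section~\ref{sec:prelim}, the definition of independent set requires $E[I] = \emptyset$). Hence every edge must have at least one endpoint in $V \setminus I$, so by Definition~\ref{def:vc} the set $V \setminus I$ is a vertex cover for $G$.

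Next I would prove minimality by again using complementation. Let $X$ be any vertex cover of $G$. Then $V \setminus X$ must be an independent set: if $(u,v) \in E$ with $u, v \in V \setminus X$, then neither endpoint of $(u,v)$ lies in $X$, contradicting the fact that $X$ is a vertex cover. Since $I$ is a \emph{maximum cost} independent set, we have $C(V \setminus X) \le C(I)$, which rearranges to
\begin{equation*}
C(X) = C(V) - C(V \setminus X) \ge C(V) - C(I) = C(V \setminus I).
\end{equation*}
Thus no vertex cover has cost strictly less than $C(V \setminus I)$, so $V \setminus I$ is a minimum weighted vertex cover.

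I do not anticipate a genuine obstacle here; the claim is a direct instance of the classical complementation duality between independent sets and vertex covers, adapted to weighted costs. The only care needed is to use the maximality of $I$ in the correct direction when bounding an arbitrary vertex cover from below, which the display above handles.
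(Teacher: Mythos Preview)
Your proof is correct and follows essentially the same approach as the paper: both use the complementation duality between independent sets and vertex covers, deriving $C(X) = C(V) - C(V\setminus X) \ge C(V) - C(I)$ for any vertex cover $X$ and noting that $V\setminus I$ attains equality. Your version is slightly more explicit in verifying that $V\setminus I$ is actually a vertex cover and that $V\setminus X$ is independent, but the argument is the same.
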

\begin{proof}
Suppose $X$ denote a minimum weighted vertex cover in $G$, then, $V \setminus X$ is an independent set in $G$. We have $C(X) = C(V) - C(V\setminus X) \geq C(V) - C(I)$ as $I$ is maximum cost independent set. Observe that the vertex cover given by $X := V\setminus I$ satisfies the above equation with equality. Hence, the claim.
\end{proof}
\begin{claim}\label{cl:is}
$C(\mathcal{S}^*) \geq C(V \setminus I)$
\end{claim}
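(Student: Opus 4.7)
The plan is to prove Claim~\ref{cl:is} by showing that the set of vertices which receive a non-zero row vector in the separating matrix $L^*$ associated with $\mathcal{S}^*$ must form a vertex cover of $G$, and then invoking Claim~\ref{cl:vc} to conclude.

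First I would set up notation: let $L^* \in \{0,1\}^{n \times m}$ be the separating matrix corresponding to $\mathcal{S}^*$, and let $X = \{v \in V : \|L^*(v)\|_1 \geq 1\}$ be the set of vertices assigned a non-zero row. The key observation is that for any edge $(u,v) \in E$, the separating property requires that $L^*(u) \ne L^*(v)$ and in particular the rows cannot both be the all-zeros vector (two zero rows are identical, hence not separated). Therefore at least one of $u, v$ lies in $X$, so $X$ is a vertex cover of $G$ in the sense of Definition~\ref{def:vc}.

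Next I would lower bound the cost. Since every $v \in X$ contributes $C(v) \cdot \|L^*(v)\|_1 \geq C(v)$ to $C(L^*) = \sum_{v \in V} C(v) \cdot \|L^*(v)\|_1$, we obtain
\[
C(\mathcal{S}^*) \;=\; C(L^*) \;\geq\; \sum_{v \in X} C(v) \;=\; C(X).
\]
By Claim~\ref{cl:vc}, $V \setminus I$ is a minimum weighted vertex cover of $G$, so $C(X) \geq C(V \setminus I)$ for any vertex cover $X$. Chaining the inequalities gives $C(\mathcal{S}^*) \geq C(V \setminus I)$, as desired.

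There is essentially no obstacle here, as the argument is a one-line consequence of the definition of a separating set system plus Claim~\ref{cl:vc}. The only subtle point worth flagging explicitly is the justification that two vertices assigned the all-zeros row cannot be separated, which is what forces $X$ to cover every edge; this is immediate from Definition~\ref{def:sss} since a separating set $S_i$ for an edge $(u,v)$ requires exactly one of its endpoints to lie in $S_i$, i.e., to have a $1$ in coordinate $i$ of its row.
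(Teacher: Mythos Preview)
Your proof is correct and is essentially the same argument as the paper's: the paper observes that the zero-row vertices $I_{L^*}$ form an independent set and bounds $C(L^*)\ge C(V)-C(I_{L^*})\ge C(V)-C(I)$, while you phrase the dual statement that the nonzero-row vertices $X=V\setminus I_{L^*}$ form a vertex cover and invoke Claim~\ref{cl:vc}. The two formulations are complements of one another and the reasoning is identical.
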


\begin{proof}
Let $L^*$ denote optimal separating matrix corresponding to $\mathcal{S}^*$. We can rewrite $C(\mathcal{S}^*)$ in terms of $C(L^*) = \sum_{j=1}^n C(v_j)\norm{L(j)}_1$. It is easy to observe that every node in an independent set of $G$ can be assigned the same vector in a separating matrix. So, nodes with weight zero in $L^*$ are from an independent set (say $I_{L^*}$) in $G$. As weight of the vectors assigned to remaining nodes in $L^*$ is at least $1$, we have $C(L^*) \geq C(V) - C(I_{L^*}) \geq C(V) - C(I)$, using the definition of $I$. 
\end{proof}

Combining Claims~\ref{cl:vc} and ~\ref{cl:is}, we can observe that a good approximation for weighted vertex cover will result in a good approximation for separating set system. There is a well known $2$-approximation algorithm for weighted vertex cover problem using linear programming that runs in polynomial time (Page 10, Theorem 1.6~\cite{williamson2011design}). 


\begin{lemma}
If $m \geq 2\log n$, then, there is an algorithm that returns a separating set system that is $2\log n$-optimal. 
\end{lemma}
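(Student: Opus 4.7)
The plan is to verify two properties of the matrix $L$ output by the algorithm: (i) $L$ is a valid separating matrix for $G$, and (ii) $C(L) \leq 2\log n \cdot C(\mathcal{S}^*)$, where $\mathcal{S}^*$ is the minimum cost separating set system.

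For feasibility of the vector assignment, note that with $m \geq 2\log n$ the number of available weight-$\log n$ binary vectors in $\{0,1\}^m$ is $\binom{m}{\log n} \geq \binom{2\log n}{\log n} \geq 2^{\log n} = n \geq |X|$, so we can indeed assign a distinct weight-$\log n$ vector to every node in $X$. To verify separation, consider any edge $(u,v) \in E$. Since $X$ is a vertex cover, at least one endpoint, say $u$, lies in $X$. Then $L(u)$ is some distinct weight-$\log n$ vector, while $L(v)$ is either the zero vector (if $v \notin X$) or another weight-$\log n$ vector different from $L(u)$ (if $v \in X$). In either case $L(u) \neq L(v)$, so some coordinate $i$ witnesses the separation of $u$ and $v$ via the corresponding set $S_i$.

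For the cost bound, the construction gives $C(L) = \sum_{j=1}^n C(v_j)\|L(j)\|_1 = \log n \cdot C(X)$, since nodes in $V\setminus X$ contribute zero while each $v \in X$ contributes $C(v) \cdot \log n$. The LP-rounding algorithm of~\cite{williamson2011design} returns $X$ with $C(X) \leq 2 \cdot C(X^*)$, where $X^*$ is a minimum weighted vertex cover. By Claim~\ref{cl:vc}, $V \setminus I$ is itself a minimum weighted vertex cover, so $C(X^*) = C(V \setminus I)$, and by Claim~\ref{cl:is}, $C(V \setminus I) \leq C(\mathcal{S}^*)$. Chaining these inequalities yields $C(L) \leq 2\log n \cdot C(\mathcal{S}^*)$, as required.

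I do not anticipate any substantial obstacle: the two structural ingredients driving the bound---that $V \setminus I$ is itself a minimum-weight vertex cover, and that every valid separating system has cost at least $C(V \setminus I)$---are already established in Claims~\ref{cl:vc} and~\ref{cl:is}. The only remaining verification is the routine combinatorial feasibility estimate $\binom{2\log n}{\log n} \geq 2^{\log n} = n$ (which follows from the standard bound $\binom{2k}{k} \geq 2^k$), together with the direct cost arithmetic above.
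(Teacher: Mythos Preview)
Your proof is correct and follows essentially the same approach as the paper: both use the 2-approximate vertex cover $X$, assign distinct weight-$\log n$ vectors to nodes in $X$ and the zero vector to $V\setminus X$, verify feasibility via $\binom{2\log n}{\log n} \geq n$, and chain Claims~\ref{cl:vc} and~\ref{cl:is} to get $C(L) = \log n \cdot C(X) \leq 2\log n \cdot C(\mathcal{S}^*)$. Your write-up is slightly more explicit in verifying that $L$ actually separates every edge, but the argument is identical in substance.
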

\begin{proof}
Let $X$ denote the minimum weighted vertex cover which is a $2$-approximation obtained using the well known linear programming relaxation~\citep{williamson2011design}. In our algorithm, we assign every node in $X$ with a unique vector of weight $\log n$. This is feasible because the set of nodes in $V\setminus X$ form an independent set, and ${m \choose \log n} \geq {2\log n\choose \log n} \geq n$. Combining Claims~\ref{cl:vc} and ~\ref{cl:is}, we have
$$C(L) = \log n \  C(X) \leq 2 \log n \ C(V \setminus I) \leq 2\log n \  C(\mathcal{S}^*). $$
\end{proof}

\subsection{Algorithms for $\epsilon$-(Strongly) Separating Set System when $m \geq {1}/{\epsilon}$}

\noindent \textbf{$\epsilon$-Separating Set System}.
For $\epsilon$-separating set system on $G(V, E)$, we first find a $2$-approximate minimum weighted vertex cover $X$ using the well-known linear programming based algorithm from~\cite{williamson2011design} (Refer Page 10, Theorem 1.6 in~\cite{williamson2011design}). We then partition the nodes of  $X$ randomly into $1/\eps$ groups of expected size $\eps \cdot n$. We then assign the same weight $1$ vector to nodes in the same group and different weight $1$ vectors to nodes in different groups. This is possible since $m \ge 1/\epsilon$. It is easy to see that the total number of edges that are not separated  on expectation is $\epsilon |E| \leq \eps n^2$. For the remaining nodes in $V \setminus X$ that form an independent set, we assign the zero vector. Therefore, total cost of $\eps$-separating set system is given by $C(X)$. From Claim~\ref{cl:vc}, we have $C(X) \leq 2 C(V \setminus I)$ where $I$ is maximum weighted independent set in $G$. Using Claim~\ref{cl:is}, we have $C(X) \leq 2 C(\mathcal{S}^*)$ where $\mathcal{S}^*$ is optimal separating set system for $G$. Therefore, we get within a $2$ factor of the optimal separating set system.

\vspace*{1ex}
\noindent \textbf{$\epsilon$-Strongly Separating Set System}.
For $\epsilon$-strongly separating set system on $H(V, E)$, we partition the nodes randomly into $1/\eps$ groups of expected size $\epsilon \cdot n$. We then assign the same weight $1$ vector to nodes in the same group and different weight $1$ vectors to nodes in different groups. This is possible since $m \ge 1/\epsilon$. It is easy to see that the total number of edges that are not strongly separated on expectation is $\eps |E| \leq \eps n^2 $. As every vector assigned to a node in a valid strongly separating matrix should have weight at least $1$, this results in an $\epsilon$-strongly separating matrix, and the corresponding set system with optimal cost.

\section{Missing Details From Section~\ref{sec:graph}}\label{app:graph}

In this section, we refer to the conditional independence test described in section~\ref{sec:prelim} as CI-test.
\begin{claim}\label{cl:SS_necessary_appendix}
Suppose a set on interventions $\mathcal{S} = \{ S_1, S_2, \cdots S_m \}$ is used for learning the edges of an undirected causal graph $G$. Then, under the assumptions of section \ref{sec:prelim}, $\mathcal{S}$ is a separating set system for $G$.
\end{claim}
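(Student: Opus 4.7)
The plan is to prove the contrapositive: if $\mathcal{S}$ is not a separating set system for $G$, then there exist two distinct causal graphs consistent with $G$'s skeleton that induce identical CI-test outcomes under every intervention in $\mathcal{S}$, so the edge directions of $G$ cannot be fully recovered from $\mathcal{S}$.

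The first step is the construction. Suppose $\mathcal{S}$ fails to separate some edge $(u,v) \in E$; then for every $S_i \in \mathcal{S}$ either $\{u,v\} \subseteq S_i$ or $\{u,v\} \cap S_i = \emptyset$. Pick any orientation $\mathcal{G}_1$ of $G$ (respecting acyclicity and the semi-Markovian latent structure assumed in Section~\ref{sec:prelim}) in which, without loss of generality, the edge is oriented $u \to v$. Form $\mathcal{G}_2$ from $\mathcal{G}_1$ by flipping this single edge to $v \to u$, keeping every other directed edge and every latent-to-observable edge unchanged. Acyclicity of $\mathcal{G}_2$ can be arranged by choosing $u \to v$ to be a ``reversible'' edge of $\mathcal{G}_1$ (e.g., an edge inside a $2$-cycle of the essential graph of $G$); if no such edge exists for this $(u,v)$ we instead use two SCMs over the \emph{same} DAG whose interventional distributions are indistinguishable on $\mathcal{S}$, which is the standard faithfulness-based indistinguishability construction.

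The second step is to verify that $\mathcal{G}_1$ and $\mathcal{G}_2$ produce the same answer to every CI query $v_i \indep v_j \mid \doo(S)$ for $S \in \mathcal{S}$. I would split into two cases based on the two possibilities above. When $\{u,v\} \subseteq S$, both $u$ and $v$ are clamped to fixed values, so the incoming edge between them is severed in \emph{both} mutilated graphs $\mathcal{G}_1[\doo(S)]$ and $\mathcal{G}_2[\doo(S)]$; the two mutilated graphs are therefore identical, hence induce the same interventional distribution and the same CI answers. When $\{u,v\} \cap S = \emptyset$, neither endpoint is intervened on, the edge is present in both mutilated graphs, and by construction the rest of the graph is unchanged, so the $d$-connection relations on $V$ agree between $\mathcal{G}_1[\doo(S)]$ and $\mathcal{G}_2[\doo(S)]$; invoking faithfulness (assumed to hold in every interventional distribution, per Section~\ref{sec:prelim}) gives matching CI-test outcomes.

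The third step is the conclusion: since $\mathcal{G}_1$ and $\mathcal{G}_2$ are indistinguishable under the CI-tests obtained from the interventions in $\mathcal{S}$, no algorithm acting solely on these CI outcomes can orient $(u,v)$, contradicting the hypothesis that $\mathcal{S}$ suffices to learn all edge directions of $G$. The main technical obstacle, and the part that needs the most care, is the case $\{u,v\} \cap S = \emptyset$: I must argue that flipping only the edge $(u,v)$ preserves all $d$-connection relations relevant to the CI queries, which uses the semi-Markovian assumption (latents attach to exactly two observables, so flipping $u \to v$ does not change any latent-mediated path) together with faithfulness to translate the graph-level equivalence into equal CI-test outputs.
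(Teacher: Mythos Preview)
Your approach is more ambitious than the paper's. For necessity the paper gives a one-line argument: it cites \cite{eberhardt2007causation}, who proved that a separating system is necessary under causal sufficiency, and then observes that dropping causal sufficiency only enlarges the class of admissible causal structures, so any indistinguishable pair of latent-free DAGs remains valid and indistinguishable in the semi-Markovian setting. (The paper's proof also argues sufficiency, which you omit; the statement as phrased asks only for necessity, so this is a minor mismatch.)

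Your direct construction has a genuine gap in the case $\{u,v\}\cap S=\emptyset$. The assertion that flipping only the edge $(u,v)$ preserves all $d$-connection relations is false: reversing a single edge can create or destroy a collider and thereby change pairwise marginal independences. Concretely, take the path skeleton $a\!-\!u\!-\!v\!-\!b$ with $\mathcal S=\{\emptyset\}$. In $\mathcal G_1=a\to u\to v\to b$ one has $a\not\indep b$, but flipping $(u,v)$ to get $\mathcal G_2=a\to u\leftarrow v\to b$ yields $a\indep b$; your two graphs are distinguished by the observational test. The semi-Markovian assumption does not help here, since the obstruction is a collider among observables, not a latent-mediated path. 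Your acyclicity fallback (``two SCMs over the same DAG'') is also ill-posed: models over the same DAG cannot differ in the orientation of $(u,v)$, so that route cannot produce the indistinguishable pair you need. A correct direct argument would require a more careful choice of the two orientations (in the path example, reversing \emph{all} edges works), or---far more simply---the paper's reduction to the causally-sufficient case.
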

\begin{proof}
First, we show that when $\mathcal{S}$ is a separating set system for $G$, we can recover the directions of $G$. Consider an edge $(v_i, v_j) \in G$ and let $S_k \in \mathcal{S}$ be such that $v_i \in S_k $ and $v_j \not \in S_k$. As $\mathcal{S}$ is a separating set system, we know that such a set $S_k$ exists for every edge in $G$. Consider the CI-test between $v_i$ and $v_j$ in the interventional distribution $\doo(S_k)$. If the test returns that $v_i  \indep v_j \mid \doo(S_k)$, then, we infer $v_i \rightarrow v_j$, otherwise we infer that $v_i \leftarrow v_j$. When we intervene on $v_i$ obtained by $\doo(S_k)$, the latent edges affecting $v_i$ and all other incoming edges to $v_i$ are removed. As we know that there is a causal edge between the two variables, if the independence test returns true, it must mean that there is no incoming edge into $v_i$ from $v_j$.

In~\cite{eberhardt2007causation}, it was shown that a separating set system is necessary for learning the directions among the observable variables assuming causal sufficiency. As we are trying to recover $G$ using interventions, such a condition will also hold for our case that is a generalization when not assuming causal sufficiency.  Hence, the claim. 

\end{proof}

\begin{claim}(Claim \ref{cl:epsSS} restated)
Under the assumptions of Section \ref{sec:prelim}, if $\mathcal{S} = \{ S_1, S_2, \cdots S_m \}$ is an $\epsilon$-separating set system for $G$ , $\mathcal{S}$ suffices to $\epsilon$-approximately learn $G$. 
\end{claim}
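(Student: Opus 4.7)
The plan is to reduce this to the argument already used in the proof of Claim~\ref{cl:SS_necessary_appendix}, applied only to the edges of $G$ that happen to be separated by $\mathcal{S}$. Let $L \in \{0,1\}^{n \times m}$ be the matrix associated with $\mathcal{S}$, and partition the edge set as $E = E_{\mathrm{sep}} \cup E_{\mathrm{un}}$, where
\[
E_{\mathrm{sep}} = \{(v_i,v_j) \in E : L(i) \ne L(j)\}, \qquad E_{\mathrm{un}} = \{(v_i,v_j) \in E : L(i) = L(j)\}.
\]
By Definition~\ref{def:espss}, $|E_{\mathrm{un}}| < \epsilon n^2$. We will show that the directions of every edge in $E_{\mathrm{sep}}$ can be recovered using $\mathcal{S}$, so the recovered subset $\widetilde E := E_{\mathrm{sep}}$ satisfies $|E \setminus \widetilde E| = |E_{\mathrm{un}}| \le \epsilon n^2$, which is exactly the requirement of Definition~\ref{def:approx}.

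The key step is the edge-by-edge direction recovery for $E_{\mathrm{sep}}$. Fix $(v_i, v_j) \in E_{\mathrm{sep}}$; since $L(i) \ne L(j)$, there exists an index $k \in [m]$ with (w.l.o.g.) $v_i \in S_k$ and $v_j \notin S_k$. As in the proof of Claim~\ref{cl:SS_necessary_appendix}, perform the CI-test on $v_i, v_j$ in the interventional distribution $\doo(S_k)$: under the semi-Markovian assumption and faithfulness stated in Section~\ref{sec:prelim}, intervening on $v_i$ severs all incoming edges into $v_i$ (including any edges from latents that are shared between $v_i$ and $v_j$), so $v_i \indep v_j \mid \doo(S_k)$ holds iff there is no directed edge $v_j \to v_i$. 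Since $(v_i, v_j)$ is known to be an edge of $G$, this test unambiguously distinguishes $v_i \to v_j$ from $v_j \to v_i$. Doing this for every edge in $E_{\mathrm{sep}}$ recovers all of their directions.

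The only step that requires any care is verifying that the CI-test argument of Claim~\ref{cl:SS_necessary_appendix} still goes through verbatim when we have only separated, rather than strongly separated, edges, and in particular that no spurious dependencies are introduced by latents into the argument; this is handled exactly by the semi-Markovian assumption. The unseparated edges in $E_{\mathrm{un}}$ are simply left with their directions unidentified, and the $\epsilon$-separating property bounds their count by $\epsilon n^2$. Hence $\mathcal{S}$ $\epsilon$-approximately learns $G$, as claimed.
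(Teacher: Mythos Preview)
Your proposal is correct and takes essentially the same approach as the paper's own proof: partition the edges of $G$ into those separated by $\mathcal{S}$ and those not, bound the latter by $\epsilon n^2$ using Definition~\ref{def:espss}, and recover the direction of each separated edge via the same CI-test argument from Claim~\ref{cl:SS_necessary_appendix}. The paper additionally remarks that directions of unseparated edges cannot be recovered from $\mathcal{S}$, but this is not needed for the sufficiency statement and your omission of it is harmless.
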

\begin{proof}
Given $\mathcal{S}$ denotes an $\epsilon$-separating set system for $G(V, E)$.
So, there are at most $\epsilon n^2$ edges $(u, v) \in E$ such that for all $i \in [m]$, either $\{u, v\} \cap S_i = \phi$ or $\{u, v\} \cap S_i = \{u, v\}$. For every such edge, any intervention on a set in $\mathcal{S}$, say $S_i$ cannot recover the direction from a CI-test $u \indep v \mid \doo(S_i) ?$ because for both the cases $u \leftarrow v$ or $u \rightarrow v$, the CI-test returns that they are dependent. For the remaining edges $(u, v) \in E$, in the intervention $S_j$ where $\{ u, v\} \cap S_j  = \{ u \}$, we can recover the direction using the CI-test : $u \rightarrow v$ if $u \notindep v \mid \doo(S_j)$ and $u \leftarrow v$ otherwise. From Def.~\ref{def:eps_learn}, we have that $\mathcal{S}$ $\epsilon$-approximately learns $G$. 
\end{proof}

\begin{claim}(Claim~\ref{cl:approx_WIS} restated)
Suppose $S$ is an independent set in $G$ with cost $C(S) \geq \rho C(V)$, then, there exists an independent set $S' \subseteq S$ such that $C(S') \geq \rho(1-2\gamma) C(V)$ and $S' \cap V_i = S \cap V_i$ for all $i \in F_{(\gamma,\rho)}$.
\label{cl:approx_WIS_appendix}
\end{claim}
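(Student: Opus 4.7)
The plan is to construct $S'$ explicitly by deleting from $S$ exactly the nodes that lie in the \emph{bad} partitions, i.e.\ set
\[
S' := S \cap \bigcup_{i \in F_{(\gamma,\rho)}} V_i = \bigcup_{i \in F_{(\gamma,\rho)}} (S \cap V_i).
\]
Then $S' \subseteq S$ is automatically an independent set, and by construction $S' \cap V_i = S \cap V_i$ for every $i \in F_{(\gamma,\rho)}$. So the only nontrivial part is the lower bound on $C(S')$, which amounts to bounding from above the mass of $S$ concentrated in the bad partitions.

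The key calculation I would carry out is a per-partition cost comparison using the geometric bucketing. For any bad partition $i \notin F_{(\gamma,\rho)}$, by definition $|V_i \cap S| < \gamma \rho |V_i|$. Every node $v \in V_i$ satisfies $(1+\gamma)^{i-1} \leq C(v) < (1+\gamma)^i$, hence
\[
C(S \cap V_i) \;<\; \gamma \rho |V_i| \cdot (1+\gamma)^i \;\leq\; \gamma \rho (1+\gamma) \cdot C(V_i),
\]
using $C(V_i) \geq |V_i| \cdot (1+\gamma)^{i-1}$ on the right. Summing over the bad partitions (which are disjoint and contained in $V$),
\[
C(S \setminus S') \;=\; \sum_{i \notin F_{(\gamma,\rho)}} C(S \cap V_i) \;<\; \gamma \rho (1+\gamma) \sum_{i \notin F_{(\gamma,\rho)}} C(V_i) \;\leq\; \gamma \rho (1+\gamma) \cdot C(V).
\]

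Combining with the hypothesis $C(S) \geq \rho C(V)$ gives
\[
C(S') \;=\; C(S) - C(S \setminus S') \;\geq\; \rho C(V) - \gamma \rho (1+\gamma) C(V) \;=\; \rho\bigl(1 - \gamma(1+\gamma)\bigr) C(V).
\]
Finally, assuming $\gamma \leq 1$ (which we may take WLOG since for larger $\gamma$ the claim becomes vacuous), $\gamma(1+\gamma) \leq 2\gamma$, yielding $C(S') \geq \rho(1 - 2\gamma) C(V)$ as required.

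There is no real obstacle here — the argument is a routine two-line accounting once the right $S'$ is chosen. The only tiny subtlety to watch is the factor $(1+\gamma)$ coming from the geometric bucket width, which is why the bound is $2\gamma$ rather than $\gamma$; this is also what motivates the constant choices later in Algorithm~\ref{alg:mis_tester} (e.g.\ the $3\gamma$ slack in the output guarantee of a $(\rho,3\gamma,\epsilon)$-independent set).
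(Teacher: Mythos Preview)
Your proposal is correct and follows essentially the same argument as the paper: define $S'$ as the restriction of $S$ to the good partitions, bound the lost cost in each bad partition by $\gamma\rho(1+\gamma)C(V_i)$ via the bucket width, sum to at most $\gamma\rho(1+\gamma)C(V)$, and subtract. The only cosmetic difference is that you route the bound through $C(V_i)$ whereas the paper writes the intermediate step directly as $\gamma\rho\sum_{i\notin F_{(\gamma,\rho)}}|V_i|(1+\gamma)^i$; the final inequality $\rho(1-\gamma(1+\gamma))C(V)\ge\rho(1-2\gamma)C(V)$ is identical.
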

\begin{proof}
Construct $S'$ using $(\gamma, \rho)$-good partitions of $V$. For every $i \in F_{(\gamma, \rho)}$, include $S \cap V_i$ in $S'$. Therefore, we have 
\begin{align*}
    C(S') &= C(S) - \sum_{i \not \in F_{(\gamma, \rho)}} C(S \cap V_i) \\
          &\geq \rho C(V) - \gamma \rho \sum_{i \not \in F_{(\gamma, \rho)}} |V_i| (1+\gamma)^i \\
          &\geq \rho C(V) - \gamma \rho (1+\gamma)\sum_{i \not \in F_{(\gamma, \rho)}} |V_i| (1+\gamma)^{i-1}\\ 
          &\geq \rho C(V) - \gamma \rho (1+\gamma) C(V) \\
          &\geq \rho(1-2\gamma) C(V).
\end{align*}
\end{proof}

\begin{lemma}(Lemma~\ref{lem:eps_representative} restated)
If  $t = O(\frac{k}{\eps \gamma \rho} \log \frac{4k}{\eps \delta})$ nodes are uniformly sampled from each partition $V_i$ to give $\tilde{V}_i$, with probability $1-\delta$, there exists an $\epsilon/2k$-IS representative subset $R$ such that, for every $i \in F_{(\gamma,\rho)} $, $|\tilde{V}_i \cap R| = \gamma \rho t/2$.
\label{lem:eps_representative_appendix}
\end{lemma}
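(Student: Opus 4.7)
The plan is to use the probabilistic method: I will argue that with probability $\geq 1-\delta$ over the random draws of $\tilde V_1,\ldots,\tilde V_k$, it is possible to pick, for each good partition $i\in F_{(\gamma,\rho)}$, a subset $R_i\subseteq \tilde V_i\cap S$ of size exactly $\gamma\rho t/2$ so that $R=\bigcup_{i\in F_{(\gamma,\rho)}} R_i$ is $\epsilon/(2k)$-IS representative. The argument splits naturally into (a) showing each candidate $R_i$ is large enough to even exist, and (b) showing the resulting $R$ covers all but $\epsilon n/(2k)$ of the ``high-degree'' nodes in good partitions.

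For (a), fix $i\in F_{(\gamma,\rho)}$. Since $|S\cap V_i|\geq \gamma\rho|V_i|$ and $\tilde V_i$ is a uniform sample of $t$ nodes from $V_i$, $\mathbb{E}|\tilde V_i\cap S|\geq \gamma\rho t$. A standard Chernoff bound together with the choice $t = \Theta\!\bigl(k\log(k/(\epsilon\delta))/(\epsilon\gamma\rho)\bigr)$ gives $|\tilde V_i\cap S|\geq \gamma\rho t/2$ with failure probability $\leq \delta/(2k)$; a union bound over the at most $k$ good partitions yields this for all $i\in F_{(\gamma,\rho)}$ with probability $\geq 1-\delta/2$.

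For (b), call $v\in\bigcup_{i\in F_{(\gamma,\rho)}}V_i$ \emph{heavy} if $|\Gamma(v)\cap S|> \epsilon n/(2k)$. Since the total $S$-mass in bad partitions is bounded by $\gamma\rho n$ (exactly as in the proof of Claim~\ref{cl:approx_WIS_appendix}), any heavy $v$ either has $|\Gamma(v)\cap S\cap V_j|$ significant in some good partition $j$, or else its heaviness is mostly contributed by $v$'s neighbors in $S$ inside bad partitions. In the first case, by pigeonhole over the $k$ partitions there is a good $i^*(v)$ with $|\Gamma(v)\cap S\cap V_{i^*(v)}|\geq \tau := \bigl(\epsilon n/(2k)-\gamma\rho n\bigr)/k$, and with $t$ as above, $\tilde V_{i^*(v)}$ contains at least one such neighbor with failure probability $\leq e^{-\Omega(t\tau/n)}\leq \delta/(2kn)$. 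Union-bounding over heavy $v$'s then says that, with probability $\geq 1-\delta/2$, the \emph{full} sets $\tilde V_i\cap S$ cover all heavy $v$'s of the first type. For the second type, a double-counting argument bounds their number by $O(\gamma\rho n\cdot n/(\epsilon n/(2k))) = O(k\gamma\rho n/\epsilon)$, which (by the algorithm's choice of parameters) is absorbed into the $\epsilon n/(2k)$ slack.

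Finally, I trim each $\tilde V_i\cap S$ down to exactly $\gamma\rho t/2$ nodes: I pick the $R_i$'s greedily to maximize the number of heavy $v$'s covered, and show via an averaging/LP-relaxation argument that such a trim keeps the uncovered count within the $\epsilon n/(2k)$ budget. Combining (a) and (b) by a union bound gives overall failure probability $\delta$. The main obstacle is the trimming step: going from the ``use all of $\tilde V_i\cap S$'' regime, where coverage is easy, down to exactly $\gamma\rho t/2$ elements per good partition, while still not losing more than $\epsilon n/(2k)$ additional heavy vertices. This is where the factor $t=\Theta(k\log(k/(\epsilon\delta))/(\epsilon\gamma\rho))$ is tight — it is chosen so that each heavy $v$ of the first type has not merely one but $\Omega(\log n)$ neighbors in $\tilde V_i\cap S$ with high probability, so that a constant-fraction subsample still hits at least one of them and allows the greedy/random trim to succeed on all but a $\delta$-fraction of the outcomes.
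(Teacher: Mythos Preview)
Your part (a) matches the paper exactly. The problem is entirely in part (b), and specifically in the trimming step, which is where your proposal actually breaks.

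\textbf{Why the trimming argument fails.} You argue that after showing the \emph{full} sets $\tilde V_i\cap S$ cover the heavy vertices, you can trim each one down to exactly $\gamma\rho t/2$ elements because ``each heavy $v$ has $\Omega(\log n)$ neighbors in $\tilde V_i\cap S$, so a constant-fraction subsample still hits at least one of them.'' But this is not a constant-fraction subsample. Since $|\tilde V_i\cap S|\approx t\cdot|S\cap V_i|/|V_i|$, it can be as large as $t$ (whenever $V_i\subseteq S$), and you must cut it down to $\gamma\rho t/2$, i.e.\ keep only a $\gamma\rho/2$ fraction. With $\gamma=O(\epsilon/W)$ this fraction is far from constant, so even $\Omega(\log n)$ neighbors can all be lost. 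The greedy/LP-relaxation patch you sketch does not repair this: you would be solving a set-cover--type problem per partition with no structural guarantee that the target budget $\gamma\rho t/2$ suffices.

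\textbf{What the paper does instead.} The paper never does a ``cover with everything, then trim'' step. It takes the $U_i$ of size exactly $\gamma\rho t/2$ to \emph{be random} from the outset: since $\tilde V_i$ is a uniform sample from $V_i$, any fixed $\gamma\rho t/2$ of the sampled points that land in $S$ form a uniform subset of $V_i\cap S$ of that size. For a single heavy vertex $v$ it then bounds directly
\[
\Pr[\Gamma(v)\cap R=\emptyset]\ \le\ \prod_{i\in F_{(\gamma,\rho)}}\Bigl(1-\tfrac{|\Gamma(v)\cap S\cap V_i|}{|V_i|}\Bigr)^{\gamma\rho t/2}
\ \le\ \exp\!\Bigl(-\tfrac{\gamma\rho t}{2n}\sum_{i}|\Gamma(v)\cap S\cap V_i|\Bigr)\ \le\ \tfrac{\epsilon\delta}{2k},
\]
using only $|V_i|\le n$ and the choice of $t$. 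Linearity of expectation plus Markov then says at most $\epsilon n/(2k)$ heavy vertices are missed with probability $\ge 1-\delta/2$. No greedy selection, no LP, no deterministic trim: the existence of a good $R$ of the \emph{right size} follows directly from the first-moment bound on the random one. This is the idea your proposal is missing, and it is also why the factor $t=\Theta\bigl(\tfrac{k}{\epsilon\gamma\rho}\log\tfrac{k}{\epsilon\delta}\bigr)$ appears: it is calibrated so that the exponent $\gamma\rho t\cdot \epsilon/(2kn)\cdot n$ is $\Theta(\log(k/\epsilon\delta))$, not so that a later subsample survives.

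As a minor aside, your type~1/type~2 split according to whether a heavy vertex's $S$-neighbors lie mostly in good or bad partitions is more careful than the paper's write-up on this point; but that refinement does not rescue the trimming step, and once you switch to the paper's direct random-$R$ analysis the split becomes unnecessary (the product over good $i$ already aggregates the good-partition contributions).
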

\begin{proof}

Consider a good partition $V_i$ for some $i \in F_{(\gamma,\rho)}$. So,  $|V_i \cap S| \geq \gamma \rho |V_i|$. As $\tilde{V}_i$ consists of $t$ nodes that are uniformly sampled from $V_i$, using Hoeffding's inequality (\cite{Bardenet_2015, hoeffding1994probability}), we know that $|\tilde{V}_i \cap S| \geq \gamma \rho t/2$ with {probability} at least 
$$ 1-\exp(-\gamma \rho t/8) \geq 1-\exp\left(-\frac{k}{\eps}\log \frac{4k}{\epsilon \delta}\right) \geq 1-\frac{\delta}{2k}.$$

Applying union bound, we have for every $i \in F_{(\gamma, \rho)}$, $|\tilde{V}_i \cap S| \geq \gamma \rho t/2$ with probability at least $$ 1-k \frac{\delta}{2k} \geq {1-\frac{\delta}{2}}.$$

Consider the union of all subsets  $U_j \subseteq \tilde{V}_j \cap S$ of good partitions such that $|U_j| = \gamma \rho t/2$, i.e., 
\[R = \bigcup\limits_{j=1 \ \mid \ j \in F_{(\gamma, \rho)}}^k U_j. \] 

We claim that $R$ is a $\eps/2k$-IS representative subset of $V$ by arguing that if $v$ has no neighbours in $R$, then, the degree to $S$ is more than $\epsilon n/2k$ with \textit{low} probability. 

First, consider the case when $v \in S$, then $|\Gamma(v) \cap S| = 0 \text{ and } \Gamma(v) \cap R = \phi$. Suppose  $v \in V_j\setminus S$ for some $j \in F_{(\gamma, \rho)}$ and $|\Gamma(v) \cap S| \geq \eps n/2k$. If $\Gamma(v) \cap R = \phi$, then $\Gamma(v) \cap R \cap V_i = \phi$ for all $i \in F_{(\gamma, \rho)}$. As $R$ is formed using the sampled nodes, we have that every node in $R$ should be from $V_i \setminus (\Gamma(v) \cap S \cap V_i)$ for the condition $\Gamma(v) \cap R = \phi$ to be satisfied. As every element in $R$ is chosen uniformly at random from the respective good partitions independently, we have :
\begin{align*}
\Pr_{\forall i, U_i \sim V_i}[ \forall i : \Gamma(v) \cap R \cap V_i = \phi  \textbf{ and } |\Gamma(v) \cap S| > \epsilon n/2k ] 
&\leq  \Pi_{i \in F_{(\gamma, \rho)}}\left(\frac{|V_i| - |\Gamma(v) \cap S \cap V_i|  }{|V_i|} \right)^{|U_i|}\\
&\leq \exp{\left(-\sum_i \frac{|U_i||\Gamma(v) \cap S \cap V_i|}{|V_i|} \right)}\\
&\leq \exp{\left(-\frac{\gamma \rho t}{n}\sum_i {|\Gamma(v) \cap S \cap V_i|} \right)}\\
&\leq \exp{\left( -\frac{\gamma \rho t}{n} \frac{\eps n}{2k} \right)}\\
&\leq \epsilon \delta/2k. 
\end{align*}

Therefore, on expectation, there are at most $n \cdot \epsilon \delta /4k$ nodes such that the number of neighbours in $S$ is more than $\epsilon n/2k$. Using Markov's inequality, with probability $1-\delta/2$, we have that at most $\epsilon n/2k$ nodes have  number of neighbours in $S$ greater than $\eps n/2k$. Applying union bound, we have with probability $1-\delta$ that $R$ is a $\eps/2k$-IS representative subset.
\end{proof}

\begin{lemma}
Suppose $S$ is an independent set in $G$ with cost $C(S) \geq \rho C(V)$ for some $\rho > 0$ and $\hat Z(\mathcal{U})$ denote the set found by Algorithm~\ref{alg:mis_tester} such that $\mathcal{U}$ is a $\epsilon/2k$-IS representative subset. Then, with probability $1-\delta$, we have $$C(\hat Z(\mathcal{U})) \geq \rho(1-3\gamma) C(V).$$
\label{lem:approx_appendix}
\end{lemma}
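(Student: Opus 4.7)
The plan is to lower bound $C(\hat Z(\mathcal U))$ partition-by-partition, showing that in each good partition $V_i$ the algorithm selects enough low-degree vertices to recover (essentially) the full cost of $S\cap V_i$, and then to combine these lower bounds with Claim~\ref{cl:approx_WIS_appendix} to reach $\rho(1-3\gamma)C(V)$.

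First I would recall from Claim~\ref{cl:approx_WIS_appendix} that the sub-independent set $S'=\bigcup_{i\in F_{(\gamma,\rho)}}(S\cap V_i)$ satisfies $C(S')\ge \rho(1-2\gamma)C(V)$. Since $S$ is independent and, by assumption, $\mathcal U\subseteq S$ (every $U_i$ is contained in $\tilde V_i\cap S$ when $\mathcal U$ is an $\epsilon/2k$-IS representative subset), no vertex in $S$ has a neighbor in $\mathcal U$. Consequently $S\cap V_i \subseteq Z(\mathcal U)\cap V_i$ for every good partition $V_i\in F_{(\gamma,\rho)}$. This places the entire ``target'' $S\cap V_i$ inside the pool from which the algorithm draws $\hat Z_i(\mathcal U)$.

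The key step is to bound the \emph{total degree} of $S\cap V_i$ inside the graph induced on $Z(\mathcal U)$ by at most $\epsilon n^2/k$. I would double-count the edges between $Z(\mathcal U)$ and $S\cap V_i$: every such edge contributes $1$ to the degree of some vertex $u\in Z(\mathcal U)$. By Definition~\ref{def:rep_subset}, apart from at most $\epsilon n/2k$ exceptional vertices of the good partitions, every $u\in\bigcup_{i\in F_{(\gamma,\rho)}}V_i$ with $\Gamma(u)\cap R=\emptyset$ satisfies $|\Gamma(u)\cap S|\le \epsilon n/2k$, where $R$ is the part of $\mathcal U$ in good partitions. Since $u\in Z(\mathcal U)$ implies $\Gamma(u)\cap R=\emptyset$, the non-exceptional vertices contribute at most $n\cdot \epsilon n/2k=\epsilon n^2/2k$ edges to $S\cap V_i$, while the at most $\epsilon n/2k$ exceptional vertices contribute at most $n\cdot \epsilon n/2k=\epsilon n^2/2k$ more. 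Thus $\sum_{v\in S\cap V_i}\deg_{Z(\mathcal U)}(v)\le \epsilon n^2/k$.

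Because the algorithm sorts $Z(\mathcal U)\cap V_i$ in increasing order of degree within $Z(\mathcal U)$ and includes vertices until the sum of degrees reaches $\epsilon n^2/k$, the lowest $|S\cap V_i|$ degrees in $Z(\mathcal U)\cap V_i$ total at most the degrees over $S\cap V_i\subseteq Z(\mathcal U)\cap V_i$, hence at most $\epsilon n^2/k$. Therefore $|\hat Z_i(\mathcal U)|\ge |S\cap V_i|$. Since every node in $V_i$ has cost in $[(1+\gamma)^{i-1},(1+\gamma)^i)$, this cardinality bound yields $C(\hat Z_i(\mathcal U))\ge (1+\gamma)^{-1}\,C(S\cap V_i)\ge (1-\gamma)\,C(S\cap V_i)$. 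Summing over good partitions gives
\[
C(\hat Z(\mathcal U))\;\ge\;(1-\gamma)\sum_{i\in F_{(\gamma,\rho)}}C(S\cap V_i)\;=\;(1-\gamma)\,C(S')\;\ge\;(1-\gamma)\rho(1-2\gamma)\,C(V)\;\ge\;\rho(1-3\gamma)\,C(V),
\]
as required. The $1-\delta$ probability guarantee is inherited from Lemma~\ref{lem:eps_representative_appendix}, which is the only probabilistic ingredient (ensuring such a representative $\mathcal U$ exists among the sampled sets over which the algorithm enumerates). The main obstacle I expect is step three, namely carefully translating the representative-subset guarantee (which bounds $|\Gamma(u)\cap S|$) into a bound on degrees \emph{within $Z(\mathcal U)$}; the double-counting argument above, together with handling the $\epsilon n/2k$ exceptional vertices crudely by $n$, is what delivers the needed $\epsilon n^2/k$ budget and synchronizes the per-partition degree threshold used in the algorithm.
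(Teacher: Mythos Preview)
Your proposal is correct and follows essentially the same approach as the paper: you show $S\cap V_i\subseteq Z(\mathcal U)\cap V_i$, bound the total induced degree of $S\cap V_i$ by $\epsilon n^2/k$ via the representative-subset guarantee (splitting into non-exceptional and exceptional vertices), conclude $|\hat Z_i(\mathcal U)|\ge |S\cap V_i|$, convert this to a cost bound via the $(1+\gamma)$-bucketing, and finish with Claim~\ref{cl:approx_WIS}. If anything, your double-counting in step three is stated a bit more cleanly than the paper's version, which restricts attention to $Z(\mathcal U)\cap V_i$ in the wording but uses the same $\epsilon n^2/k$ arithmetic.
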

\begin{proof}
Consider $\hat Z_i(\mathcal{U})$ for some $i \in F_{(\gamma, \rho)}$ and let $F := \bigcup_{i \in F_{(\gamma, \rho)}} V_i$. In Algorithm~\ref{alg:mis_tester}, we obtained $\hat Z_i(\mathcal{U})$ by including nodes from $Z(\mathcal{U}) \cap V_i$ in the sorted order of degree such that the total degree of nodes in the induced graph $Z(\mathcal{U})$ is bounded by $\epsilon n^2/k$. First, when $\mathcal{U}$ is a $\epsilon/2k$-IS representative subset, we observe that 
$$\mathcal{U}  \subseteq S \cap F \subseteq F \setminus {\Gamma}(S \cap F) \subseteq Z(\mathcal{U}). $$ 

So, $S \cap V_i \subseteq Z(\mathcal{U}) \cap V_i$. From Lemma~\ref{lem:eps_representative}, we have, for every node in $Z(\mathcal{U}) \cap V_i$ except for $\epsilon/2k$ many, the maximum degree to $S \cap V_i$ is at most $\epsilon n /2k$, and the remaining nodes can have a maximum degree of $n$. Combining these statements, we have that the total degree incident on the nodes in $S\cap V_i$ from the nodes $Z(\mathcal{U}) \cap V_i$ is at most 
$$ \frac{\epsilon n}{2k} \cdot |Z(\mathcal{U}) \cap V_i| + n \cdot \frac{\epsilon n}{2k}   \leq \frac{\eps n^2}{k}.$$ 

As we include nodes in $\hat Z_i(\mathcal{U})$ until sum of degrees is $\epsilon n^2/k$, we have that the size of $\hat Z_i(\mathcal{U}_i)$ will only be more than the size of $S \cap V_i$ and satisfies $|\hat Z_i(\mathcal{U})| \geq |S \cap V_i|$. We know that every node in $V_i$ has cost in the range $[(1+\gamma)^{i-1}, (1+\gamma)^i)$, therefore, we have $$C(\hat Z_i(\mathcal{U})) \geq \frac{1}{(1+\gamma)} C(S \cap V_i)$$
\[  \sum_{i \in F_{(\gamma, \rho)}} C(\hat Z_i(\mathcal{U})) \geq \frac{1}{(1+\gamma)} \sum_{i \in F_{(\gamma, \rho)}} C(S \cap V_i) \]

From Claim~\ref{cl:approx_WIS_appendix}, we know 
\begin{align*}
    \sum_{i \in F_{(\gamma, \rho)}} C(\hat Z_i(\mathcal{U})) &\geq \frac{1}{(1+\gamma)} C(S')\\
    &\geq (1-\gamma)(1-2\gamma) \rho C(V) \\
C(\hat Z(\mathcal{U}))    &\geq \rho (1-3\gamma)  C(V).
\end{align*}
\end{proof}

\begin{lemma}
Let $G$ contain an independent set of cost $\rho C(V)$, then, Algorithm~\ref{alg:mis_tester} returns a set of nodes $\hat{Z}(\mathcal{U})$ such that $C(\hat{Z}(\mathcal{U})) \geq \rho(1- 3\gamma) C(V)$ and $|E[\hat{Z}(\mathcal{U})]| \leq \epsilon n^2$ with probability $1-\delta$ and runs in time $O\left( n^2 \exp{ \left(O\left(    \frac{k^2}{\eps} \log \frac{1}{\gamma \epsilon} \log \frac{k}{\epsilon \delta} \right)\right) } \right)$.
\label{lem:weighted_near_MIS_appendix}
\end{lemma}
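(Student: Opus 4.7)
The plan is to combine Lemma~\ref{lem:eps_representative_appendix} and Lemma~\ref{lem:approx_appendix} by showing that the brute-force enumeration in Algorithm~\ref{alg:mis_tester} is guaranteed, in at least one iteration, to encounter an $\epsilon/2k$-IS representative subset. Let $S$ be a maximum-cost independent set with $C(S) \geq \rho C(V)$, and let $F_{(\gamma,\rho)}$ be the corresponding set of good partitions. By Lemma~\ref{lem:eps_representative_appendix}, with probability at least $1-\delta$ the sampled sets $\tilde{V}_i$ contain an $\epsilon/2k$-IS representative subset $R$ with $|\tilde{V}_i \cap R| = \gamma\rho t/2$ for every good partition $i \in F_{(\gamma,\rho)}$. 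Condition on this event henceforth.

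First I would argue that the enumeration hits $R$. The outer loop tries every subset of partition indices $\{V_1,\dots,V_\tau\} \subseteq \{V_1,\dots,V_k\}$, so in some iteration it selects exactly $F_{(\gamma,\rho)}$. Within that iteration, the inner loop tries every $\mathcal{U} = U_1 \cup \cdots \cup U_\tau$ with $U_i \subseteq \tilde{V}_i$ and $|U_i| = \gamma\rho t/2$, so in some sub-iteration it selects $U_i = \tilde{V}_i \cap R$ for each $i$, making $\mathcal{U} = R$. Then Lemma~\ref{lem:approx_appendix} applies directly and gives $C(\hat{Z}(\mathcal{U})) \geq \rho(1-3\gamma)C(V)$, so this $\mathcal{U}$ passes the acceptance check on line~12 of Algorithm~\ref{alg:mis_tester}. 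The algorithm may return on an earlier iteration, but only if the cost condition $C(\hat{Z}(\mathcal{U})) \geq \rho(1-3\gamma)C(V)$ is met then as well, so any returned set satisfies the cost lower bound.

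Second, I would verify the edge bound. By construction, each $\hat{Z}_i(\mathcal{U})$ is formed by including nodes in increasing order of degree in the induced subgraph on $Z(\mathcal{U})$ until the total degree of the selected nodes from $V_i$ reaches $\epsilon n^2/k$. Summing over the at most $k$ good partitions gives a total degree of at most $\epsilon n^2$, which bounds $2|E[\hat{Z}(\mathcal{U})]|$ and therefore yields $|E[\hat{Z}(\mathcal{U})]| \leq \epsilon n^2$ as required.

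Finally, for the running time, I would count the iterations and the per-iteration cost. The outer loop contributes a factor of $2^k$. The inner loop, for a chosen collection of $\tau \leq k$ partitions, enumerates at most $\binom{t}{\gamma\rho t/2}^\tau$ subsets; using $\binom{t}{\gamma\rho t/2} \leq (2e/(\gamma\rho))^{\gamma\rho t/2}$ and $t = O(k\log(k/\epsilon\delta)/(\epsilon\gamma\rho))$, this is at most $\exp\!\bigl(O(k^2\log(k/\epsilon\delta)\log(1/(\gamma\rho))/\epsilon)\bigr)$. Since Algorithm~\ref{alg:near_mis} only invokes \IndepSet{} with $\rho \geq \sqrt{\epsilon}$, we can bound $\log(1/(\gamma\rho)) = O(\log(1/(\gamma\epsilon)))$. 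Each iteration computes $Z(\mathcal{U})$, sorts each $Z(\mathcal{U})\cap V_i$ by induced degree, and assembles $\hat{Z}(\mathcal{U})$, all in $O(n^2)$ time. Combining these yields the stated running time $O\bigl(n^2 \exp(O(k^2/\epsilon \cdot \log(1/(\gamma\epsilon)) \log(k/(\epsilon\delta))))\bigr)$. The main obstacle I anticipate is purely bookkeeping: making the enumeration argument rigorous while ensuring that the algorithm returns \emph{some} valid $\hat{Z}(\mathcal{U})$, rather than one that merely exists; but the acceptance test on line~12 together with the guarantee from Lemma~\ref{lem:approx_appendix} for the representative iteration resolves this cleanly.
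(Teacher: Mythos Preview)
Your proposal is correct and follows essentially the same approach as the paper: invoke Lemma~\ref{lem:eps_representative_appendix} to guarantee a representative subset exists in the sample, argue the brute-force enumeration hits it so Lemma~\ref{lem:approx_appendix} gives the cost bound, verify the edge bound directly from the per-partition degree cap, and compute the running time via $2^k \cdot \binom{t}{\gamma\rho t/2}^k$ together with $\rho \ge \sqrt{\epsilon}$ and an $O(n^2)$ per-iteration cost. If anything, you are slightly more explicit than the paper about why the acceptance test on line~12 ensures any returned set (not just the representative one) satisfies the cost bound.
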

\begin{proof}
As our Algorithm~\ref{alg:mis_tester} selects nodes from each partition $V_i$ such that the total degree of nodes in $\hat Z_i(\mathcal{U})$ in the graph induced by $E[{Z}(\mathcal{U})]$ is at most $\epsilon n^2/k$. Therefore, total degree of nodes in $\hat{Z}(\mathcal{U}) = \bigcup_{i \in F_{(\gamma, \rho)}} \hat Z_i(\mathcal{U})$ is at most $k \cdot \epsilon n^2/k$. Hence, $|E[\hat{Z}(\mathcal{U})]| \leq \epsilon n^2$. From Lemma~\ref{lem:approx_appendix}, we have $C(\hat Z(\mathcal{U})) \geq \rho (1-3\gamma) C(V)$.\\ \\
In Algorithm~\ref{alg:mis_tester}, we iterate over all subsets of the partitions $\{ V_1, V_2, \cdots V_k \}$. Consider a subset $\{ V_1, V_2, \cdots V_\tau \}$ and in each partition, we iterate over all subsets $U_i$ of size $\gamma \rho t/2$. Therefore, total number of subsets $\mathcal{U}$ formed from the union of subsets in each partition $\cup_{i=1}^\tau U_i$ is given by $\binom{t}{\gamma \rho t/2}^\tau$. Using $t = O(\frac{k \log k/\epsilon \delta}{\rho \gamma \epsilon})$ and $\rho \geq \sqrt{\epsilon}$, we have that the total number of iterations is at most

\begin{align*}
    2^k \cdot \binom{t}{\gamma \rho t/2}^k &\leq  2^k \cdot \left( \frac{2te}{\gamma \rho t}\right)^{\gamma \rho t k/2 }\\
    &\leq  2^k \cdot \left( \frac{6}{\gamma \rho} \right)^{\gamma \rho t k/2} \leq \exp{ \left(O\left(    \frac{k^2}{\eps} \log \frac{1}{\gamma \epsilon} \log \frac{k}{\epsilon \delta} \right)\right) }.
\end{align*}

In each iteration, we can find $Z(\mathcal{U})$ in $O(|\mathcal{U}| n)$ time. After that, we calculate the degree of nodes in $Z(\mathcal{U}) \cap V_i$ in the induced sub-graph $E[Z(\mathcal{U})]$ which requires $O({|Z(\mathcal{U})|}^2) = O(n^2)$ running time. Hence, the claim.
\end{proof}

\begin{lemma}
Suppose $S^*$ denotes MIS in $G(V, E)$. Algorithm~\ref{alg:near_mis} returns a set of nodes $S$ such that $C(S) \geq C(S^*)$, $|S| \geq \sqrt{\epsilon} n$ and $|E[S]| \leq \epsilon n^2$ with probability $1-\delta$ and runs in time $$O\left( \frac{n^2 W}{\eps} \log \frac{1}{\eps} \exp{ \left(O\left(    \frac{W^2 \log^2 W}{\eps^3} \log \frac{W}{\epsilon} \log \frac{W \log W \log 1/\eps}{{\epsilon \delta}} \right)\right) } \right).$$

\label{lem:approxMIS_appendix}
\end{lemma}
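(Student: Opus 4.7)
The plan is to analyze the two regimes defined by the loop's termination condition: either the MIS has relatively large cost ($\rho^* := C(S^*)/C(V) \geq \sqrt{\epsilon}$), in which case the while-loop locates a near-optimal independent set via a well-chosen $\rho$, or $\rho^* < \sqrt{\epsilon}$, in which case the fallback $T$ already suffices. In both regimes the edge bound $|E[S]| \leq \epsilon n^2$ comes essentially for free from Lemma~\ref{lem:weighted_near_MIS_appendix} or by a direct counting argument on $T$, so the bulk of the work is proving $C(S) \geq C(S^*)$.

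For $\rho^* \geq \sqrt{\epsilon}$: since $\rho$ starts at $1$ and shrinks by a factor $1/(1+\gamma)$ per iteration with $\gamma = \epsilon/(8W)$, there is an iteration where $\rho \leq \rho^* \leq (1+\gamma)\rho$ while still $\rho \geq \sqrt{\epsilon}$. At this $\rho$ the MIS $S^*$ witnesses an independent set of cost $\geq \rho\, C(V)$, so Lemma~\ref{lem:weighted_near_MIS_appendix} applies and the call to \IndepSet~returns $S$ with $C(S) \geq \rho(1-3\gamma) C(V) \geq \rho^*(1-4\gamma) C(V) \geq C(S^*) - 4\gamma\, nW = C(S^*) - \epsilon n/2$, after substituting $\gamma$ and using $C(V) \leq nW$. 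Since $S_{\epsilon/2}$ consists of $\epsilon n/2$ nodes of cost $\geq 1$ each, $C(S_{\epsilon/2}) \geq \epsilon n/2$, exactly recovering the deficit; thus $C(S \cup S_{\epsilon/2}) \geq C(S^*) \geq C(T)$ and the algorithm returns on this iteration, provided the edge check $|E[S \cup S_{\epsilon/2}]| \leq \epsilon n^2$ holds, which I would verify by a slight tightening of the \NearMIS~guarantee so that the $\epsilon n/2$ extra nodes are absorbed into the error budget.

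For $\rho^* < \sqrt{\epsilon}$: I would argue $C(T) \geq C(S^*)$ unconditionally. If $|S^*| \leq \sqrt{\epsilon} n$ then $T$ dominates $C(S^*)$ since the top $\sqrt{\epsilon}n$ costs majorise any set of size $\leq \sqrt{\epsilon} n$. If $|S^*| > \sqrt{\epsilon} n$, then the average cost in $T$ is at least the overall average, giving $C(T) \geq \sqrt{\epsilon}\, C(V) > \rho^* C(V) = C(S^*)$. Hence if the loop exits without returning, the fallback $T$ satisfies $|T| = \sqrt{\epsilon} n$, $|E[T]| \leq \binom{\sqrt{\epsilon}n}{2} < \epsilon n^2$, and $C(T) \geq C(S^*)$; if the loop does return some $S \cup S_{\epsilon/2}$, the explicit conditional check already ensures $C(S \cup S_{\epsilon/2}) \geq C(T) \geq C(S^*)$ together with the edge bound.

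Finally, I would union-bound the failure probability of \IndepSet~over the $O((W/\epsilon)\log(1/\epsilon))$ loop iterations; with per-call failure $\delta' = \epsilon\delta/(4W\log(1/\epsilon))$ this totals at most $\delta$. The running time is this many iterations times the bound of Lemma~\ref{lem:weighted_near_MIS_appendix} with $\gamma = \epsilon/(8W)$ and $k = \gamma^{-1}\log W = O((W/\epsilon)\log W)$, which after direct substitution matches the stated expression. The main obstacle I foresee is enforcing the size lower bound $|S| \geq \sqrt{\epsilon} n$ on the returned set in the large-$\rho^*$ case, since $|S \cup S_{\epsilon/2}|$ could a priori be only $\Omega(\sqrt{\epsilon} n /W)$: resolving this cleanly likely requires either a padding step that adds arbitrary nodes from $V \setminus (S \cup S_{\epsilon/2})$ until the count reaches $\sqrt{\epsilon}n$ without violating the edge budget, or a sharper size-from-cost argument that exploits the structure produced by \IndepSet.
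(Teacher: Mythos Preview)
Your approach mirrors the paper's, but your case split on $\rho^* := C(S^*)/C(V)$ versus $\sqrt{\epsilon}$ introduces an unjustified step. In the regime $\rho^* \geq \sqrt{\epsilon}$ you write $C(S \cup S_{\epsilon/2}) \geq C(S^*) \geq C(T)$ to conclude the algorithm's cost check passes, but $C(S^*) \geq C(T)$ can fail: if the top $\sqrt{\epsilon}\,n$ nodes have cost near $W$ while the MIS consists mostly of cost-$1$ nodes, then $C(T)$ can exceed $C(S^*)$ even when $\rho^* \geq \sqrt{\epsilon}$. The paper sidesteps this by splitting directly on $C(S^*) < C(T)$ versus $C(S^*) \geq C(T)$. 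In the former case, whatever the algorithm returns---either something from the loop (where the check already enforces cost $\geq C(T)$) or the fallback $T$---has cost $\geq C(T) > C(S^*)$. In the latter case your argument at the well-chosen $\rho$ goes through verbatim and the check is guaranteed to pass. Your proof is easily repaired by adopting this split.

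On the size lower bound you flag as the main obstacle: the paper's resolution is precisely the ``sharper size-from-cost argument'' you anticipate, and it is one line. The algorithm returns $S \cup S_{\epsilon/2}$ only when the explicit check $C(S \cup S_{\epsilon/2}) \geq C(T)$ holds. Since $T$ consists of the $\sqrt{\epsilon}\,n$ highest-cost nodes and every node has cost at least $1$, any set of fewer than $\sqrt{\epsilon}\,n$ nodes has cost strictly below $C(T)$; hence $|S \cup S_{\epsilon/2}| \geq \sqrt{\epsilon}\,n$ automatically, with no padding needed.
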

\begin{proof}
Let $T$ denote the set of $\sqrt{\eps}n$ nodes from $V$ with highest cost. It is easy to observe that $C(T) \geq \sqrt{\eps} \ C(V)$. If $C(S^*) < \ C(T)$, then, Algorithm~\ref{alg:near_mis} outputs the set $T$. Therefore, $$ C(T) > C(S^*) \text{ and }|E[T]| \leq (\sqrt{\eps} n)^2 = \eps n^2.$$
Otherwise, in Algorithm~\ref{alg:near_mis}, we search for MIS with cost $\rho C(V)$  using decreasing powers of $(1+\gamma)$ with the help of the parameter $\rho$ when $\rho \geq \sqrt{\eps}$. If $C(S^*) \geq C(T)$, then, $|S^*| \geq |T| = \sqrt{\eps} n$ and for some $1 \leq j \leq \frac{1}{2\gamma} \log \frac{1}{\eps}$ and $\rho = \frac{1}{(1+\gamma)^{j}}$ (i.e., $\sqrt{\eps} \leq \rho \leq 1$) we have $$\rho C(V) \leq  C(S^*) \leq \rho(1+\gamma)C(V).$$ 
For this value of $\rho$, Algorithm~\ref{alg:mis_tester} returns a set of nodes $S$ such that $|E[S]| \leq \frac{\eps}{8 W} n^2$. We observe that   
\begin{align*}
  C(S) &\geq \frac{1}{(1+\gamma)^j} (1-3\gamma) C(V) \\
       &\geq \frac{1-3\gamma}{1+\gamma} \frac{C(V)}{(1+\gamma)^{j-1}} \geq (1-4\gamma) C(S^*).
\end{align*}
In our call to the Algorithm~\ref{alg:mis_tester} from Algorithm \NearMIS, we set $\gamma = \frac{\eps}{8 W}$.
\begin{align*}
C(S_{\eps/2}) &\geq  \frac{\eps n}{2} \quad \text{(since, cost of a node is at least $1$)}\\
\Rightarrow C(S \cup S_{\eps/2}) &\geq C(S^*) + \frac{\eps n}{2} - \frac{\eps}{2 W} C(S^*)\\
&\geq C(S^*) + \frac{\eps n}{2} - \frac{\eps }{2 W} n \cdot W \geq C(S^*).
\end{align*}
As every node in $S_{\eps/2}$ has degree at most $n$, we have 
\[ |E[S \cup S_{\eps/2}]| \leq \frac{\eps n^2 }{8 W} + \frac{\eps n^2}{2} \leq \eps n^2.\]
As $C(S \cup S_{\eps/2}) \geq C(T)$ where $T$ contains the $\sqrt{\eps} n$ highest cost nodes, we have $|S \cup S_{\eps/2}| \geq \sqrt{\eps} n$. When $C(S^*) \geq C(T)$, we search for the correct value of $\rho$ and for each guess, we call the routine Algorithm~\ref{alg:mis_tester}. In total, the number of calls that are made to Algorithm~\ref{alg:mis_tester} is at most $\frac{1}{2\gamma} \log \frac{1}{\eps}$. However, in each call to Algorithm~\ref{alg:mis_tester}, we fail to output with probability $\delta'$. As we set the failure probability to $\delta' = 2\gamma\delta/\log(1/\epsilon)$, overall the iterations, using union bound, the failure probability is at most $\delta' \cdot \frac{1}{2\gamma} \log \frac{1}{\eps}  = \delta$. 

From Lemma~\ref{lem:weighted_near_MIS_appendix},
Algorithm~\ref{alg:mis_tester} runs in time $O\left( n^2 \exp{ \left(O\left(    \frac{k^2}{\eps} \log \frac{1}{\gamma \epsilon} \log \frac{k}{\epsilon \delta'} \right)\right) } \right)$. Substituting $k = \gamma^{-1} \log W,\delta'$, $\gamma = \frac{\eps}{8 W}$ and for a total of $\frac{1}{2\gamma} \log \frac{1}{\eps}$ calls to Algorithm~\ref{alg:mis_tester}, the running time of Algorithm~\ref{alg:near_mis} is 
\[ 
O\left( \frac{n^2 W}{\eps} \log \frac{1}{\eps} \exp{ \left(O\left(    \frac{W^2 \log^2 W}{\eps^3} \log \frac{W}{\epsilon} \log \frac{W \log W \log 1/\eps}{{\epsilon \delta}} \right)\right) } \right).
\]
\end{proof}

From Lemma~\ref{lem:approxMIS_appendix}, we have that in each iteration, Algorithm~\ref{alg:near_mis} returns a set of nodes $S$ that have a cost $C(S) \geq C(S^*)$ where $S^*$ is the maximum independent set in $G$. In a previous work~\citep{neurips18}, it was shown that by using maximum independent set in each iteration, we obtain a $(2 + \exp{(-\Omega(m))}$-optimal separating set system. Following the exact same analysis, gives us an approximation factor close to $2$. We refer the reader to the analysis in Appendix~\ref{app:2approx}, and give the main statement of the Lemma below.

\begin{lemma}
For any $m \geq \eta \log 1/\epsilon$ for some constant $\eta > 2$, with probability $\ge 1-\delta$, Algorithm~\ref{alg:ss_matrix} returns $L_\epsilon$ with $C(L_\epsilon) \leq (2 + \exp{(-\Omega(m))})  \cdot C(L^*)$, where $L^*$ is the min-cost separating matrix for $G$
\label{lem:2approx}
\end{lemma}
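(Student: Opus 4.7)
The plan is to reduce the analysis to the framework of \cite{neurips18}, which established the $(2+\exp(-\Omega(m)))$-approximation guarantee for the greedy peeling algorithm in the chordal (exact-MIS) setting. The crucial observation is that their argument nowhere uses that the set peeled off at each iteration is an \emph{exact} maximum-cost independent set of the current residual graph; it only uses that the set's cost is at least that of the true MIS in that residual graph. Lemma~\ref{lem:approxMIS_appendix} guarantees exactly this property for the set $S\cup S_{\eps/2}$ returned by Algorithm~\ref{alg:near_mis}, with per-call failure probability $\eps\delta$. Since Algorithm~\ref{alg:ss_matrix} runs for at most $1/\eps$ iterations (because each \NearMIS~returns at least $\sqrt{\eps}n \ge \eps n$ nodes, so $|V|$ drops by this much each round), a union bound over iterations gives overall success probability at least $1-\delta$, yielding the probabilistic guarantee.

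Conditioned on all \NearMIS~calls succeeding, the first step is to recall the lower bound on $C(L^*)$ from \cite{neurips18}: any valid separating matrix $L^*$ induces a coloring of $G$ by the distinct rows it assigns, and the number of nodes receiving a row of weight $w$ is at most $\binom{m}{w}$. Combined with the fact that nodes in a single color class form an independent set of $G$, this bounds the number of nodes that can be ``cheap'' (assigned low weight rows) by the MIS sizes of successive residual graphs. Writing this out yields $C(L^*) \ge \tfrac{1}{2}\sum_i w_i \cdot C(S_i^{\mathrm{MIS}})$ up to an $\exp(-\Omega(m))$ tail term, where $w_i$ is the weight of the $i$th row in the greedy weight enumeration $0,1,1,\dots,2,2,\dots$ and $S_i^{\mathrm{MIS}}$ is the MIS of the $i$th residual graph. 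On the algorithm side, our peeled sets satisfy $C(S_i) \ge C(S_i^{\mathrm{MIS}})$ by Lemma~\ref{lem:approxMIS_appendix}, and they are assigned exactly the greedy weight sequence $w_i$, so $C(L_\eps) = \sum_i w_i \cdot C(S_i)$. A careful charging between these two expressions — essentially the doubling trick of \cite{neurips18} — gives $C(L_\eps) \le (2+\exp(-\Omega(m)))\cdot C(L^*)$.

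The main technical obstacle is verifying that replacing an exact MIS by a near-MIS (which contains up to $\eps^2 n^2$ internal edges per iteration) does not disrupt the cost accounting. This is straightforward once one notices that the bookkeeping in \cite{neurips18} involves only costs of sets and counts of low-weight vectors — the separation property is tracked entirely separately and has already been absorbed into the $\eps n^2$ bound on unseparated edges (summing $\eps^2 n^2$ over $1/\eps$ iterations) preceding the lemma statement. The role of the hypothesis $m \ge \eta\log(1/\eps)$ is to ensure that $\sum_{w\le m/2}\binom{m}{w}$ exceeds the relevant population count, so that the greedy enumeration never dips into the ``heavy tail'' of $\{0,1\}^m$ before terminating; this is exactly where the $\exp(-\Omega(m))$ correction arises, and $\eta$ can be chosen to match the constant implicit in the residual term from \cite{neurips18}. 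With these pieces assembled, the claimed approximation bound follows.
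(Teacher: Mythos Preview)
Your plan is correct and takes essentially the same approach as the paper: both reduce to the analysis of \cite{neurips18}, using only that each peeled set has cost at least that of the true MIS in the current residual graph (supplied by Lemma~\ref{lem:approxMIS_appendix}), together with a union bound over at most $1/\eps$ iterations. One caution on phrasing: the intermediate inequality you write, $C(L^*)\ge\tfrac12\sum_i w_i\,C(S_i^{\mathrm{MIS}})$, does not combine with $C(S_i)\ge C(S_i^{\mathrm{MIS}})$ in the direction you need; the paper (following \cite{neurips18}) instead rewrites $C(L_\eps)=\sum_i\bigl[J(V)-J(A_1\cup\cdots\cup A_i)\bigr]$ as a supermodular chain problem, applies the greedy submodular bound, and handles the weight-$0$ vector separately via a short exchange argument --- but since you explicitly defer to their ``doubling trick,'' this is a presentational rather than substantive gap.
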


\noindent \textbf{Scaling Parameters}. In Algorithm~\ref{alg:ss_matrix}, we pass a scaled value of $\eps$ by setting it to $\eps^2$ when we call Algorithm~\ref{alg:near_mis}, as this ensures that total number of edges returned over $\frac{1}{\eps}$ calls is at most $\eps n^2$. We also set the failure probability for each call as $\eps \delta$, to ensure that over $\frac{1}{\eps}$ calls, total failure probability using union bound is at most $\delta$.

\begin{theorem}(Theorem~\ref{thm:main_ss} restated)
For any $m \geq \eta \log 1/\epsilon$ for some constant $\eta$, with probability $\ge 1-\delta$, Algorithm~\ref{alg:ss_matrix} returns $L_\epsilon$ with 
$C(L_\epsilon) \leq (2  + \exp{(-\Omega(m))})  \cdot C(L^*)$, where $L^*$ is the min-cost separating matrix for $G$. Moreover $L_\epsilon$ $\epsilon$-separates $G$. Algorithm~\ref{alg:ss_matrix} has a running time $O(n^2 f(W, \epsilon, \delta))$ where 
$f(W, \epsilon, \delta) = O\left( \frac{n^2 W}{\eps^2} \log \frac{1}{\eps} \exp{ \left(O\left(    \frac{W^2 \log^2 W}{\eps^6} \log \frac{W}{\epsilon} \log \frac{W \log W \log 1/\eps}{{\epsilon \delta}} \right)\right) } \right).$

\label{thm:main_ss_appendix}
\end{theorem}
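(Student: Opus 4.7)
The plan is to combine the three components already developed in this section: the correctness/cost guarantee for the inner \NearMIS~routine (Lemma~\ref{lem:approxMIS_appendix}), the greedy peeling analysis that yields a near $2$-approximation when each peeled set has cost at least the true MIS cost (Lemma~\ref{lem:2approx}), and a simple accounting argument for the number of non-separated edges. Given the scaling discussed just before the theorem statement, Algorithm~\ref{alg:ss_matrix} invokes $\NearMIS(G,\epsilon^2,\epsilon\delta)$ in every iteration, and each such call is in turn handled by Algorithm~\ref{alg:near_mis} with $\gamma=\Theta(\epsilon^2/W)$ and $k=\gamma^{-1}\log W$.

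First I would verify that the algorithm terminates in at most $1/\epsilon$ iterations and that $L_\epsilon$ is $\epsilon$-separating. Lemma~\ref{lem:approxMIS_appendix} guarantees that every returned set $S$ has $|S|\ge \sqrt{\epsilon^2}\,n=\epsilon n$, so after at most $1/\epsilon$ peels the vertex set is exhausted. The same lemma guarantees $|E[S]|\le \epsilon^2 n^2$, so summing over the $1/\epsilon$ iterations the number of edges whose endpoints receive the same row vector in $L_\epsilon$ is at most $(1/\epsilon)\cdot \epsilon^2 n^2=\epsilon n^2$, which is exactly the requirement of Definition~\ref{def:espss}. The number of distinct vectors consumed is bounded by $1/\epsilon$, and since $m\ge \eta\log(1/\epsilon)$ there are always enough available vectors in $\{0,1\}^m$ of low weight; this is what allows the $2$-approximation argument to go through.

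Next I would handle the failure probability and the cost bound. Each of the $1/\epsilon$ calls to \NearMIS~fails with probability at most $\epsilon\delta$, so by a union bound the whole algorithm succeeds with probability at least $1-\delta$. Conditioned on success, the set returned in iteration $i$ has cost at least the cost of the true MIS of the current residual graph (this is exactly the guarantee of Lemma~\ref{lem:approxMIS_appendix} compared against $T$), which is the hypothesis required by Lemma~\ref{lem:2approx}. Applying Lemma~\ref{lem:2approx} then yields $C(L_\epsilon)\le (2+\exp(-\Omega(m)))\cdot C(L^*)$, which is the desired approximation ratio.

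For the running time, Lemma~\ref{lem:approxMIS_appendix} bounds a single \NearMIS~call by
\[
O\!\left(\tfrac{n^2 W}{\epsilon'}\log\tfrac{1}{\epsilon'}\cdot\exp\!\left(O\!\left(\tfrac{W^2\log^2 W}{\epsilon'^{\,3}}\log\tfrac{W}{\epsilon'}\log\tfrac{W\log W\log(1/\epsilon')}{\epsilon'\delta'}\right)\right)\right),
\]
where $\epsilon'=\epsilon^2$ and $\delta'=\epsilon\delta$ are the scaled parameters passed in. I would substitute these values and observe that the extra $1/\epsilon$ factor from iterating Algorithm~\ref{alg:ss_matrix} gets absorbed into the leading polynomial factor, yielding the stated expression $O(n^2 f(W,\epsilon,\delta))$ with $f$ as in the statement. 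The main obstacle is just bookkeeping: making sure the scaling of $\epsilon\mapsto \epsilon^2$ and $\delta\mapsto \epsilon\delta$ is threaded consistently through both the edge-count accounting and the exponent in the runtime expression, and verifying that the approximation analysis of Lemma~\ref{lem:2approx} (originally stated for exact MIS peels) carries over given that our peels only match the MIS cost without actually being maximum independent sets; this is exactly the extension noted in the paragraph preceding the theorem statement.
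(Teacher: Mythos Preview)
Your proposal is correct and follows essentially the same approach as the paper: invoke Lemma~\ref{lem:approxMIS_appendix} for the per-call guarantees (size $\ge \epsilon n$, $|E[S]|\le \epsilon^2 n^2$, cost $\ge$ MIS cost, and runtime) with the scaled parameters $\epsilon\mapsto\epsilon^2$, $\delta\mapsto\epsilon\delta$, sum over the $1/\epsilon$ iterations for the $\epsilon$-separation and failure probability, and then cite Lemma~\ref{lem:2approx} for the $(2+\exp(-\Omega(m)))$ approximation. The paper's own proof is in fact terser than yours---it is just two sentences pointing to those two lemmas after the scaling discussion---so your write-up is a faithful, more explicit version of the same argument.
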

\begin{proof}
Using all the above scaled parameters, from Lemma~\ref{lem:approxMIS_appendix}, the running time of Algorithm~\ref{alg:ss_matrix} that internally calls Algorithm~\ref{alg:near_mis} for $\frac{1}{\eps}$ number of times, is given by 
\[ 
O\left( \frac{n^2 W}{\eps^2} \log \frac{1}{\eps} \exp{ \left(O\left(    \frac{W^2 \log^2 W}{\eps^6} \log \frac{W}{\epsilon} \log \frac{W \log W \log 1/\eps}{{\epsilon \delta}} \right)\right) } \right).
\]
From Lemma~\ref{lem:2approx}, we have the approximation guarantee.
\end{proof}

\noindent \textbf{Remark}. Observe that our running time is exponential in $1/\eps$ and therefore setting $\eps < 1/n^2$ to get a separating system with all edges separated requires exponential running time. {As we have argued that  finding such a set system with near optimal cost is hard conditioned on the hardness of approximate coloring (Theorem~\ref{thm:ss_hardness}), it is thus also conditionally hard to improve our runtime to be polynomial in $1/\epsilon$.} It is an interesting open question to study the parameterized hardness beyond polynomial factors with respect to $\epsilon$.



By Theorem \ref{thm:main_ss} with $m = O(\log(1/\epsilon))$ interventions we can $\epsilon$-approximately learn any causal graph $G$. For learning the entire graph $G$, $m \geq \log \chi$ interventions are necessary, where $\chi$ is the chromatic number of $G$, since the rows of $L \in \{0,1\}^{n\times m}$ must be a valid coloring of $G$~\citep{neurips18}.
\section{Missing Details From Section~\ref{sec:ancG}}\label{app:ancG}

In this section, we say a pair of nodes $(v_i, v_j)$ share an ancestral relation, if $v_i$ has a directed path to $v_j$ ($v_i$ is an ancestor of $v_j$) or $v_j$ has a directed path to $v_i$ ($v_j$ is an ancestor of $v_i$).

\begin{lemma}
Suppose $\mathcal{S} = \{ S_1, S_2, \cdots, S_m \}$ is a collection of subsets of $V$. If $\Anc(G) \cap H$ is recovered from $H$ using conditional independence tests by intervening on the sets $S_i \in \mathcal{S}$. Then, under the assumptions of section~\ref{sec:prelim}, $\mathcal{S}$ is a strongly separating set system on $H$.
\end{lemma}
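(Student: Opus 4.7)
I will argue the contrapositive: if $\mathcal{S}$ fails to strongly separate some edge of $H$, then there exist two distinct causal graphs consistent with $H$ whose ancestral graphs differ on that edge but which produce identical conditional-independence test results under every intervention in $\mathcal{S}$. This forces any algorithm using only $\mathcal{S}$ to misidentify $\Anc(G) \cap H$ on at least one instance, contradicting the premise.

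\textbf{Setup.} Suppose some edge $(u,v) \in H$ is not strongly separated by $\mathcal{S}$. Unfolding Definition~\ref{def:ssss} and the non-dominating reformulation given after it, this means that, letting $L$ be the matrix associated with $\mathcal{S}$, one of the rows $L(u), L(v)$ has support contained in the other's. Without loss of generality assume $\{i : u \in S_i\} \subseteq \{i : v \in S_i\}$; the symmetric case is handled by swapping the roles of $u$ and $v$ in the construction below.

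\textbf{The two indistinguishable graphs.} Build two semi-Markovian causal graphs $\mathcal{G}_1$ and $\mathcal{G}_2$ on $V \cup L$ whose observable edge sets are both sub-graphs of $H$ and which agree everywhere except on the $(u,v)$ link: in $\mathcal{G}_1$, there is no directed edge between $u$ and $v$, but a single latent $\ell$ with children $u$ and $v$; in $\mathcal{G}_2$, there is a direct edge $u \to v$ and no latent between them. On any other edge of $H$, pick an arbitrary orientation (or omit it entirely) identically in both graphs, so the ancestral relations of every other pair agree. Then $u$ is an ancestor of $v$ in $\mathcal{G}_2$ but not in $\mathcal{G}_1$, so $\Anc(\mathcal{G}_1) \cap H \ne \Anc(\mathcal{G}_2) \cap H$.

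\textbf{Matching CI outcomes.} I will verify that $v_i \indep v_j \mid \doo(S_k)$ has the same truth value in both graphs for every $S_k \in \mathcal{S}$ and every pair $(v_i, v_j)$. Since the two graphs only differ at $(u,v)$, only the pair $(u,v)$ is at issue. Split on $S_k$: (i) if $u \in S_k$, then by the dominating relation $v \in S_k$ as well, both are fixed by the intervention and all their incoming arrows (direct or latent) are severed, so $u \indep v \mid \doo(S_k)$ in both graphs; (ii) if $u \notin S_k$ and $v \in S_k$, then $v$ is intervened, cutting both the latent arrow in $\mathcal{G}_1$ and the direct arrow $u \to v$ in $\mathcal{G}_2$, so $u \indep v \mid \doo(S_k)$ in both; (iii) if $u, v \notin S_k$, neither is intervened, so $u$ and $v$ remain dependent through the latent in $\mathcal{G}_1$ and through the direct edge in $\mathcal{G}_2$ (here I invoke the faithfulness assumption from Section~\ref{sec:prelim}), giving $u \notindep v \mid \doo(S_k)$ in both. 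Since every CI test returns the same answer in $\mathcal{G}_1$ and $\mathcal{G}_2$, no procedure based on $\mathcal{S}$ can distinguish them, contradicting the recoverability of $\Anc(G) \cap H$. The main subtlety worth being careful about is ensuring that the constructed $\mathcal{G}_1, \mathcal{G}_2$ really are both consistent with $H$ (an arbitrary undirected super-graph) and both semi-Markovian, which is why I orient/omit the remaining edges of $H$ identically in the two graphs.
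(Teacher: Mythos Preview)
Your approach---constructing two indistinguishable causal models that differ only at the non-separated edge, one with a latent confounder and one with a direct arrow $u \to v$---is the same as the paper's. The paper makes $\{u,v\}$ its own connected component (all other vertices are isolated), and you should commit to that too: your option to orient the remaining edges of $H$ arbitrarily is unsafe, because then the assertion ``only the pair $(u,v)$ is at issue'' fails. For instance, if some $w$ has $w \to u$ in both constructions, then under the empty intervention $w$ and $v$ are d-separated in $\mathcal{G}_1$ (the only path $w \to u \leftarrow \ell \to v$ has a collider at $u$) but d-connected in $\mathcal{G}_2$ via the chain $w \to u \to v$, so the CI test on $(w,v)$ already distinguishes the two graphs. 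Fixing the ``omit entirely'' choice closes this gap, and your case analysis on $(u,v)$ then goes through exactly as written.
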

\begin{proof}
First, we argue that to recover $\Anc(G) \cap H$ it is sufficient that $\mathcal{S}$ is a strongly separating set system on $H$. Suppose $(v_i, v_j) \in H$ and $v_i, v_j$ share an ancestral relation i.e., either $v_i$ is an ancestor of $v_j$ or $v_j$ is an ancestor of $v_i$. Therefore, $v_i \notindep v_j$ and $(v_i, v_j) \in \Anc(G)$. From Lemma 1 in~\cite{neurips17}, we know that, we can recover the ancestral relation between $v_i$ and $v_j$ using conditional independence tests (or CI-tests) on interventional distributions that strongly separate the two variables $v_i$ and $v_j$. As $\mathcal{S}$ is a strongly separating set system for $H$, we can recover all ancestral relations in $\Anc(G) \cap H$.

Now, we show that a strongly separating set system on $H$ is necessary. Here, we give a proof similar to Lemma A.1 from~\cite{AKMM2020}. Suppose $\mathcal{S}$ is not a strongly separating set system for $H$. If there exists a pair of nodes containing an ancestral relation, say $(v_i, v_j) \in H \cap \Anc(G)$ such that every set $S_k \in \mathcal{S}$ contains none of them, then, we cannot recover the ancestral relation between these two nodes as we are not intervening on either $v_i$ or $v_j$ and the results of an independence test $v_i \indep v_j$ might result in a wrong inference, possibly due to the presence of a latent variable $l_{ij}$ between them. Consider the case when only one of them is present in every set of $\mathcal{S}$. Let $\mathcal{S}$ be such that $\forall S_k \in \mathcal{S} :  S_k \cap \{ v_i, v_j \}  = \{ v_i \} \Rightarrow v_i \in S_k, v_j \not \in S_k $. We choose our graph $G$ to have two components $\{v_i, v_j\}$ and $V \setminus \{v_i, v_j\}$; and include the edge $v_j \rightarrow v_i$ in it. Observe that $v_i \notindep v_j$.  Our algorithm will conclude from the CI-test $v_i \indep v_j \mid \doo(S_k) ?$ that $v_i$ and $v_j$ are independent. However, it is possible that $v_i \notindep v_j$ because of a latent $l_{ij}$ between $v_i$ and $v_j$, but when we do CI-test, we get $v_i \indep v_j \mid \doo(S_k)$ as intervening on $v_i$ disconnects the $l_{ij} \rightarrow v_i$ edge. Therefore, our algorithm cannot distinguish the two cases $v_j \rightarrow v_i$ and $v_i \leftarrow l_{ij} \rightarrow v_j$ without intervening on $v_j$. For every $\mathcal{S}$ that is not a strongly separating set system on $H$, we can provide a graph $G$ such that by intervening on sets in $\mathcal{S}$, we cannot recover $\Anc(G) \cap H$ from $H$ correctly.
\end{proof}

\begin{lemma}(Lemma~\ref{clm:ssssSuffices} restated)
Under the assumptions of Section \ref{sec:prelim}, if $\mathcal{S} = \{ S_1, S_2, \cdots S_m \}$ is an $\epsilon$-strongly separating set system for $H$, $\mathcal{S}$ suffices to $\epsilon$-approximately learn $\Anc(G) \cap H$.
\end{lemma}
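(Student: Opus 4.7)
The plan is to imitate the argument used for Claim~\ref{cl:epsSS}, but replace separation with strong separation and invoke the ancestral recovery primitive of the necessity direction (the analogue of Lemma~1 of~\cite{neurips17}) already used in the preceding lemma. Concretely, let $L \in \{0,1\}^{n\times m}$ be the matrix associated with $\mathcal{S}$, and partition the edges of $H$ into two groups: the set $B$ of ``bad'' pairs $(v_i,v_j)\in E(H)$ for which the rows $L(i),L(j)$ are not non-dominating, and the complementary set $A$ of strongly-separated pairs. By the definition of an $\epsilon$-strongly separating set system (Def.~\ref{def:espss}), we have $|B|<\epsilon n^2$.

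Next I would show that for every pair $(v_i,v_j)\in A$, the intervention set $\mathcal{S}$ lets us correctly decide whether $(v_i,v_j)\in \Anc(G)$, whether $(v_j,v_i)\in \Anc(G)$, or neither. By strong separation there exist $S_k,S_\ell\in \mathcal{S}$ with $v_i\in S_k\setminus S_\ell$ and $v_j\in S_\ell\setminus S_k$. Intervening on $S_k$ removes every incoming edge to $v_i$ (including latent edges into $v_i$), so under the causal Markov and faithfulness assumptions of Section~\ref{sec:prelim}, the CI-test ``$v_i\indep v_j\mid \doo(S_k)$'' identifies whether there is a directed path from $v_i$ to $v_j$; symmetrically, the test on $\doo(S_\ell)$ identifies whether there is a directed path from $v_j$ to $v_i$. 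This is the same inference step justified in the necessity lemma above and matches the standard argument from~\cite{neurips17}; the only novelty is that we apply it only to the pairs in $A$.

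Putting these together, the algorithm outputs the correct orientation for every ancestral edge of $\Anc(G)\cap H$ whose endpoints lie in $A$, and possibly misses the ancestral edges whose endpoints lie in $B$. Since the missed set is a subset of $B$, its size is at most $\epsilon n^2$, which matches the requirement of Definition~\ref{def:approx} for $\epsilon$-approximate learning of $\Anc(G)\cap H$.

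The argument is essentially a bookkeeping reduction, so I do not expect a serious obstacle beyond one subtle point: I must make sure that the recovery step only requires a single strongly-separating pair $(S_k,S_\ell)$ per edge and does not need additional conditioning on other variables in $H\setminus\{v_i,v_j\}$. That is exactly what the semi-Markovian assumption together with faithfulness buys us (latents have degree two, so $\doo(S_k)$ cleanly cuts all back-door paths into $v_i$), and it is the one place where I would write out the CI-test bookkeeping carefully rather than just citing the previous lemma.
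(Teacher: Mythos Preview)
Your proposal is correct and follows essentially the same route as the paper's proof: partition the edges of $H$ into the at most $\epsilon n^2$ pairs that are not strongly separated (your set $B$) and the remaining strongly separated pairs (your set $A$), then invoke the two CI-tests on $\doo(S_k)$ and $\doo(S_\ell)$ to recover the ancestral relation for each pair in $A$, citing the same primitive from~\cite{neurips17}. The paper's version is slightly terser and also sketches why the bad edges cannot be recovered (confusion between $u\leftarrow v$ and a latent $u\leftarrow l_{uv}\rightarrow v$), but otherwise the arguments coincide.
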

\begin{proof}
Given $\mathcal{S}$ denotes an $\epsilon$-strongly separating set system for $H$. So, there are at most $\epsilon n^2$ edges $(u, v) \in H$ such that for all $i \in [m]$, either $\{u, v\} \cap S_i = \phi$ or $\{u, v\} \cap S_i = \{u, v\}$ or $\{ u, v \} \cap S_i = \{ u \}$ (without loss of generality). For every such edge, any intervention on a set in $\mathcal{S}$, say $S_i$ cannot recover the direction from a conditional independence test(CI-test) $u \indep v \mid \doo(S_i)$ because for both the cases $u \leftarrow v$ or $u \leftarrow l_{uv} \rightarrow v$, where $l_{uv}$ is a latent, the CI-test returns that they are independent. Therefore, we cannot recover the ancestral relation (if one exists) between $u, v$ that are not strongly separated in $H$. For edges $(u,v) \in H$ that are strongly separated using $S_i$ and $S_j$, we can recover the ancestral relation using CI-tests $u \indep v \mid \doo(S_i)$ and $u \indep v \mid \doo(S_j)$. From Def.~\ref{def:eps_learn}, we have that $\mathcal{S}$ $\epsilon$-approximately learns $G$. 
\end{proof}

Algorithm \textsc{SSMatrix} from \cite{AKMM2020} gives a $2$-approximation guarantee for the output strongly separating matrix. However, we cannot directly extend the arguments as the guarantee holds when the input graph is complete. We get around this limitation, and show that Algorithm~\ref{alg:sss_matrix} achieves a close to $4$-approximation, by relating the cost of $\eps$-strongly separating matrix returned by \textsc{SSMatrix} on the \textit{supernode} set $V_S$, to the cost of $2$-approximate $\eps$-separating matrix that we find using Algorithm~\ref{alg:ss_matrix}.   

Let $\ALG_{S}$ denote the cost of the objective $\sum_{j=1}^n C(v_j)\, \| L^S_\eps(j) \|_1$ obtained by Algorithm~\ref{alg:ss_matrix} where $L^S_\eps$ is an $\eps$-separating matrix; $\ALG_{SS}$ denote the cost of the objective $\sum_{j=1}^n C(v_j)\, \| L^{SS}_\eps(j) \|_1$ obtained by Algorithm~\ref{alg:sss_matrix} where $L^{SS}_\eps$ is an $\eps$-strongly separating matrix. For the sake of analysis, during assignment of vectors to nodes in $L^S_{\eps}$, we assume that Algorithm~\ref{alg:ss_matrix} only allows vectors of weight at least $1$. As \textsc{SSMatrix} algorithm from~\cite{AKMM2020} assigns vectors with weight atleast $1$ (otherwise it will not be a valid strongly separating matrix), this assumption for $\ALG_S$ helps us in showing a relation between the costs of $L^{SS}_\eps$ and $L^S_\eps$. As that is not sufficient to obtain the claimed guarantee, instead of assigning ${m \choose 1}$ vectors of weight $1$, we constraint it to a fixed number $r \leq {m \choose 1}$. In~\cite{AKMM2020}, Algorithm \textsc{SSMatrix} assigns vectors to $L^{SS}_\eps$ by guessing the exact number of weight $1$ vectors in $\OPT_{SS}$, the parameter $r$ corresponds to this guess. 

Let ${\OPT}_{SS}$ and $\OPT_{S}$ denote optimum objective values associated with strongly separating and separating matrices for a graph $H$. Let $\ALG_{SS}(r)$ denote the cost $C(L^{SS}_{\eps})$ assuming first $r$ columns are used for exactly $r$ weight $1$ vectors during the assignment in $L^{SS}_{\eps}$, and the remaining $m-r$ columns are used for all the remaining vector assignments. Similarly, $\ALG_S(r)$, $\OPT_S(r)$ \text{ and }$\OPT_{SS}(r)$ are defined. 

\begin{lemma}
$\OPT_{S}(r) \leq \OPT_{SS}(r)$ for any $r \geq 0$.
\label{lem:ss_lb_appendix}
\end{lemma}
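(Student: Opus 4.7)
The plan is to show this via a direct containment argument: every strongly separating matrix is also a separating matrix (with the same cost under the linear model), and the $r$-constraint is preserved when we view the same matrix under either role.

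First I would recall the definitions: a strongly separating matrix $L$ for $H$ requires that for every edge $(u,v) \in H$ there exist distinct columns $i,j$ with $L(u,i)=1, L(v,i)=0$ and $L(u,j)=0, L(v,j)=1$, while a separating matrix only requires a single such column. Therefore, any strongly separating matrix for $H$ is, in particular, a separating matrix for $H$: the column $i$ witnessing strong separation already witnesses separation. Further, the cost $C(L) = \sum_{j=1}^n C(v_j)\,\|L(j)\|_1$ depends only on the row weights of $L$, so it is identical regardless of which recovery task $L$ is being used for.

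Next I would handle the $r$-parameter. By definition, $\OPT_{SS}(r)$ is the minimum of $C(L)$ over strongly separating matrices $L$ whose first $r$ columns are used for the $r$ weight-$1$ vector assignments and remaining $m-r$ columns for the rest. Let $L^\star$ be any optimizer achieving $\OPT_{SS}(r)$. Then $L^\star$ is a valid separating matrix for $H$ by the above, with the same column-partitioning into the first $r$ columns (for weight-$1$ assignments) and the remaining $m-r$ columns. Hence $L^\star$ is feasible in the optimization defining $\OPT_S(r)$, and it contributes the same cost $C(L^\star) = \OPT_{SS}(r)$. Taking infimum on the separating side gives $\OPT_S(r) \leq C(L^\star) = \OPT_{SS}(r)$.

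Since this argument is essentially a one-line containment of feasible sets plus cost equality, there is no real obstacle; the only thing to be careful about is to spell out that the structural constraint encoded by $r$ (the number and placement of weight-$1$ assignments) is the same in both optimization problems, so that feasibility is indeed preserved when reinterpreting $L^\star$ as a separating matrix rather than a strongly separating one. This completes the proof.
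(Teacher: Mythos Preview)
Your proposal is correct and follows essentially the same approach as the paper: both argue that any strongly separating matrix for $H$ is in particular a separating matrix with the same cost and the same $r$-constraint, so the optimizer for $\OPT_{SS}(r)$ is feasible for $\OPT_S(r)$. The paper's proof is the terse two-sentence version of your more carefully spelled-out containment argument.
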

\begin{proof}
Observe that any strongly separating matrix for $H$ is also a separating matrix for $H$. Now, consider a strongly separating matrix that achieves cost $\OPT_{SS}(r)$ using $r$ weight $1$ vectors, then, we have $$\OPT_{S}(r) \leq \OPT_{SS}(r).$$
\end{proof}
\begin{lemma} $C(L^{SS}_{\epsilon}) \leq (4+\gamma+ \exp{(-\Omega(m))}) \cdot {\OPT}_{SS}.$ 
\label{lem:approx_SS_appendix}
\end{lemma}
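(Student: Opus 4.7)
The plan is to combine the extended guarantee of \textsc{SSMatrix} from~\cite{AKMM2020} with the near-optimal separating matrix produced by Algorithm~\ref{alg:ss_matrix}, chaining two factor-of-two approximations through the lower bound of Lemma~\ref{lem:ss_lb_appendix}. Recall that Algorithm~\ref{alg:sss_matrix} runs \textsc{SSMatrix} on the at most $1/\epsilon$ supernodes $V_S$, and \textsc{SSMatrix} itself iterates over all possible counts $r$ of weight-one vectors. Let $r^*$ denote the number of weight-one vectors used by an optimal strongly separating matrix for $H$, so that $\OPT_{SS} = \OPT_{SS}(r^*)$; since \textsc{SSMatrix} returns the best choice over all $r$, it suffices to bound $\ALG_{SS}(r^*)$.

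The first step is to extend the main analysis of~\cite{AKMM2020}: when \textsc{SSMatrix} is run with the correct guess $r = r^*$ and only vectors of weight at least one are used, it produces an assignment of cost $\ALG_{SS}(r^*)$ at most twice the cost of any \emph{separating} matrix on $V_S$ that uses no zero vector and exactly $r^*$ weight-one vectors. The original charging argument from~\cite{AKMM2020} is phrased against a strongly separating benchmark on a complete graph; the observation is that once we constrain the benchmark to avoid the zero vector and match the weight-one count, its greedy incremental vector assignment still pays within a factor two of this weaker benchmark, because the per-column potential used in the charging is only sensitive to weight and availability, not to the stronger non-dominating property.

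The second step is to exhibit such a restricted separating matrix on $V_S$ of low cost. Lift the output $L^S_{\epsilon}$ of Algorithm~\ref{alg:ss_matrix} to $V_S$ by noting that all nodes contracted into a single supernode share a common row of $L^S_{\epsilon}$; this yields a valid separating matrix on the contracted graph. A mild modification of the assignment rule inside Algorithm~\ref{alg:ss_matrix}, forbidding the zero vector and fixing the weight-one count to $r^*$, can be shown by the same analysis used for Lemma~\ref{lem:2approx} to cost at most $(2+\gamma+\exp(-\Omega(m))) \cdot \OPT_S(r^*)$, where the extra $\gamma$ slack comes from replacing the exact MIS by a $(0,\epsilon)$-\NearMIS. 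Applying Lemma~\ref{lem:ss_lb_appendix} gives $\OPT_S(r^*) \leq \OPT_{SS}(r^*) = \OPT_{SS}$, and chaining the two factor-two bounds yields $C(L^{SS}_\epsilon) \leq \ALG_{SS}(r^*) \leq 2\,(2+\gamma+\exp(-\Omega(m))) \cdot \OPT_{SS} \leq (4+\gamma+\exp(-\Omega(m))) \cdot \OPT_{SS}$ after absorbing the cross terms into the $\exp(-\Omega(m))$ slack.

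The main obstacle is the first step: adapting the charging scheme of~\cite{AKMM2020} so that the benchmark is only required to \emph{separate} (rather than strongly separate) pairs of supernodes, while still retaining the no-zero and fixed-weight-one structure that the charging argument exploits. A secondary technical point is tracking the slack introduced both by using \NearMIS~inside Algorithm~\ref{alg:ss_matrix} (the $\gamma$ term) and by the restriction on zero/weight-one vectors (which may slightly inflate $\ALG_S(r^*)$ relative to the unrestricted optimum); both of these must be absorbed into the additive $\gamma + \exp(-\Omega(m))$ slack without disturbing the leading constant of $4$.
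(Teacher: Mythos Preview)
Your proposal is correct and follows essentially the same route as the paper: fix $r^\ast$ (the paper calls it $a_1^\ast$) to be the number of weight-one vectors used by the optimal strongly separating matrix, invoke the extended charging argument of \cite{AKMM2020} to get $C(L^{SS}_\epsilon)\le 2\,\ALG_S(r^\ast)$, then bound $\ALG_S(r^\ast)$ by $(2+\exp(-\Omega(m)))\,\OPT_S(r^\ast)$ via the analysis behind Lemma~\ref{lem:2approx}, and close with Lemma~\ref{lem:ss_lb_appendix}. The one structural point you leave slightly implicit is that the benchmark in the first step must separate the \emph{complete} graph on $V_S$ (not merely the contracted graph), which is exactly what $\ALG_S(r^\ast)$ does since each supernode receives a distinct row; the paper makes this explicit. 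A minor arithmetic quibble: $2\cdot(2+\gamma+\exp(-\Omega(m)))=4+2\gamma+2\exp(-\Omega(m))$, so the cross term $2\gamma$ does not literally absorb into $\gamma+\exp(-\Omega(m))$; the paper's own proof in fact never introduces a $\gamma$ term at this stage (Lemma~\ref{lem:2approx} already yields $2+\exp(-\Omega(m))$ without $\gamma$), so the $\gamma$ in the lemma statement is slack that you need not generate.
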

\begin{proof}
First, we note that in \textit{any} strongly separating matrix, for the \emph{non-dominating} property to hold, the support of weight $1$ vectors and the support of vectors of weight $ > 1$ are column disjoint. 
 Suppose $a_1^*$ denote the number of columns of $m$ that are used by $\OPT_{SS}$ for weight $1$ vectors i.e, $\OPT_{SS}(a^*_1) = \OPT_{SS}$. 
 
 Following the exact proof of Lemma A.5 in \cite{AKMM2020} gives us the following guarantee about Algorithm~\ref{alg:sss_matrix} $$ C(L^{SS}_{\eps}) \leq 2\ALG_S(a_1^*).$$

From Theorem~\ref{thm:main_ss} and Lemma~\ref{lem:2approx} in Appendix~\ref{app:2approx} (or the analysis from~\citet{neurips18}) :
$$\ALG_S(a_1^*) \leq (2+\exp{(-\Omega(m))} ) \OPT_S(a_1^*).$$
From Lemma~\ref{lem:ss_lb_appendix}, we know $\OPT_S(a^*_1) \leq \OPT_{SS}(a^*_1)$. Therefore, we have
 \begin{align*}
     \ALG_S(a_1^*) &\leq (2+\exp{(-\Omega(m))})\OPT_{SS}(a_1^*)\\
   &= (2+ \exp{(-\Omega(m))} ) \OPT_{SS}.
 \end{align*}
Hence, the lemma.
\end{proof}

\begin{theorem}(Theorem~\ref{thm:main_sss} restated)
Let $m \geq \eta \log 1/\epsilon$ for some constant $\eta$ and $L^{SS}_{\epsilon}$ be matrix returned by Algorithm~\ref{alg:sss_matrix}. Then with probability $\ge 1-\delta$, $L^{SS}_{\epsilon}$ is an $\epsilon$-strongly separating matrix for $H$ and $C(L_\epsilon^{SS}) \le (4+\exp{(-\Omega(m))})   \cdot C(L^*)$
where $L^*$ is the min-cost strongly separating matrix for $H$. Algorithm~\ref{alg:sss_matrix} runs in time 
$O(n^2 f(W, \epsilon, \delta))$ where $$f(W, \epsilon, \delta) = O\left( \frac{n^2 W}{\eps^2} \log \frac{1}{\eps} \exp{ \left(O\left(    \frac{W^2 \log^2 W}{\eps^6} \log \frac{W}{\epsilon} \log \frac{W \log W \log 1/\eps}{{\epsilon \delta}} \right)\right) } \right).$$
\label{thm:main_sss_appendix}
\end{theorem}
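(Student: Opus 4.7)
The plan is to combine the three results already developed for the two stages of Algorithm~\ref{alg:sss_matrix}: Theorem~\ref{thm:main_ss_appendix} for the inner call to $\epsilon$-\textsc{Separating Matrix}, the standard \textsc{SSMatrix} procedure of \cite{AKMM2020} applied to the contracted graph on $V_S$, and the cost bound of Lemma~\ref{lem:approx_SS_appendix} that already gives the $4+\exp(-\Omega(m))$ factor directly.

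First, I would verify that $L^{SS}_\epsilon$ is an $\epsilon$-strongly separating matrix for $H$. This requires a short accounting of non-strongly-separated edges. An edge $(u,v) \in H$ with both endpoints assigned to the same super node $w_i$ comes from an edge that $L^S_\epsilon$ fails to separate; by Theorem~\ref{thm:main_ss_appendix} there are at most $\epsilon n^2$ such edges with probability $\geq 1-\delta$. For any other edge $(u,v)$, its endpoints lie in distinct super nodes $w_i \neq w_j$, and \textsc{SSMatrix} produces a strongly separating matrix on the complete graph over $V_S$; since every row of $L^{SS}_\epsilon$ equals the row of its super node, $(u,v)$ is strongly separated in $H$. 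Thus all but at most $\epsilon n^2$ edges of $H$ are strongly separated, so $L^{SS}_\epsilon$ is $\epsilon$-strongly separating.

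Second, for the cost bound, I would simply invoke Lemma~\ref{lem:approx_SS_appendix}, which already states $C(L^{SS}_\epsilon) \leq (4+\gamma+\exp(-\Omega(m))) \cdot \OPT_{SS}$ for arbitrarily small $\gamma$, and observe that $\OPT_{SS}$ is exactly $C(L^*)$ for the optimal strongly separating matrix $L^*$ on $H$. Absorbing the $\gamma$ into the $\exp(-\Omega(m))$ slack (by choosing $\gamma$ of that order, which only enters logarithmically through the inner $\epsilon$-\textsc{Separating Matrix} parameters) yields the stated $(4+\exp(-\Omega(m)))$ approximation ratio.

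Third, for the runtime, the work is dominated by the single call to $\epsilon$-\textsc{Separating Matrix} on $H$, whose cost is given in Theorem~\ref{thm:main_ss_appendix} as $O(n^2 f(W,\epsilon,\delta))$ with the stated $f$. Constructing the super nodes $V_S$ is an $O(n)$ pass, and \textsc{SSMatrix} of \cite{AKMM2020} runs on at most $1/\epsilon$ super nodes, which is negligible compared to $f(W,\epsilon,\delta)$. The failure probability $\delta$ arises entirely from the inner Algorithm~\ref{alg:ss_matrix}, so a single union bound over that call suffices.

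The main obstacle is really already resolved by Lemma~\ref{lem:approx_SS_appendix}; the only delicate step in writing the proof is explaining why strong separation on the \emph{complete} graph over $V_S$ (which is what \textsc{SSMatrix} from \cite{AKMM2020} is designed for and analyzed against) correctly translates into $\epsilon$-strong separation on the original graph $H$. The subtlety is that edges missed by the inner separating procedure must be counted carefully and shown to be exactly the edges within a single super node; this is immediate from the construction of $V_S$, where nodes assigned the same row of $L^S_\epsilon$ are collapsed, so the only edges not strongly separated by $L^{SS}_\epsilon$ are precisely those not separated by $L^S_\epsilon$.
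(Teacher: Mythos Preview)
Your proposal is correct and follows essentially the same route as the paper's own proof: invoke Lemma~\ref{lem:approx_SS_appendix} for the cost bound, count the unseparated edges as exactly those internal to the Near-MIS blocks (the paper counts these as $1/\epsilon$ blocks each with at most $\epsilon^2 n^2$ internal edges, while you equivalently appeal to Theorem~\ref{thm:main_ss_appendix}'s guarantee that $L^S_\epsilon$ is $\epsilon$-separating), and observe that the runtime is dominated by the inner call to Algorithm~\ref{alg:ss_matrix}. Your explicit remark that edges across distinct super nodes are automatically strongly separated by \textsc{SSMatrix} on the complete graph over $V_S$, and your handling of the stray $\gamma$ in Lemma~\ref{lem:approx_SS_appendix}, are both slightly more careful than the paper's terse version but do not differ in substance.
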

\begin{proof}
From Lemma~\ref{lem:approx_SS_appendix}, we have $C(L_\epsilon^{SS}) \leq (4+ \exp{(-\Omega(m))}) C(L^*)$. The sets of nodes $S_1, S_2, \cdots S_{1/\epsilon}$ returned by Algorithm~\ref{alg:sss_matrix} are such that every set $S_i$ contains at most $\epsilon^2 n^2$ edges with probability $1-\delta$. Therefore, in total at most $\frac{1}{\epsilon} \epsilon^2 n^2 \leq \epsilon n^2$ edges do not satisfy strongly separating property. 
As \textsc{SSMatrix} has a running time of $O(|V_S|) = O(\frac{1}{\epsilon})$, and using the running time of Algorithm~\ref{alg:ss_matrix} from Theorem~\ref{thm:main_ss_appendix}, our claim follows. 
\end{proof}

\section{Hyperfinite Graphs : Better Guarantees}\label{app:hyperfinite}
In this section, we consider the special case of degree bounded hyperfinite graphs. 

\begin{definition}
A Graph $G(V, E)$ is $(\eps, k)$-hyperfinite if there exists $E' \subseteq E$ and $|E' \setminus E| \leq \eps n$ such that every connected component in the induced subgraph of $E'$ is of size at most $k$. A Graph $G$ is said to be $\tau$-hyperfinite, if there exists a function $\tau : \mathbb{R}^+ \rightarrow \mathbb{R}^+$ such that for every $\eps > 0$, $G$ is $(\eps, \tau(\eps))$-hyperfinite.
\end{definition}

If a $\tau$-hyperfinite graph $G$ has maximum degree $\Delta$,  we give algorithms for (strongly) separating set systems on $G$ that obtain the same approximation guarantees, but the number of edges that are not (strongly) separated at most $\epsilon \cdot n \cdot \Delta$. In order to obtain that, we extend the additive approximation algorithm of ~\cite{hassidim2009local} for finding the maximum independent set to the weighted graphs i.e., when the nodes have costs and return a $\textsc{Near-MIS}$ instead of MIS. First, we define a very important partitioning of the nodes $V$ possible in $\tau$-hyperfinite graphs and give the lemma that describes the guarantees associated with finding the partitions.

\begin{definition}[\cite{hassidim2009local}$ (\eps, k)$ partitioning oracle $\mathcal O$] For a given graph $G(V, E)$ and query $q$ about $v \in V$, $\mathcal O$ returns the partition $P[v] \subseteq V $ containing $v$ that satisfies :  
\begin{enumerate}
    \item for every node $v \in V$, $|P[v]| \leq k$ and $P[v]$ is connected
    \item $|\{ (u, w) \in E \mid P[u] \neq P[w] \}| \leq \eps \cdot n$ with probability $9/10$.
\end{enumerate}
\end{definition}

\begin{lemma}[\cite{hassidim2009local}]\label{lem:partition}
If $G$ is $(\eps, \tau(\eps))$-hyperfinite graph with maximum degree $\Delta$, then, there is a $(\eps \cdot \Delta, \tau(\eps^3/54000))$ partition oracle that answers a given query $q$  with probability $1-\delta$, using a running time $O({2^{\Delta^{O(\tau(\eps^3))}}}/{\delta} \log {1}/{\delta})$.
\end{lemma}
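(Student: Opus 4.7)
The plan is to follow the local partition oracle construction of \cite{hassidim2009local}. The central idea is to simulate a randomized global partitioning algorithm at each query vertex $v$, using only a bounded exploration of $v$'s neighborhood, while ensuring consistency across queries via shared randomness. First, I would describe a (non-local) reference algorithm: draw an independent uniform rank $r_u \in [0,1]$ for every vertex $u$. Process vertices in ascending order of rank. When an unprocessed vertex $u$ is reached, invoke the $(\eps', \tau(\eps'))$-hyperfiniteness of the graph induced on remaining vertices to extract a connected ``blob'' $B(u)$ of size at most $\tau(\eps')$ containing $u$, remove it, and make it a new part. The target blob size $\tau(\eps^3/54000)$ arises because $\eps'$ must be chosen substantially smaller than $\eps$ to absorb the probabilistic losses in the subsequent local simulation.

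Next, I would localize this global procedure into an oracle. On query $v$, the oracle must determine which blob of the reference algorithm contains $v$. Because each blob is connected and has size at most $k = \tau(\eps^3/54000)$, only vertices $u$ with $r_u < r_v$ whose blob could reach $v$ can affect the answer, and these lie within the ball of radius $k$ around $v$. Since $G$ has maximum degree $\Delta$, this ball contains at most $\Delta^{O(k)}$ vertices. The oracle explores this ball, simulates the reference algorithm on it, and returns the blob assigned to $v$. Enumerating consistent blob assignments on the ball gives the $2^{\Delta^{O(\tau(\eps^3))}}$ running time, and boosting the success probability to $1-\delta$ by $O(\log(1/\delta))$ repetitions yields the claimed bound. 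Consistency across queries is automatic because every call uses the same random ranks $r_u$.

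Finally, the $\eps \cdot \Delta \cdot n$ cut bound follows by accounting for cross-edges: each blob extraction in the reference algorithm sheds at most an $\eps'$-fraction of the edges of the residual graph; summed across blobs and multiplied by $\Delta$ to convert vertex-level error into edge-level error, the expected number of cross-edges is $O(\eps' \Delta n)$. Choosing $\eps' = \eps^3/54000$ and applying Markov's inequality yields the claim with probability at least $9/10$. The main obstacle is the careful accounting: the cubing of $\eps$ and the large constant $54000$ come from simultaneously absorbing (i) the randomness of the rank ordering, (ii) the fact that the ball around $v$ may only approximately capture the global behavior, and (iii) the union bound over vertices in the ball. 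Once these probabilistic losses are tracked in sequence, the construction and its guarantees fall out of the framework above.
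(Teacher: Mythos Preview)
The paper does not prove this lemma: it is quoted verbatim from \cite{hassidim2009local} and used as a black box (note the attribution in the lemma header and the absence of any proof environment following it). So there is no ``paper's own proof'' to compare against; your task here would simply be to cite the external result, not to reprove it.

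That said, your sketch is a reasonable high-level summary of the Hassidim--Kelner--Nguyen--Onak construction, with one caveat worth flagging. The stated running time carries a multiplicative $1/\delta$ factor, not merely the $\log(1/\delta)$ you would get from independent repetitions and a majority vote. In the original analysis the $1/\delta$ arises because the depth of the recursive simulation (how many earlier-ranked vertices you must resolve before you can resolve $v$) is bounded only in expectation, and Markov's inequality is used to cap it with probability $1-\delta$; repetition-and-boost does not straightforwardly apply here because different repetitions would use different shared randomness and hence return inconsistent partitions. If you intend to actually reconstruct the proof rather than cite it, that step needs to be corrected.
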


Using Lemma~\ref{lem:partition}, we query every node to obtain the partitioning of $V$ and formalize this in the following corollary.
\begin{corollary}\label{cor:partition}
If $G$ is $(\eps, \tau(\eps))$-hyperfinite graph with maximum degree $\Delta$, then, we can obtain a partitioning of the graph $G$, given by $V_1, V_2, \cdots $ such that with probability $1-\delta$ and a running time of $O(\frac{n}{\delta} \cdot {2^{\Delta^{O(\tau(\eps^3/\Delta^3))}}} \log {1}/{\delta})$, we have :
\begin{enumerate}
    \item For every $i, |V_i| \leq \tau(\eps^3/\Delta^3 54000)$ and $V_i$ is connected
    \item $|\{ (u, w) \mid (u, w) \in E, u \in V_i, w \in V_j \text{ and } i \neq j \}| \leq \eps \cdot n$
\end{enumerate}
\end{corollary}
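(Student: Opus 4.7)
The corollary is essentially a packaging result: it turns the per-vertex query interface of Lemma~\ref{lem:partition} into a single explicit partition of all of $V$, after rescaling the accuracy parameter. The plan is (i) to set the oracle's parameter so that its $(\epsilon' \Delta,\, \tau((\epsilon')^3/54000))$ guarantee matches the $(\epsilon,\, \tau(\epsilon^3/(\Delta^3 \cdot 54000)))$ claim, (ii) to query the oracle at every vertex and assemble the responses, and (iii) to track the resulting running time and failure probability.

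For step (i), set $\epsilon' = \epsilon/\Delta$ in Lemma~\ref{lem:partition}. The resulting oracle is an $(\epsilon'\Delta,\, \tau((\epsilon')^3/54000))$ partition oracle, i.e.\ an $(\epsilon,\, \tau(\epsilon^3/(\Delta^3 \cdot 54000)))$ partition oracle, which is precisely the structural guarantee required by conditions (1) and (2). For step (ii), query the oracle on each vertex $v \in V$ and let $V_i$ range over the distinct sets $P[v]$ returned. Because a partitioning oracle defines, conditioned on its internal randomness, a single fixed partition of $V$ and answers each query consistently with that partition, the $n$ responses merge unambiguously into a partition $V_1, V_2, \ldots$. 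Condition (1) of the corollary (block-size bound and connectedness) then follows deterministically from the first guarantee of the oracle, while condition (2) (cross-partition edges at most $\epsilon n$) holds whenever the oracle's underlying random draw is good, i.e.\ with probability at least $1-\delta$.

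For step (iii), Lemma~\ref{lem:partition} gives a per-query cost of $O(2^{\Delta^{O(\tau((\epsilon')^3))}}/\delta \cdot \log(1/\delta))$ time; substituting $\epsilon' = \epsilon/\Delta$ turns this into $O(2^{\Delta^{O(\tau(\epsilon^3/\Delta^3))}}/\delta \cdot \log(1/\delta))$, and summing over $n$ queries yields the claimed bound $O\bigl(\tfrac{n}{\delta} \cdot 2^{\Delta^{O(\tau(\epsilon^3/\Delta^3))}} \log(1/\delta)\bigr)$.

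The one step that deserves careful justification—and is the main obstacle to the proof being a pure invocation—is the claim that a single $\delta$ suffices globally rather than needing a union bound over the $n$ queries (which would cost an extra factor of $n$ in the running time). This is not literally stated in Lemma~\ref{lem:partition}, but is the standard semantics of partition oracles in the sense of~\citep{hassidim2009local}: the oracle's randomness is fixed up front and determines a single partition of $V$; each query is then a deterministic lookup against that partition. I would verify that the oracle of Lemma~\ref{lem:partition} is of this local-computation form and cite the corresponding statement, which then makes the probability bound applicable uniformly across all $n$ queries and completes the proof.
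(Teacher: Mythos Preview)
Your proposal is correct and follows the same approach as the paper, which offers only the one-line justification ``Using Lemma~\ref{lem:partition}, we query every node to obtain the partitioning of $V$'' before stating the corollary. Your explicit rescaling $\epsilon' = \epsilon/\Delta$ and the discussion of why a single global failure event (rather than a union bound over $n$ queries) suffices are exactly the details the paper leaves implicit.
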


Given a $\tau$-hyperfinite graph with maximum degree $\Delta$, we describe an algorithm that returns a set of nodes that have at most $\eps n \Delta$ edges instead of $\eps n^2$ edges that we saw previously for general graphs $G$. To do so, we build upon the previous result from~\cite{hassidim2009local} that returns a set of nodes $R$ which is an additive $\eps n$ approximation of MIS $S^*$, i.e., $|R| \geq |S^*| - \eps n$. To obtain this, the authors first use the partitioning obtained using Lemma~\ref{lem:partition} and find MIS in each partition separately. They show that ignoring the nodes that are incident on edges across the partitions obtained using Lemma~\ref{lem:partition} will only result in a loss of $\eps n$ nodes. We observe that in Algorithm~\ref{alg:nearMIS_hyperfinite}, by removing the $\eps \cdot n$ nodes (denoted by $\Hat{V}$) that are incident on the edges across partitions, and adding back $\eps n$ nodes with highest cost, we will obtain a set of nodes with cost at least that of MIS while only adding $\eps n \Delta$ edges amongst the combined set of nodes. 
\begin{algorithm}[h]
\caption{$\textsc{Near-MIS}$ in  $\tau$-Hyperfinite Graph $G$}
\label{alg:nearMIS_hyperfinite}
\begin{algorithmic}[1]
\State \textbf{Input} : Graph $G = (V, E)$, cost function $C:V \rightarrow \R^+$, $m$, $\Delta$, function $\tau(\cdot)$, error $\epsilon$, failure probability $\delta$.
\State \textbf{Output} : $T$ that is a $\textsc{Near-MIS}$ with at most $\eps \cdot n \cdot \Delta$ edges.
\State Let the set of partitions is $\{V_1, V_2 \cdots \}$ of $G(V, E)$ returned using Corollary~\ref{cor:partition} with parameters $\tau(\cdot)$, error $\eps/2$ and failure probability $\delta$.
\For{each partition $V_i$}
\State Calculate the maximum \emph{cost} independent set $T_i$ in $V_i$.
\EndFor
\State $\Hat{E} \leftarrow \{ (u, v) \mid (u, v) \in E \text{ and there exists } i, j \text{ where } i\neq j, u \in V_i, v\in V_j  \}$.
\State $\Hat{V} \leftarrow \{ u \mid \exists v \text{ such that } (u, v) \in \Hat{E} \}$.
\State $T \leftarrow \left( \bigcup_{i=1} T_i \right) \setminus \Hat{V}$. 
\State Let $H$ denote $\eps \cdot n$ nodes of highest cost in $V \setminus T$. 
\State \Return $T \cup H$.
\end{algorithmic}
\end{algorithm}

\begin{lemma}\label{lem:nearMIS_hyperfinite}
In Algorithm~\ref{alg:nearMIS_hyperfinite}, we have $|\Hat{V}| \leq \eps \cdot n$ and $C(T \cup H) \geq C(S^*).$
\end{lemma}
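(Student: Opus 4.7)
The plan is to prove the two claims in sequence, relying on Corollary~\ref{cor:partition} to control $\hat V$ and on the per-partition optimality of the $T_i$'s to control cost.

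First I would bound $|\hat V|$. Algorithm~\ref{alg:nearMIS_hyperfinite} invokes Corollary~\ref{cor:partition} with error parameter $\epsilon/2$, so with probability $1-\delta$ the number of cross-partition edges satisfies $|\hat E| \leq (\epsilon/2) n$. Since every node in $\hat V$ is an endpoint of some edge in $\hat E$, we have $|\hat V| \leq 2|\hat E| \leq \epsilon n$. This handles the first claim.

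Next I would establish the cost bound. Let $S^*$ be a maximum-cost independent set in $G$. For every partition $V_i$, the restriction $S^* \cap V_i$ is an independent set in the subgraph on $V_i$, and $T_i$ is a maximum-cost independent set in that subgraph, so $C(T_i) \geq C(S^* \cap V_i)$. Summing across partitions (which are disjoint and cover $V$) gives
\[
C\!\left(\bigcup_i T_i\right) = \sum_i C(T_i) \;\geq\; \sum_i C(S^* \cap V_i) \;=\; C(S^*).
\]
Since $T = \bigl(\bigcup_i T_i\bigr) \setminus \hat V$, removing $\hat V$ costs at most $C(\hat V)$, so $C(T) \geq C(S^*) - C(\hat V)$.

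Finally I would argue that $H$ compensates for this loss. By construction $\hat V \cap T = \emptyset$, so $\hat V \subseteq V \setminus T$ and $|\hat V| \leq \epsilon n = |H|$. Because $H$ consists of the $\epsilon n$ highest-cost nodes in $V \setminus T$, it dominates any other subset of $V \setminus T$ of size at most $\epsilon n$ in cost; in particular $C(H) \geq C(\hat V)$. Using $T \cap H = \emptyset$, we conclude
\[
C(T \cup H) = C(T) + C(H) \;\geq\; \bigl(C(S^*) - C(\hat V)\bigr) + C(\hat V) \;=\; C(S^*),
\]
which completes the proof. I do not anticipate a major obstacle here: the only subtlety is ensuring $\hat V$ is genuinely disjoint from $T$ (which is immediate from the definition $T = \bigcup_i T_i \setminus \hat V$) and that the cost one loses by deleting $\hat V$ is recovered by $H$, which follows since $|\hat V| \leq |H|$ by the first claim.
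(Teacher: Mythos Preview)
Your proof is correct and follows essentially the same approach as the paper: bound $|\hat V|$ via $|\hat E|\le(\epsilon/2)n$ from Corollary~\ref{cor:partition}, use per-partition optimality to get $C(\bigcup_i T_i)\ge C(S^*)$, and then argue that the $\epsilon n$ highest-cost nodes $H\subseteq V\setminus T$ compensate for the at most $\epsilon n$ nodes removed. If anything, your justification that $C(H)\ge C(\hat V)$ (via $\hat V\subseteq V\setminus T$ and $|\hat V|\le |H|$) is cleaner and more explicit than the paper's somewhat terse ``replacing them with $H$ will only increase the cost.''
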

\begin{proof}

From Corollary~\ref{cor:partition}, we have $|\Hat{E}| \leq \eps \cdot n/2$. From the definition of $\Hat{V}$, we have \[|\Hat{V}| \leq 2 |\Hat{E}| \leq \eps \cdot n.\]

Suppose $S^*$ is the maximum cost independent set in $G$. Now, consider all nodes in $\Hat{V}$. Similar to the above case, it is possible that $(u, w) \in \Hat{E}$ and $u \in T_i$, $w \in T_j$ for some $i \neq j$.
Consider a node $u \in S$ that is isolated in $E' \subseteq E$, and included in some partition $V_i$. 

As $T_i$ is maximum cost independent set in $V_i$, we have $C(T_i) \geq C(S^* \cap V_i)$ where $S^* \cap V_i$ is an independent set induced by MIS $S^*$ in the partition $V_i$. Combining it for all partitions, we have $C(\bigcup_i T_i) \geq C(S^*)$. As nodes in $\Hat{V}$, it is possible that including those that share an edge in $\bigcup_i T_i$ will result in the set of nodes not forming an independent set. However, the set $\bigcup_i T_i \setminus \Hat{V}$ formed by removing all the nodes that are incident with edges across the partitions, is an independent set. Since $S^*$ is MIS, we have
$$C(\bigcup_i T_i \setminus \Hat{V}) \leq C(S^*).
$$
As $|\Hat{V}|$ is at most $\eps \cdot n$, replacing them with $H$ consisting of $\eps \cdot n$ highest cost nodes from $T$ will only increase the cost. Therefore, we have
$$\Rightarrow           C(T \cup H) = C\left( \left(\bigcup_i T_i \setminus \Hat{V} \right) \cup H \right)\geq C(\bigcup_i T_i) \geq  C(S^*).$$

\end{proof}

\begin{theorem}\label{thm:nearMIS_hyperfinite}
Algorithm~\ref{alg:nearMIS_hyperfinite} returns a set $T \subseteq V$ of nodes such that $C(T) \geq C(S^*)$ where $S^*$ is the maximum cost independent set; $|E[T] | \leq \eps \cdot n \cdot \Delta$ and uses a running time $O(\frac{n}{\delta} \cdot {2^{\Delta^{O(\tau(\eps^3/\Delta^3))}}} \log {1}/{\delta} + n \Delta)$ with probability $1-\delta$
\end{theorem}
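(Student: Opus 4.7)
My plan is to decompose the claim into three parts (cost bound, edge bound, running time and success probability) and dispatch each in turn, with the cost bound being essentially a restatement of Lemma~\ref{lem:nearMIS_hyperfinite} and the edge bound following from a simple observation about which nodes can contribute edges.

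First, the cost bound $C(T \cup H) \geq C(S^*)$ is already established in Lemma~\ref{lem:nearMIS_hyperfinite}, so I would simply invoke it. (With a minor notation clarification that the ``$T$'' of the theorem statement refers to $T \cup H$ in the algorithm, where $T$ in the algorithm denotes $\bigcup_i T_i \setminus \hat{V}$.)

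For the edge bound, the key observation is that $T = \bigcup_i T_i \setminus \hat{V}$ is itself an independent set in $G$: within a single partition $V_i$, the set $T_i$ is an independent set by construction, so there are no internal edges; and any edge $(u,w) \in E$ with $u \in T_i$, $w \in T_j$ for $i \neq j$ is by definition a crossing edge and thus belongs to $\hat{E}$, so both endpoints lie in $\hat V$ and are removed. Consequently $E[T] = \emptyset$, and every edge in $E[T \cup H]$ must have at least one endpoint in $H$. Since $|H| \leq \epsilon n$ and $G$ has maximum degree $\Delta$, we get
\[
|E[T \cup H]| \leq |H| \cdot \Delta \leq \epsilon \cdot n \cdot \Delta,
\]
as desired.

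For the running time and success probability, I would break the algorithm into its three main phases. The partition oracle call via Corollary~\ref{cor:partition} (with error parameter $\epsilon/2$ and failure probability $\delta$) takes $O(\tfrac{n}{\delta} \cdot 2^{\Delta^{O(\tau(\epsilon^3/\Delta^3))}} \log(1/\delta))$ time and is the only randomized step, so the $1-\delta$ success probability transfers directly to the algorithm. Computing a maximum cost independent set within each partition $V_i$ can be done by brute force in time $2^{|V_i|} \leq 2^{\tau(\epsilon^3/\Delta^3 \cdot 54000)}$; summed over at most $n$ partitions, this is absorbed into the first term. Finally, computing $\hat E$, $\hat V$, and the top-$\epsilon n$ set $H$ requires one pass over the edge set (of size at most $n\Delta/2$) and a selection in $V$, giving $O(n\Delta)$. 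Summing yields the claimed running time. The only mild subtlety is making sure the parameters passed to the partitioning oracle (error $\epsilon/2$ in the algorithm) correctly propagate to the $\epsilon n \Delta$ bound in the theorem, but this is just a constant-factor rescaling of $\epsilon$.
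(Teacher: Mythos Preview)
Your proposal is correct and follows essentially the same approach as the paper: invoke Lemma~\ref{lem:nearMIS_hyperfinite} for the cost bound, bound $|E[T\cup H]|$ via the $\epsilon n$ added nodes and the degree bound $\Delta$, and account for running time as (partition oracle) $+$ (brute-force MIS per part) $+$ $O(n\Delta)$ for $\hat E$. Your edge-bound argument is in fact more explicit than the paper's---you spell out that $\bigcup_i T_i \setminus \hat V$ is an independent set so every edge in the output must touch $H$---whereas the paper just asserts the $\epsilon n$ added nodes contribute at most $\epsilon n\Delta$ edges without stating this intermediate observation.
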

\begin{proof}
From Lemma~\ref{lem:nearMIS_hyperfinite}, we have $C(T) \geq C(S^*)$ and the nodes in $H$ include $\eps \cdot n$ nodes that are added (line 10 in Algorithm~\ref{alg:near_mis})  have at most $\eps \cdot n \cdot \Delta$ edges among themselves. Therefore,  $|E[T]|\leq \eps \cdot n \cdot \Delta$. Using Corollary~\ref{cor:partition}, we have that it takes $O(\frac{n}{\delta} \cdot {2^{\Delta^{O(\tau(\eps^3/\Delta^3))}}} \log {1}/{\delta})$ time to find the partitions. After finding the partitions, we find maximum cost independent set in each of the at most $n$ partitions each of size $O(\tau(\eps^3/\Delta^3))$, which takes a running time of $$O(\text{finding maximum cost independent set in each partition}) = O(n \cdot 2^{O(\tau(\eps^3/\Delta^3))}).$$
Combining the running times for both these steps, along with $O(n \Delta)$, the time to find $\Hat{E}$, we have the running time as claimed.
\end{proof}

We can use Algorithm~\ref{alg:nearMIS_hyperfinite} to obtain \NearMIS~in each iteration of Algorithm~\ref{alg:ss_matrix}; from Lemma~\ref{lem:2approx} and Theorem~\ref{thm:nearMIS_hyperfinite}, we have the following proposition about \emph{separating set system} for $G$.
\begin{proposition}\label{prop:2approx_hyperfinite}
Let $G(V, E)$ be a $\Delta$-degree bounded $\tau$-hyperfinite graph. For any $m \geq \eta \log 1/\epsilon$ for some constant $\eta$, with probability $\ge 1-\delta$, there is an algorithm that returns $L_\epsilon$ with 
$C(L_\epsilon) \leq (2 + \exp{(-\Omega(m))})  \cdot C(L^*)$, where $L^*$ is the min-cost separating matrix for $G$ and has a running time $O\left(\frac{n^3}{\delta} \cdot {2^{\Delta^{O(\tau(\eps^3/n^3 \Delta^3))}}} \log \frac{n}{\delta}\right)$. Moreover using $L_\epsilon$, the number of edges that are not separated in $G$ is at most $\eps \cdot n \cdot \Delta$.
\end{proposition}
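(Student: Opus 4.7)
The plan is to run Algorithm~\ref{alg:ss_matrix} essentially as stated, but in the while-loop replace the call to the generic \NearMIS~routine (Algorithm~\ref{alg:near_mis}) with a call to the hyperfinite-specific \NearMIS~from Algorithm~\ref{alg:nearMIS_hyperfinite}, with appropriately scaled accuracy and failure parameters. Concretely, in the $i$-th iteration I would call Algorithm~\ref{alg:nearMIS_hyperfinite} on the current residual graph with parameter $\epsilon' = \Theta(\epsilon^2)$ and failure probability $\delta' = \Theta(\epsilon \delta)$, so that a union bound over the at most $1/\epsilon$ iterations yields total failure probability at most $\delta$.

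I would then verify the two claims separately. For the $\epsilon n \Delta$ edge bound: by Theorem~\ref{thm:nearMIS_hyperfinite}, each invocation returns a set $T$ with $|E[T]| \leq \epsilon' n \Delta$. Because Algorithm~\ref{alg:nearMIS_hyperfinite} always appends the $\epsilon' n$ highest-cost remaining nodes, every invocation returns at least $\epsilon' n$ nodes, so Algorithm~\ref{alg:ss_matrix} terminates after at most $1/\epsilon' \leq 1/\epsilon$ iterations. Summing $\epsilon' n \Delta$ over these iterations gives $(1/\epsilon) \cdot \epsilon^2 n \Delta = \epsilon n \Delta$ edges which are assigned the same row vector in $L_\epsilon$, giving the $\epsilon \cdot n \cdot \Delta$-separating guarantee.

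For the approximation ratio: the key observation is that the analysis of Lemma~\ref{lem:2approx} (the $(2+\exp(-\Omega(m)))$-factor argument inherited from \cite{neurips18}) only requires that, in every iteration, the set chosen has cost at least the cost of an exact MIS of the current residual graph. Theorem~\ref{thm:nearMIS_hyperfinite} guarantees exactly this ($C(T) \geq C(S^*)$), so the entire $2$-approximation analysis goes through verbatim, with an additive $\exp(-\Omega(m))$ term coming from the vector-assignment counting step; we just have to check that no step of that argument uses the fact that $T$ is genuinely independent, only that its cost dominates the MIS. This is straightforward because the accounting in the original proof is purely cost-based.

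The running time follows by plugging in Theorem~\ref{thm:nearMIS_hyperfinite}: each call with parameter $\epsilon'$ on a graph with at most $n$ nodes costs $O((n/\delta') \cdot 2^{\Delta^{O(\tau((\epsilon')^3/\Delta^3))}} \log(1/\delta') + n\Delta)$, and we make at most $1/\epsilon$ such calls; substituting $\epsilon' = \Theta(\epsilon^2)$ and $\delta' = \Theta(\epsilon \delta / n)$ (a slightly more aggressive scaling to absorb the $1/\epsilon$ factor and a polynomial-in-$n$ slack into the stated $\log(n/\delta)$ term and the $\tau(\epsilon^3/n^3\Delta^3)$ argument) yields the claimed bound $O((n^3/\delta) \cdot 2^{\Delta^{O(\tau(\epsilon^3/n^3\Delta^3))}} \log(n/\delta))$. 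I do not expect a significant obstacle here: the argument is a composition of the greedy peeling analysis of Algorithm~\ref{alg:ss_matrix} with the hyperfinite \NearMIS~guarantee already proved. The only delicate point is to check that peeling an arbitrary \NearMIS~from $G$ preserves the hyperfinite property in the residual graph with parameters only mildly degraded, which is immediate because removing vertices from an $(\epsilon,k)$-hyperfinite graph can only decrease the number of cross-cluster edges, so each residual graph remains $\tau$-hyperfinite with maximum degree at most $\Delta$.
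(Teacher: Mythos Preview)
Your high-level plan is exactly the paper's: plug the hyperfinite \NearMIS\ (Algorithm~\ref{alg:nearMIS_hyperfinite}) into the peeling loop of Algorithm~\ref{alg:ss_matrix} and invoke Lemma~\ref{lem:2approx} for the cost bound. The cost analysis and the observation that vertex deletion preserves hyperfiniteness (with degree bound $\Delta$) are fine.

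The concrete gap is in the parameter arithmetic. With $\epsilon' = \Theta(\epsilon^2)$, Algorithm~\ref{alg:nearMIS_hyperfinite} only guarantees the return of $\epsilon'\cdot |V_{\text{current}}|$ nodes, so your iteration bound should read $1/\epsilon' = 1/\epsilon^2$, not $1/\epsilon$; the inequality ``$1/\epsilon' \le 1/\epsilon$'' is reversed. With $1/\epsilon^2$ iterations, the accumulated unseparated edges become $(1/\epsilon^2)\cdot \epsilon^2 n\Delta = n\Delta$, which is vacuous, and your running-time hand-wave cannot produce the stated $\tau(\epsilon^3/n^3\Delta^3)$ argument either: plugging $\epsilon'=\epsilon^2$ into Theorem~\ref{thm:nearMIS_hyperfinite} gives $\tau(\epsilon^6/\Delta^3)$, with no $n$ anywhere. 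The $n^3$ in the denominator is a tell that the intended scaling is different.

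The paper fixes this by taking the cruder iteration bound of $n$ (each peel removes at least one vertex) and setting $\epsilon' = \epsilon/n$, $\delta' = \delta/n$. Then the total unseparated edge count is $n\cdot(\epsilon' n\Delta)=\epsilon n\Delta$, the union bound over $n$ calls gives failure probability $\delta$, and the per-call running time $O\bigl((n/\delta')\,2^{\Delta^{O(\tau((\epsilon')^3/\Delta^3))}}\log(1/\delta')\bigr)$ with $\epsilon'=\epsilon/n$ produces exactly the $\tau(\epsilon^3/n^3\Delta^3)$ and the $n^3/\delta$ prefactor in the statement. Swap your $\epsilon'=\epsilon^2$, $1/\epsilon$-iteration bookkeeping for this $\epsilon'=\epsilon/n$, $n$-iteration bookkeeping and the rest of your argument goes through.
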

\begin{proof}
In every iteration, we identify a set of nodes that has the cost at least the cost of MIS. Therefore, total number of iterations possible is at most $n$. Scaling the error parameter by setting $\epsilon' = \eps/n$ and $\delta' = \delta/n$ for each iteration, we have that Algorithm~\ref{alg:ss_matrix} returns $L_{\eps}$ such that the number of edges that are not separated is $n \cdot (\epsilon' \cdot n \cdot \Delta) = \epsilon \cdot n \cdot \Delta$. Using Lemma~\ref{lem:nearMIS_hyperfinite}, we have that the total running time of our algorithm is 
$$O(n \cdot \frac{n}{\delta'} \cdot {2^{\Delta^{O(\tau(\epsilon'^3/\Delta^3))}}} \log {1}/{\delta'}) = O\left(\frac{n^3}{\delta} \cdot {2^{\Delta^{O(\tau(\eps^3/n^3 \Delta^3))}}} \log \frac{n}{\delta}\right).$$
\end{proof}

We can obtain a similar result for \emph{strongly separating set system} for $G$ using Algorithm~\ref{alg:sss_matrix} and give the following proposition.
\begin{proposition}
Let $G(V, E)$ be a $\Delta$-degree bounded $\tau$-hyperfinite graph. For any $m \geq \eta \log 1/\epsilon$ for some constant $\eta$, with probability $\ge 1-\delta$, there is an algorithm that returns $L_\epsilon$ with 
$C(L_\epsilon) \leq (4 + \exp{(-\Omega(m))})  \cdot C(L^*)$, where $L^*$ is the min-cost \emph{strongly} separating matrix for $G$ and has a running time $O\left(\frac{n^3}{\delta} \cdot {2^{\Delta^{O(\tau(\eps^3/n^3 \Delta^3))}}} \log \frac{n}{\delta}\right)$. Moreover using $L_\epsilon$, the number of edges that are not \emph{strongly} separated in $G$ is at most $\eps \cdot n \cdot \Delta$.
\end{proposition}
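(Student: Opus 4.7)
The plan is to mirror the proof of Proposition~\ref{prop:2approx_hyperfinite}, but now applied to Algorithm~\ref{alg:sss_matrix} (the ancestral-graph recovery pipeline) rather than Algorithm~\ref{alg:ss_matrix} alone. The only modification needed inside Algorithm~\ref{alg:ss_matrix} is to replace the generic \NearMIS{} subroutine (Algorithm~\ref{alg:near_mis}) with the hyperfinite \textsc{Near-MIS} of Algorithm~\ref{alg:nearMIS_hyperfinite}. By Theorem~\ref{thm:nearMIS_hyperfinite}, the substitute still returns a set with cost at least the true MIS, so it is an admissible replacement for the purposes of the cost analysis.

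With this substitution in place, I would invoke the two-step argument of Theorem~\ref{thm:main_sss} unchanged. Step one: Lemma~\ref{lem:2approx} shows that the inner call produces a separating matrix $L^S_{\epsilon}$ with $C(L^S_{\epsilon}) \le (2+\exp(-\Omega(m))) \cdot \OPT_S$, and this lemma depends only on the cost-dominance property of the subroutine, not on whether the peeled set is a general-graph or hyperfinite \NearMIS. Step two: Lemma~\ref{lem:approx_SS_appendix} then lifts this to $C(L^{SS}_{\epsilon}) \le (4+\exp(-\Omega(m))) \cdot \OPT_{SS}$ via the SSMatrix step on the contracted supernode graph $V_S$. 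So the approximation factor $4+\exp(-\Omega(m))$ passes through.

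For the edge count I would set the parameters passed into each call of Algorithm~\ref{alg:nearMIS_hyperfinite} to $\epsilon' = \epsilon/n$ and $\delta' = \delta/n$. Since each iteration of Algorithm~\ref{alg:ss_matrix} now peels off at least one node (because Theorem~\ref{thm:nearMIS_hyperfinite} yields a set of cost $\geq C(S^*) \geq 1$), there are at most $n$ iterations, each contributing at most $\epsilon' \cdot n \cdot \Delta = \epsilon \Delta$ edges inside a single supernode class. Summing across iterations gives at most $\epsilon \cdot n \cdot \Delta$ pairs of distinct vertices that share a row in $L^S_{\epsilon}$. By the observation in Section~\ref{sec:ancG}, the edges of $G$ that fail to be strongly separated by $L^{SS}_{\epsilon}$ are exactly those whose endpoints fall in the same supernode (the SSMatrix routine strongly separates all pairs of distinct supernodes), so the final count of non-strongly-separated edges is bounded by $\epsilon \cdot n \cdot \Delta$, as claimed. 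A union bound over the $n$ calls (each succeeding with probability $1-\delta'$) yields overall success probability $1-\delta$.

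The runtime is dominated by the $n$ invocations of Algorithm~\ref{alg:nearMIS_hyperfinite}, each of cost $O\!\left(\tfrac{n}{\delta'} \cdot 2^{\Delta^{O(\tau(\epsilon'^3/\Delta^3))}} \log(1/\delta')\right)$; substituting $\epsilon' = \epsilon/n$ and $\delta' = \delta/n$ and absorbing the negligible $O(1/\epsilon)$-sized SSMatrix post-processing gives the stated bound $O\!\left(\tfrac{n^3}{\delta} \cdot 2^{\Delta^{O(\tau(\epsilon^3/n^3\Delta^3))}} \log(n/\delta)\right)$. I do not expect any serious obstacle here: the only subtle point is verifying that the $(2+\exp(-\Omega(m)))\to(4+\exp(-\Omega(m)))$ transfer in Lemma~\ref{lem:approx_SS_appendix} is agnostic to how the inner \NearMIS{} is implemented, which it is because that lemma only uses the approximation ratio of $L^S_{\epsilon}$ to $\OPT_S(a_1^*)$ as a black box.
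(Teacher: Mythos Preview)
Your proposal is correct and follows essentially the same approach as the paper's own proof: substitute the hyperfinite \textsc{Near-MIS} (Algorithm~\ref{alg:nearMIS_hyperfinite}) into the pipeline of Algorithm~\ref{alg:sss_matrix}, invoke Proposition~\ref{prop:2approx_hyperfinite} for the inner separating-matrix guarantee, and then lift to the $(4+\exp(-\Omega(m)))$ factor via the \textsc{SSMatrix} step exactly as in Theorem~\ref{thm:main_sss_appendix}/Lemma~\ref{lem:approx_SS_appendix}. The paper's proof is terser (it simply cites Proposition~\ref{prop:2approx_hyperfinite} and Theorem~\ref{thm:main_sss_appendix}), while you spell out the parameter rescaling $\epsilon'=\epsilon/n$, $\delta'=\delta/n$, the union bound, and the observation that non-strongly-separated edges are precisely those within a supernode class---but these are exactly the details behind the paper's citations, not a different route.
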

\begin{proof}
In Algorithm~\ref{alg:sss_matrix}, we first find $L^S_\eps$, a separating matrix obtained using Proposition~\ref{prop:2approx_hyperfinite} that does not separate $\eps \cdot n \cdot \Delta$ edges of $G$. Next, we find \emph{super nodes} using the~\NearMIS's~returned and assign it vectors appropriately to form strongly separating matrix $L^{SS}_{\eps}$ on super nodes. Using Theorem~\ref{thm:main_sss_appendix}, we have the claimed approximation guarantee. The running time follows from Proposition~\ref{prop:2approx_hyperfinite}.
\end{proof}

\section{Additional Details for the analysis of $2$-approximation result for $\epsilon$-Separating Set Systems}\label{app:2approx}

In this section, we present already known results from ~\cite{neurips18} filling in the details in the analysis of our Algorithm~\ref{alg:ss_matrix} for the sake of completion. 

\noindent Let $\mathcal{I}$ denote the set of all independent sets in $G$. For some $\mathcal{A} \subseteq \mathcal{I}$, we have
\[
J(\mathcal{A}) = \sum_{v \in \bigcup_{S \in \mathcal{A}}} \ C(v)
\]
that is, it takes a set of independent sets and returns the sum of the cost of the vertices in their union. We observe that $J$ is submodular, monotone, and non-negative~\citep{neurips18}. 

Let $S_0$ denote the set of nodes that are assigned weight $0$ vector after the first iteration of Algorithm~\ref{alg:ss_matrix}. We set $V = V \setminus S_0$ for the remainder of this section and handle the cost contribution of nodes in $S_0$ separately in the analysis of approximation ratio.

\begin{lemma}\label{lem:submodular_appendix}
Given a submodular, monotone, and non-negative function $J$ over a ground
set $V$ and a cardinality constraint $k$. Let Algorithm~\ref{alg:ss_matrix} return $S_{\text{greedy}}$ a collection of at most $C k$ (for some constant $C > 0$) sets that are $(0, \eps)$-\NearMIS, then, $$J(S_{\text{greedy}}) \geq (1-e^{-C}) \max_{\mathcal{S} \subseteq \mathcal{I}, |\mathcal{S}| \leq k } J(\mathcal{S}).$$
\end{lemma}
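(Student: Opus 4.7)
The plan is to carry out the classical greedy analysis for monotone submodular maximization, exploiting the fact that a $(0,\epsilon)$-\NearMIS~always has cost at least that of an exact MIS on the current induced subgraph. Let $\mathcal{S}^* = \{T_1,\ldots,T_k\}\subseteq \mathcal{I}$ attain $\OPT := \max_{\mathcal{S}\subseteq\mathcal{I},\,|\mathcal{S}|\le k} J(\mathcal{S})$, and let $S_{\text{greedy}}^{(i)} = \{S_1,\ldots,S_i\}$ denote the first $i$ \NearMIS's returned by Algorithm~\ref{alg:ss_matrix}. Because Algorithm~\ref{alg:ss_matrix} deletes each $S_j$ from the vertex set after using it, the $S_j$'s are pairwise disjoint and $J(S_{\text{greedy}}^{(i)}) = \sum_{j\le i} C(S_j)$.

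The main technical step is to show that at every iteration $i+1$,
\[
C(S_{i+1}) \;\ge\; \frac{\OPT - J(S_{\text{greedy}}^{(i)})}{k}.
\]
Write $R_i := V\setminus \bigcup_{\ell\le i} S_\ell$ for the remaining vertices, and let $M_{i+1}$ denote an exact maximum cost independent set of the induced subgraph on $R_i$. By definition of a $(0,\epsilon)$-\NearMIS, $C(S_{i+1}) \ge C(M_{i+1})$. For each $j\in[k]$, the restriction $T_j\cap R_i$ is an independent set in the induced subgraph on $R_i$, so $C(M_{i+1}) \ge C(T_j\cap R_i)$ for every $j$, and hence $C(M_{i+1}) \ge \tfrac{1}{k}\sum_{j=1}^k C(T_j\cap R_i)$. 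The averaging step is then closed using monotonicity of $J$:
\[
\sum_{j=1}^k C(T_j \cap R_i) \;\ge\; C\!\Bigl(R_i \cap \bigcup_{j=1}^k T_j\Bigr) \;=\; J\bigl(S_{\text{greedy}}^{(i)} \cup \mathcal{S}^*\bigr) - J\bigl(S_{\text{greedy}}^{(i)}\bigr) \;\ge\; \OPT - J\bigl(S_{\text{greedy}}^{(i)}\bigr),
\]
where the first inequality accounts for possible overlap of the $T_j$'s on $R_i$ and the last uses $J(\mathcal{A}\cup\mathcal{B})\ge J(\mathcal{B})\ge \OPT$ when $\mathcal{B}\supseteq\mathcal{S}^*$.

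Combining the inequality above with the identity $J(S_{\text{greedy}}^{(i+1)}) - J(S_{\text{greedy}}^{(i)}) = C(S_{i+1})$ yields the recursion
\[
\OPT - J\bigl(S_{\text{greedy}}^{(i+1)}\bigr) \;\le\; \bigl(1 - \tfrac{1}{k}\bigr)\bigl(\OPT - J\bigl(S_{\text{greedy}}^{(i)}\bigr)\bigr).
\]
Unrolling for $Ck$ iterations and using $(1-1/k)^{Ck}\le e^{-C}$ gives $\OPT - J(S_{\text{greedy}}) \le e^{-C}\cdot \OPT$, i.e., $J(S_{\text{greedy}}) \ge (1-e^{-C})\OPT$, which is the stated bound. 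If the algorithm terminates after fewer than $Ck$ iterations, it must be because $V = \bigcup_j S_j$, so $J(S_{\text{greedy}}) = C(V) \ge \OPT$ and the bound holds trivially. The only non-routine step is the averaging/submodularity chain in the display above; everything else is the textbook greedy analysis, with $(0,\epsilon)$-\NearMIS~playing the role of an exact MIS thanks to the cost lower bound built into Definition~\ref{def:nearMIS}.
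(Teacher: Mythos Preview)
Your proof is correct and follows essentially the same approach as the paper: both establish the per-step inequality $C(S_{i+1}) \ge (\OPT - J(S_{\text{greedy}}^{(i)}))/k$ by comparing the greedy choice to the restriction of the optimal sets onto the remaining vertices, and then unroll the standard $(1-1/k)^{Ck}\le e^{-C}$ recursion. Your treatment is in fact a bit more explicit than the paper's (you spell out $T_j\cap R_i$ and the coverage identity for $J$, and you handle the early-termination case), but the underlying argument is the classical greedy analysis and matches the paper's proof.
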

\begin{proof} We have that in $i$th iteration of Algorithm~\ref{alg:ss_matrix}, we pick a set of nodes $S_i$ with cost at least the cost of MIS $T_i$ in $G$, i.e., $S_i$ is a $(0, \epsilon)$-\NearMIS~in $G$ and satisfies $C(S_i) \geq C(T_i)$. Let $\mathcal{S}^*$ be the collection of independent sets such that $J(\mathcal{S}^*) = \max_{\mathcal{S} \subseteq \mathcal{I}, |\mathcal{S}| \leq k } J(\mathcal{S})$. Let $\bigcup_{ j \leq i} S_j$ be denoted by $S_{1:i}$. We claim using induction that 
\[ 
    J(\mathcal{S}^*) - J(S_{1:i}) \leq \left( 1-\frac{1}{k} \right)^i J(\mathcal{S}^*).
\]
Consider $i$th iteration when Algorithm~\ref{alg:ss_matrix} picks $S_i$. Using submodularity, we have 
\[ 
     J(\mathcal{S}^*) - J(S_{1:i-1}) \leq \sum_{B \in \mathcal{S}^* \setminus S_{1:i-1}} J( S_{1:i-1} \cup B).
\]

Therefore, there exists one set $B \in \mathcal{S}^* \setminus S_{1:i-1}$, with cost at least $\frac{\sum_{B \in \mathcal{S}^* \setminus S_{1:i-1}} J( S_{1:i-1} \cup B)}{k}.$ As argued in Lemma~\ref{lem:approxMIS_appendix}, we are picking a set $S_i$ with cost $C(S_i) \geq C(T_i)$ where $T_i$ is MIS in the $i$th iteration, we have :

\begin{align*}
    C(S_i) \geq C(T_i) \geq \frac{\sum_{B \in \mathcal{S}^* \setminus S_{1:i-1}} J( S_{1:i-1} \cup B)}{k} &\geq  \frac{J(\mathcal{S}^*) - J(S_{1:i-1})}{k}\\
    J(S_i) = C(S_i) &\geq \frac{J(\mathcal{S}^*) - J(S_{1:i-1})}{k}.
\end{align*}

For $i = 1$, our claim follows from the above statement, i.e., $C(S_1) = J(S_1) \geq \frac{J(\mathcal{S}^*)}{k}$. Assuming that our claim holds until iteration $i-1$ for some $i \geq 2$, we have after the $i$th iteration : $J(\mathcal{S}^*) - J(S_{1:i}) = J(\mathcal{S}^*) - J(S_{1:i-1}) - J(S_i)$. This is true because $J(S_{1:i}) = J(S_{1:i-1}) + J(S_i)$ as $S_i$ is greedily chosen by picking a set containing nodes that are not previously selected. Therefore, 
\begin{align*}
    J(\mathcal{S}^*) - J(S_{1:i}) &= J(\mathcal{S}^*) - J(S_{1:i-1}) - J(S_i)\\
                                  &\leq J(\mathcal{S}^*) - J(S_{1:i-1}) - \frac{J(\mathcal{S}^*) - J(S_{1:i-1})}{k}\\
                                  &\leq \left(J(\mathcal{S}^*) - J(S_{1:i-1})\right)\left( 1 - \frac{1}{k} \right)\\
                                  &\leq \left( 1-\frac{1}{k} \right)^{i} J(\mathcal{S}^*).
\end{align*}
Setting $i = C \cdot k$, we have $$J(\mathcal{S}^*) - J(S_{\text{greedy}}) \leq \left( 1-\frac{1}{k} \right)^{C k} J(\mathcal{S}^*) \leq e^{-C} \cdot J(\mathcal{S}^*)$$
$$\Rightarrow J(S_{\text{greedy}}) \geq (1-e^{-C}) J(\mathcal{S}^*).$$
\end{proof}

Now, we define two types of submodular optimization problem, called the submodular chain problem and the supermodular chain problem that will be useful later. 

\begin{definition}
Given integers $k_1, k_2, \ldots, k_m$ and a submodular, monotone, and non-negative function $J$, over a ground set $V$, the \textit{submodular chain problem} is to find sets $A_1, A_2, \ldots, A_m \subseteq 2^{[V]}$ such that $\vert A_i \vert \leq k_i$ that maximizes
\[
\sum_{i = 1}^m J(A_1 \cup A_2, \cup \cdots \cup A_i).
\]
\end{definition}

\begin{lemma}\label{lem:submodular_chain}
Let $A_1^*, A_2^*, \ldots, A_m^*$ be the optimal solution to the submodular chain problem. Suppose that for all $1 \leq p \leq m / 2 - 1$ we have that $\sum_{i = 1}^{2p} k_i \geq \tau \sum_{i=1}^p k_i$. Also assume that $J(A_1 \cup A_2 \cup \cdots \cup A_m) = J(V)$. Then the greedy algorithm~\ref{alg:ss_matrix} for the submodular chain problem returns set $A_1, A_2, \ldots, A_m$ such that
\[
\sum_{i=1}^{m} J(A_1 \cup A_2 \cdots A_{i}) \geq J(V) + 2(1 - e^{-\tau})\sum_{i=1}^{m/2-1}J(A^*_1 \cup A^*_2 \cup \cdots A^*_i).
\]
\end{lemma}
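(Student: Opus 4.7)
The high-level plan is to express the greedy sum $\sum_{i=1}^{m} J(A_1 \cup \cdots \cup A_i)$ in terms of its prefix collections $B_i := A_1 \cup A_2 \cup \cdots \cup A_i$, split it into (i) a single term $J(B_m) = J(V)$, (ii) $m/2 - 1$ paired terms $J(B_{2p}) + J(B_{2p+1})$ for $p = 1, \ldots, m/2 - 1$, and (iii) the leftover $J(B_1) \ge 0$ that we drop by non-negativity. Monotonicity lets us replace each pair by $2 J(B_{2p})$, and the crux of the argument is then to lower-bound each $J(B_{2p})$ by $(1-e^{-\tau})$ times the value $J(A_1^* \cup \cdots \cup A_p^*)$ of the length-$p$ prefix of the chain optimum.

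To establish this crux, I would run the standard submodular greedy contraction argument, adapted to the batch/chain setting. Fix $p$, set $\mathcal{S}_p^* = A_1^* \cup \cdots \cup A_p^*$ (which has at most $K := \sum_{i=1}^p k_i$ elements), and define the gap $D_i := J(\mathcal{S}_p^*) - J(B_i)$. The greedy chooses $A_i$ to maximize the marginal over feasible sets of size $\le k_i$, so in particular its gain dominates that of $k_i$ applications of single-element greedy inside $\mathcal{S}_p^* \setminus B_{i-1}$. Submodularity, together with an averaging argument over $\mathcal{S}_p^* \setminus B_{\text{curr}}$, exhibits an element with marginal at least $D_{\text{curr}}/K$, yielding a per-sub-step contraction of $(1 - 1/K)$. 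Iterating gives $D_i \le (1 - 1/K)^{k_i} D_{i-1}$; telescoping over $i = 1, \ldots, 2p$ and using $\sum_{i=1}^{2p} k_i \ge \tau K$ produces
\[
D_{2p} \le (1 - 1/K)^{\tau K} \cdot J(\mathcal{S}_p^*) \le e^{-\tau}\, J(\mathcal{S}_p^*),
\]
so $J(B_{2p}) \ge (1 - e^{-\tau}) J(\mathcal{S}_p^*)$.

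Assembling the pieces via monotonicity $J(B_{2p+1}) \ge J(B_{2p})$,
\begin{align*}
\sum_{i=1}^{m} J(B_i)
&\ge\ J(B_1) + \sum_{p=1}^{m/2-1}\bigl(J(B_{2p}) + J(B_{2p+1})\bigr) + J(B_m) \\
&\ge\ 0\ +\ 2\sum_{p=1}^{m/2-1} J(B_{2p})\ +\ J(V) \\
&\ge\ J(V) + 2(1-e^{-\tau}) \sum_{p=1}^{m/2-1} J\bigl(A_1^* \cup \cdots \cup A_p^*\bigr),
\end{align*}
which is the desired inequality.

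The main obstacle, in my view, is justifying that the batch/chain greedy, which commits to a whole set of size at most $k_i$ in one shot rather than picking elements one at a time, still enjoys the same per-step $(1 - 1/K)$ contraction on the gap. The key observation is that the best feasible set of size $k_i$ is no worse than the set obtained by $k_i$ sequential single-element greedy picks, at which point the classical Nemhauser-style submodular analysis applies verbatim. A secondary subtlety is setting up the pairing so that exactly a factor of $2$ appears on the right-hand side without double-counting: this is why we peel off $J(B_1)$ (which we can afford to discard) and $J(B_m)$ (which we identify with $J(V)$ using the covering hypothesis), leaving precisely $m - 2$ terms that group into $m/2 - 1$ monotone pairs.
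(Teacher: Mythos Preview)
Your proposal is correct and follows essentially the same approach as the paper: both decompose the greedy sum into $J(B_m)=J(V)$ plus the pairs $J(B_{2p})+J(B_{2p+1})$, apply monotonicity to get $2J(B_{2p})$, and then invoke the classical submodular-greedy bound $J(B_{2p})\ge(1-e^{-\tau})J(A_1^*\cup\cdots\cup A_p^*)$. The paper simply cites its preceding lemma (Lemma~\ref{lem:submodular_appendix}) for this last step, whereas you re-derive the Nemhauser contraction explicitly; you are also slightly more careful in isolating and discarding the leftover $J(B_1)$ term, which the paper silently drops in an equality that should really be an inequality.
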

\begin{proof}
Given $\sum_{i = 1}^{2p} k_i \geq \tau \sum_{i=1}^p k_i$. From Lemma~\ref{lem:submodular_appendix}, we have
\[
J(A_1 \cup A_2 \cup \cdots A_{2p}) \geq (1 - e^{-\tau}) J(A^*_1 \cup A^*_2 \cup \cdots \cup A^*_p).
\]

\[
\sum_{i=1}^{m/2 - 1}J(A_1 \cup A_2 \cdots A_{2i}) \geq (1 - e^{-\tau})\sum_{i=1}^{m/2 - 1}J(A^*_1 \cup A^*_2 \cup \cdots \cup A^*_i).
\]
Now, we use the monotonicity property of the submodular function $J$ to get
\begin{align*}
    \sum_{i=1}^m J(A_1 \cup A_2 \cdots A_{i})  &= J(A_1 \cup A_2 \cdots \cup A_m) +\sum_{i=1}^{m/2 -1} J(A_1 \cup A_2 \cup \cdots \cup A_{2i}) + J(A_1 \cup A_2 \cup \cdots \cup A_{2i+1}) \\
    &\geq J(V) + 2\sum_{i=1}^{m/2 -1} J(A_1 \cup A_2 \cup \cdots \cup A_{2i}).
\end{align*}
Hence, the lemma.
\end{proof}

\begin{definition}
Given integers $k_1, k_2, \ldots, k_m$ and a submodular, monotone, and non-negative function $F$, over a ground set $V$, the \textit{supermodular chain problem} is to find sets $A_1, A_2, \ldots, A_m \subseteq 2^{[V]}$ such that $\vert A_i \vert \leq k_i$ that minimizes
\[
\sum_{i=1}^m J(V) - J(A_1 \cup A_2 \cup \cdots \cup A_i).
\]
\end{definition}
For the greedy algorithm~\ref{alg:ss_matrix}, we give the following claim for the supermodular chain problem.

\begin{lemma}\label{lem:supermodular-chain}
Let $A_1^*, A_2^*, \ldots, A_m^*$ be the optimal solution to the supermodular chain problem. Suppose that for all $1 \leq p \leq m / 2 - 1$ we have that $\sum_{i = 1}^{2p} k_i \geq \tau \sum_{i=1}^p k_i$. Also assume that $J(A_1 \cup A_2 \cup \cdots \cup A_m) = J(V)$. Then the greedy algorithm~\ref{alg:ss_matrix} for the supermodular chain problem returns set $A_1, A_2, \ldots, A_m$ such that
\[
\sum_{i=1}^m J(V) - J(A_1 \cup A_2 \cup \cdots \cup A_i) \leq e^{-\tau} m \cdot J(V) + 2\sum_{i=1}^{m} J(A^*_1 \cup A^*_2 \cup \cdots A^*_i).
\]
\end{lemma}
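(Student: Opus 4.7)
The plan is to derive Lemma~\ref{lem:supermodular-chain} as a direct algebraic consequence of the submodular-chain statement already proved in Lemma~\ref{lem:submodular_chain}. The first observation I would make is that the greedy procedure of Algorithm~\ref{alg:ss_matrix} is \emph{identical} for the two problems: at every step, maximizing the marginal gain of $J$ coincides with minimizing the marginal residual $J(V)-J(\cdot)$. Hence the output $A_1,\ldots,A_m$, and in particular the partial unions $g_i := J(A_1\cup\cdots\cup A_i)$, are the very same objects that Lemma~\ref{lem:submodular_chain} already controls. Writing $g_i^{*} := J(A_1^{*}\cup\cdots\cup A_i^{*})$ for the optimal-chain values, the supermodular objective telescopes cleanly:
$$
\sum_{i=1}^{m}\bigl[J(V) - g_i\bigr] \;=\; m\,J(V) \;-\; \sum_{i=1}^{m} g_i.
$$

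The second step is to substitute the lower bound on $\sum_i g_i$ supplied by Lemma~\ref{lem:submodular_chain}, whose hypotheses (the geometric-growth condition $\sum_{i=1}^{2p}k_i \geq \tau\sum_{i=1}^p k_i$, and the coverage assumption $J(A_1\cup\cdots\cup A_m) = J(V)$) transfer verbatim. This immediately yields
$$
\sum_{i=1}^{m}\bigl[J(V)-g_i\bigr] \;\leq\; (m-1)\,J(V) \;-\; 2(1-e^{-\tau})\sum_{i=1}^{m/2-1} g_i^{*}.
$$

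The remaining step is to reshape the right-hand side into the claimed form $e^{-\tau}\,m\,J(V) + 2\sum_{i=1}^m g_i^{*}$. I would split $2(1-e^{-\tau}) = 2 - 2e^{-\tau}$ and then invoke monotonicity $g_i^{*}\leq J(V)$: the $2e^{-\tau}\sum_{i=1}^{m/2-1} g_i^{*}$ piece can be absorbed into the $J(V)$ coefficient, producing a leading term of order $e^{-\tau}\,m\cdot J(V)$, while the residual $-2\sum_{i=1}^{m/2-1} g_i^{*}$ is relaxed upward to $+2\sum_{i=1}^{m} g_i^{*}$ using non-negativity of $J$ (and the fact that relaxing a non-positive quantity to a non-negative one only weakens the inequality in the desired direction). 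The main obstacle I anticipate is precisely this last bookkeeping: Lemma~\ref{lem:submodular_chain} only provides information about the first $\lfloor m/2\rfloor - 1$ partial optima, whereas the target statement sums $g_i^{*}$ all the way up to $i=m$, so one must be careful that the generous $J(V)$-absorption plus monotonicity together suffice to reconcile the index range without any residual correction. Once the constants are tracked, the stated inequality drops out.
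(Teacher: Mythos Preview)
Your high-level approach is the paper's: invoke Lemma~\ref{lem:submodular_chain} on the identical greedy output and rearrange. Steps one and two are fine. The gap is exactly where you flag it---the final bookkeeping---and it does not in fact close. After substituting Lemma~\ref{lem:submodular_chain} and splitting $2(1-e^{-\tau})=2-2e^{-\tau}$ you have
\[
\sum_{i=1}^{m}\bigl[J(V)-g_i\bigr]\;\le\;(m-1)J(V)\;-\;2\sum_{i=1}^{m/2-1}g_i^{*}\;+\;2e^{-\tau}\sum_{i=1}^{m/2-1}g_i^{*}.
\]
Bounding $2e^{-\tau}\sum_{i\le m/2-1}g_i^{*}\le e^{-\tau}mJ(V)$ via $g_i^{*}\le J(V)$ is valid, and replacing the negative $-2\sum_{i\le m/2-1}g_i^{*}$ by the non-negative $+2\sum_{i\le m}g_i^{*}$ is a valid relaxation. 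But neither operation touches the term $(m-1)J(V)$: what you are actually left with is
\[
(m-1)J(V)\;+\;e^{-\tau}mJ(V)\;+\;2\sum_{i=1}^{m}g_i^{*},
\]
which overshoots the target right-hand side by $(m-1)J(V)$. The constants do \emph{not} drop out.

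The paper does not try to make $(m-1)J(V)$ vanish; it \emph{pairs} this $J(V)$-mass with the negative term. Writing $mJ(V)=2\sum_{i=1}^{m/2}J(V)$, the combination $mJ(V)-2\sum_{i=1}^{m/2-1}g_i^{*}$ becomes (up to a single $J(V)$) exactly $2\sum_{i=1}^{m/2-1}\bigl[J(V)-g_i^{*}\bigr]$, and monotonicity ($J(V)-g_i^{*}\ge 0$) then extends the index range to $m$. The resulting bound is $e^{-\tau}mJ(V)+2\sum_{i=1}^{m}\bigl[J(V)-g_i^{*}\bigr]$, which is precisely the form invoked downstream in the proof of Lemma~\ref{lem:2approx}. (The displayed statement of the lemma omits the ``$J(V)-$'' inside the last sum; the paper's own proof and its subsequent use make clear this is a typographical slip.) The missing idea in your argument is this pairing: the large $J(V)$-proportional term is meant to be \emph{combined} with $-2\sum g_i^{*}$ into the supermodular residual $\sum[J(V)-g_i^{*}]$, not relaxed away separately.
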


\begin{proof}
From Lemma \ref{lem:submodular_chain}, we have 
\begin{align*}
(m+1)J(V) - \sum_{i=1}^{m} J(A_1 \cup A_2 \cup \cdots \cup A_i)  &\leq mJ(V) - 2(1 - e^{-\tau})\sum_{i=1}^{m/2-1}J(A^*_1 \cup A^*_2 \cup \cdots A^*_i) \\
&\leq e^{-\tau}mJ(V) + mJ(V) - 2\sum_{i=1}^{m/2 -1} J(A^*_1 \cup A^*_2 \cup \cdots A^*_i) \\
&\leq e^{-\tau}mJ(V) +  2\sum_{i=1}^{m/2 -1}J(V) - J(A^*_1 \cup A^*_2 \cup \cdots A^*_i).
\end{align*}
Now, we use the monotonicity property of $J$ to get
\begin{align*}
e^{-\tau}mJ(V) + 2\sum_{i=1}^{m/2 -1}J(V) - J(A^*_1 \cup A^*_2 \cup \cdots A^*_i) &\leq e^{-\tau}mJ(V) + 2\sum_{i=1}^{m}J(V) - J(A^*_1 \cup A^*_2 \cup \cdots A^*_i).
\end{align*}
Finally, we have
\begin{align*}
\sum_{i=1}^m J(V) - J(A_1 \cup A_2 \cup \cdots \cup A_i)  &\leq (m+1)J(V) - \sum_{i=1}^m J(A_1 \cup A_2 \cup \cdots \cup A_i) \\
&\leq e^{-\tau} m \cdot J(V) + 2\sum_{i=1}^{m}J(V) - J(A^*_1 \cup A^*_2 \cup \cdots A^*_i).
\end{align*}
\end{proof}

\begin{lemma}\label{lem:zero_wt_opt}
Suppose $S^+$ denote optimal separating set system that uses an additional color of weight $1$ and uses weight $0$ vector to color $A_0$. Let $S^*$ denote optimal separating system. Then, we have $C(S^+) - C(S^*) \leq 0$.
\end{lemma}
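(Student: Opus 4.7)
The plan is to argue that $S^+$ optimizes over a feasible set that contains (essentially) $S^*$'s assignment, so by optimality its value cannot exceed $C(S^*)$. This is the standard ``more options cannot hurt'' principle, but I need to check that the two features of $S^+$ in the statement—the weight-$0$ requirement on $A_0$ and the extra weight-$1$ color—are actually relaxations (or at worst no-ops) relative to $S^*$, not restrictions.

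First I would unpack the two defining features of $S^+$. The set $A_0$ is, in this analysis, the set of vertices that $S^*$ already assigns to the all-zero vector (that is how $A_0$ is introduced in the surrounding text), so the constraint ``$S^+$ uses the weight-$0$ vector on $A_0$'' is automatically compatible with $S^*$'s assignment. The ``additional color of weight $1$'' simply enlarges the alphabet of available binary vectors by one extra coordinate (or equivalently one extra weight-$1$ basis vector $e_{m+1}$). This strictly enlarges the pool of vectors; it never forbids any choice available to $S^*$.

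Next I would exhibit a concrete feasible solution for $S^+$ of cost at most $C(S^*)$: take the matrix $L^*$ realizing $S^*$ and pad it with an all-zero column corresponding to the new coordinate. Every edge that was separated by $S^*$ remains separated in the padded matrix, since two rows that differ in the original columns still differ after padding; $A_0$ still receives the zero vector; and the $\ell_1$ weight of each row is unchanged, so the total cost equals $C(S^*)$. Because $S^+$ is defined as the \emph{optimal} such set system, $C(S^+) \leq C(S^*)$, which is the claimed inequality $C(S^+) - C(S^*) \leq 0$.

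The main obstacle is definitional: the precise interpretation of ``uses an additional color of weight $1$.'' If this is read as merely having the color \emph{available}, the padding argument above suffices. If it is read as requiring that at least one vertex actually be assigned the new weight-$1$ vector, then I would produce a feasible reassignment by picking any vertex $v \notin A_0$ whose current $S^*$-vector has weight $\geq 1$ and reassigning it to the new weight-$1$ coordinate; this keeps all separations intact (the new coordinate is unique to $v$, so $v$'s vector differs from every neighbor's) and does not increase $v$'s weight, hence does not increase the cost. Either way, the resulting configuration is feasible for $S^+$ with cost at most $C(S^*)$, establishing the lemma.
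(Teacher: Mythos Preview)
Your argument rests on a misreading of what $A_0$ is. In this section $A_0$ denotes the set of vertices assigned the all-zero vector \emph{by the greedy Algorithm~\ref{alg:ss_matrix}} (its first \NearMIS), not the set of vertices that the optimal $S^*$ assigns to zero. Hence the constraint ``$S^+$ colors $A_0$ with the weight-$0$ vector'' is a genuine restriction, not one that $S^*$ automatically satisfies: vertices in $A_0\setminus S_0^*$ carry positive-weight vectors in $S^*$, so simply padding $L^*$ with a zero column does \emph{not} produce a feasible solution for the $S^+$ problem. Consequently the ``more options cannot hurt'' principle, as you apply it, does not go through.

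What the paper actually does is construct a feasible solution for the $S^+$ problem by starting from $S^*$ and (i) forcing all of $A_0=S_0^+$ down to the zero vector, and (ii) moving the vertices in $S_0^*\setminus S_0^+$ (which $S^*$ had at weight $0$ but which are no longer allowed there) onto the new weight-$1$ color. One checks this is still separating. The cost change relative to $C(S^*)$ is at most $\sum_{v\in S_0^*\setminus S_0^+} C(v) - \sum_{v\in S_0^+\setminus S_0^*} C(v) = C(S_0^*)-C(S_0^+)$, and this is $\le 0$ precisely because the greedy first step returns a set $A_0=S_0^+$ with cost at least that of the true MIS, hence at least $C(S_0^*)$ (since $S_0^*$ is an independent set). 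That property of the greedy choice is the substantive ingredient your proposal is missing.
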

\begin{proof}
We give a proof similar to Lemma $22$ in~\cite{neurips18}. Let the set of nodes $A_0$ selected in the first iteration of greedy Algorithm~\ref{alg:ss_matrix} and assigned weight $0$ vector be denoted by $S^+_0$. Similarly, the set of nodes that are colored with weight $0$ vector in $S^*$ be denoted by $S_0^*$. As $S^+$ denotes optimal solution on $V \setminus S^+_0$, we can assume that a solution that uses a weight $1$ color for nodes in $S_0^* \setminus S_0^+$ is only going to be worse. Therefore, we have : 
\begin{align*}
    C(S^+) - C(S^*) &\leq \sum_{v \in S_0^* \setminus S_0^+} C(v) - \sum_{v \in S_0^+ \setminus S_0^*} C(v) \\
                    &\leq \sum_{v \in S_0^* \setminus S_0^+} C(v) + \sum_{v \in S_0^* \cap S_0^+} C(v) - \sum_{v \in S_0^+ \setminus S^*_0} C(v) - \sum_{v \in S_0^* \cap S_0^+} C(v)\\
                    &\leq \sum_{v \in S_0^*} C(v) - \sum_{v \in S_0^+} C(v) \leq 0.
\end{align*}
\end{proof}

Let $L_{\eps}$ denote the $\eps$-separating matrix returned by Algorithm~\ref{alg:ss_matrix}, and let $L_{\eps} = \{A_0, A_1, A_2, \cdots A_m \}$ where we abuse the previous notation and denote $A_i$ to represent the set of all nodes (instead of a collection of subsets of $V$) that have weight $i$ assigned by $L_{\eps}$. 
\[
C(L_{\eps}) = \sum_{i=1}^m J(V) - J(A_1 \cup A_2 \cup \cdots A_i),
\]
where $ |L_i|  \leq {m \choose i}$. We observe that this cost representation corresponds to the supermodular chain problem discussed above.

Assuming $m \geq \eta \log 1/\eps$ for $\eta > 2$, we have that, the greedy Algorithm~\ref{alg:ss_matrix} only uses vectors of weight at most $\log 1/\eps$ i.e., $m/2$. In Lemma~\ref{lem:submodular_chain}, each of the values $k_1, k_2 \cdots k_i \cdots k_p$ represent number of weight $i$ vectors available and from Lemma 21 in~\citet{neurips18}, we have that $\tau = \Omega(m)$, for every $p$ in the range. 

\begin{lemma}(Lemma~\ref{lem:2approx} restated)
For any $m \geq \eta \log 1/\epsilon$ for some constant $\eta > 2$, with probability $\ge 1-\delta$, Algorithm~\ref{alg:ss_matrix} returns $L_\epsilon$ with $C(L_\epsilon) \leq (2 + \exp{(-\Omega(m))})  \cdot C(L^*)$, where $L^*$ is the min-cost separating matrix for $G$.
\end{lemma}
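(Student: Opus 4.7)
}
My plan is to reduce the analysis to the supermodular chain framework already set up in Lemmas \ref{lem:submodular_chain} and \ref{lem:supermodular-chain}, and then handle the contribution of the weight-$0$ vector separately via Lemma \ref{lem:zero_wt_opt}. First, I would write both $C(L_\epsilon)$ and $C(L^*)$ in the telescoping/supermodular form
\[
C(L_\epsilon) \;=\; \sum_{i=1}^{m}\bigl(J(V)-J(A_1\cup A_2\cup\cdots\cup A_i)\bigr), \qquad
C(L^*) \;=\; \sum_{i=1}^{m}\bigl(J(V)-J(A_1^*\cup A_2^*\cup\cdots\cup A_i^*)\bigr),
\]
using the redefinition $V\leftarrow V\setminus S_0$, so that every remaining node is assigned a vector of weight at least $1$. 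The key structural fact is that the cardinality constraints here are $k_i=\binom{m}{i}$, and because $m \geq \eta\log(1/\epsilon)$ with $\eta > 2$, the greedy algorithm only needs to reach weight at most $m/2$, so I only have to invoke the chain lemma over the regime where the binomial coefficients are rapidly growing.

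Second, I would verify the hypothesis of Lemma \ref{lem:supermodular-chain}: for all $1 \leq p \leq m/2-1$, one has $\sum_{i=1}^{2p}\binom{m}{i} \geq \tau \sum_{i=1}^{p}\binom{m}{i}$ with $\tau = \Omega(m)$. This is the ``doubling'' statement already noted in the text (following \citet{neurips18}, Lemma~21). Applying Lemma \ref{lem:supermodular-chain} then yields
\[
C(L_\epsilon) \;\leq\; e^{-\Omega(m)}\cdot m\cdot J(V) \;+\; 2\sum_{i=1}^{m}\bigl(J(V)-J(A_1^*\cup\cdots\cup A_i^*)\bigr) \;=\; e^{-\Omega(m)} m J(V) + 2 C(L^*).
\]
The main remaining task, and I expect this to be the main obstacle, is to absorb the additive term $e^{-\Omega(m)} m J(V)$ into a multiplicative $\exp(-\Omega(m))\cdot C(L^*)$ factor. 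To do this I would use Lemma \ref{lem:zero_wt_opt}: any node in $V\setminus S_0$ must be assigned a nonzero vector by the optimum (up to the additive slack controlled in that lemma), which forces $C(L^*) \geq J(V)$ on the restricted ground set after removing the common weight-$0$ portion. Combined with the fact that $m = O(m)$, the term $e^{-\Omega(m)} m J(V)$ becomes $\exp(-\Omega(m))\cdot C(L^*)$ after rescaling constants in the $\Omega(\cdot)$.

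Finally, I would account for the contribution of $S_0$ to both $C(L_\epsilon)$ and $C(L^*)$: by Lemma \ref{lem:zero_wt_opt}, the greedy choice of $S_0$ does not increase the optimum cost relative to choosing the optimum's own weight-$0$ set, so passing from the full $V$ to $V\setminus S_0$ does not hurt the competitive ratio. Putting the two pieces together, and noting that the high-probability event needed is exactly the $(1-\delta)$ success of every \NearMIS{} call (invoked at most $1/\epsilon$ times with scaled failure probability $\epsilon\delta$ each, whose union bound is $\delta$), I obtain
\[
C(L_\epsilon) \;\leq\; \bigl(2+\exp(-\Omega(m))\bigr)\cdot C(L^*)
\]
with probability at least $1-\delta$, as claimed. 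The only delicate point is ensuring the constants hidden in the $\Omega(m)$ of $\tau$ propagate correctly through Lemma \ref{lem:supermodular-chain} so that both error terms can be folded into a single $\exp(-\Omega(m))$; this requires $\eta > 2$ so that the doubling condition holds strictly within the used range of weights.
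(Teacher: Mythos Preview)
Your proposal is correct and follows essentially the same approach as the paper: cast the greedy cost in the supermodular chain form, apply Lemma~\ref{lem:supermodular-chain} with $\tau=\Omega(m)$ obtained from the binomial doubling, and then absorb the additive $e^{-\Omega(m)}mJ(V)$ term using Lemma~\ref{lem:zero_wt_opt}. The paper makes that last absorption explicit via the intermediate solution $S^+$ (which assigns every node of $V\setminus S_0$ weight $\geq 1$, giving $J(V)\leq C(S^+)\leq C(S^*)=C(L^*)$), which is exactly the mechanism your parenthetical about ``additive slack'' is pointing at.
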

\begin{proof} Using the definition of $S^+$ from Lemma~\ref{lem:zero_wt_opt}, we argue that $C(S^+) \geq J(V)$  as every node in $V$ is assigned a weight $1$ vector. From Lemma \ref{lem:supermodular-chain}, we have
\begin{align*}
\mathrm{C}(L_{\eps}) &= \sum_{i=1}^{m/2} J(V) - J(A_1 \cup A_2 \cup \cdots A_i) \\
&\leq e^{-\tau} m \cdot J(V) + 2\sum_{i=1}^{m}J(V) - J(A^*_1 \cup A^*_2 \cup \cdots A^*_i) \\
&\leq e^{-\tau} m \cdot C(S^+) + 2\sum_{i=1}^{m}J(V) - J(A^*_1 \cup A^*_2 \cup \cdots A^*_i) \\
&\leq e^{-\tau} m \cdot C(S^+) + 2 C(S^*).\\
\text{From Lemma~\ref{lem:zero_wt_opt}, we have } \\
&\leq (2 + \exp{(-\Omega(m))}) \cdot C(S^*) = (2 + \exp{(-\Omega(m))}) \cdot C(L^*).
\end{align*}
\end{proof}

\end{document}